\journal{arXiv}
 \renewcommand{\textcolor}[2]{#2}
\newtheorem{theorem}{Theorem}
\newtheorem{proposition}[theorem]{Proposition}
\begin{document}

\begin{frontmatter}



\title{Rectangle Blanket Problem: Binary integer linear programming formulation and solution algorithms}

\author[CompEng]{Bar\i\c s Evrim Demir\" oz}
\ead{baris.evrim.demiroz@gmail.com}

\author[IndEng]{\.{I}. Kuban Alt{\i}nel}
\ead{altinel@boun.edu.tr}

\author[CompEng]{Lale Akarun}
\ead{akarun@boun.edu.tr}

\address[CompEng]{Department of Computer Engineering, Bo\u{g}azi\c{c}i University, 34342, Bebek, \.{I}stanbul, Turkey}
\address[IndEng]{Department of Industrial Engineering, Bo\u{g}azi\c{c}i University, 34342, Bebek, \.{I}stanbul, Turkey}


\begin{abstract}
A \textit{rectangle blanket} is a set of non-overlapping axis-aligned rectangles,  used to \textcolor{red}{approximately}  represent  the two-dimensional image of a shape approximately.  The use of a rectangle blanket is a widely considered strategy for speeding-up the computations in many computer vision applications. Since neither the rectangles nor the image have to be fully covered by the other, the blanket becomes more precise as the non-overlapping area of the image and the blanket decreases. In this work, we focus on the \textit{rectangle blanket problem}, which involves the determination of an optimum blanket minimizing \textcolor{red}{the} non-overlapping area with a given image subject to an upper bound on the total number of rectangles the blanket can include. This problem has similarities with rectangle covering, rectangle partitioning and cutting / packing problems. 
\textcolor{red}{The image replaces an irregular master object by an approximating set of smaller axis-aligned rectangles. The union of these rectangles, namely, the rectangle blanket, is neither restricted to remain entirely within the master object, nor required to cover the master object completely. } We first develop a binary integer linear programming formulation of the problem. Then, we introduce four methods for its solution. The first one is a branch-and-price algorithm that computes an exact optimal solution. The second one is a new constrained simulated annealing heuristic. The last two are heuristics adopting ideas available in the literature for other computer vision related problems. Finally, we realize extensive computational tests and report results on the performances of these algorithms.
\end{abstract}

\begin{keyword}
\texttt{Integer programming; branch-and-price; computer vision; cutting / packing; heuristics}
\end{keyword}

\end{frontmatter}


\section{Introduction}\label{intro}
The problem of  \textcolor{red}{approximately} representing a two-dimensional image approximately using multiple non-overlapping axis-aligned rectangles arises in many computer vision problems such as template matching \citep{MohrZachmann2010a,MohrZachmann2010b} and people tracking \citep{FleuretBerclazLengagneFua2008,DemirozSalahAkarun2014}. A rectangle is axis-aligned if its adjacent (orthogonal) edges are parallel to $x$ and $y$ axes.  An illustration for an approximation of the image with three rectangles is provided in Figure~\ref{fig:example}. Here, the rectangles determine a \textit{rectangle blanket}. We define the blanket as a set of non-overlapping axis-aligned rectangles. It should be noticed that a blanket does not have to cover the image perfectly. There can be uncovered parts of the image as well as uncovering parts of the blanket. Besides, it can be connected or disconnected depending on the image it approximates. Nevertheless, it is possible to say that the number of rectangles forming a blanket directly effects the quality of the approximation: the higher it is, the finer the approximation becomes. 

\begin{figure}[!h]
	\begin{center}
   		\includegraphics[width=0.25\linewidth]{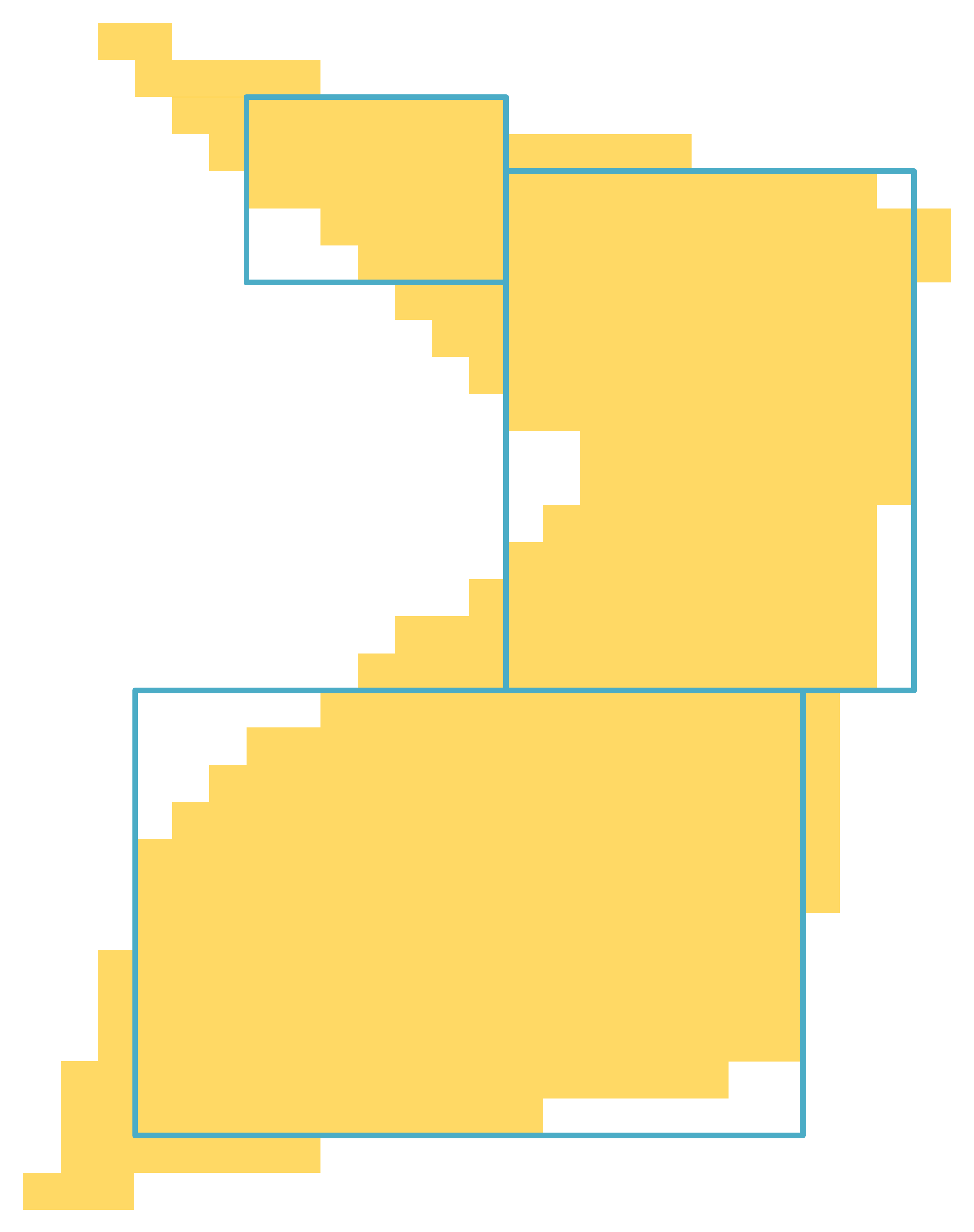}
   		\caption{An example of a rectangle blanket having $K=3$ rectangles}\label{fig:example}
   	\end{center}
\end{figure}

We formally define the \textit{Rectangle Blanket Problem} (RBP) as determining a rectangle set that contains no more than $K$ non-overlapping axis-aligned rectangles that minimizes the non-overlapping area with the given two-dimensional image (i.e. the uncovered area of the image and uncovering area of the blanket). This problem has similarities with \emph{rectangle covering}~\citep{ChaikenKleitmanSaksShearer1981,Heinrich-LitanLubbecke2006, StoyanRomanovaScheithauerKrivulya2011}, \emph{rectangle partitioning}~\citep{Ohtsuki1982,ORourkeTewari2001} and  \emph{cutting / packing}~\citep{DyckhoffScheithauerTerno1997,WascherHaussnerSchumann2007} problems, which we try to outline briefly in the following lines. It is not our aim to provide a comprehensive review of the similarities and dissimilarites between all the works in these three fertile research areas and this one, but to use a number of  examples to illustrate potentials for cross-fertilization of ideas and methodologies among RBP and the problems they study. 

Rectangle covering and partitioning problems, and cutting / packing problems have attracted considerable research interest in the last three decades and sophisticated solution approaches have been proposed. As a consequence of the inherent difficulty of the problems \citep{FowlerPatersonTanimoto1981, HochbaumMaass1985, HaesslerSweeney1991, CulbersonReckhow1994} these are mostly heuristic methods. We direct interested readers to \citep{DanielsInkulu2001, BaldacciBoschettiGanovelliManiezzo2014} for didactic overviews. 
\textcolor{red}{Despite the emergence of heuristics with considerably better quality of solutions, the research interest in methods based on mathematical formulation, has not waned. Furthermore, the availability of increased computational power made available by advances in hardware technology has accelerated the search for exact solution methods.}

\subsection{Cutting and packing problems}
RBP and cutting / packing problems have a similar structure. First of all, two sets of elements, namely, a master object (i.e. input, supply, image) and a set of small items (i.e. output, demand, nonidentical rectangles) are defined. Second,  small items are selected and grouped  into a subset (i.e. a rectangle blanket) which is assigned to a large object such that the geometric condition (i.e. rectangles are non-overlapping and axis-aligned, but not necessarily lying within the image) holds, and a given objective function is optimized. As a result, it is possible to say that RBP is somewhat related to cutting / packing problems coded as 2/B/O/R or 2/B/O/M in particular, according to Dyckhoff's typology \citep{Dyckhoff1990}, which is improved later on by \cite{WascherHaussnerSchumann2007} in order to clear away the ambiguities. RBP can be seen as an output maximization problem, and within this category of problems, it can be treated as a close relative of the two-dimensional single large object placement problem (SLOPP) according to Wacher et al.'s improved typology. However, although small items are rectangular, \textcolor{red}{for two reasons}, it is not possible to say that RBP exactly belongs to the family of the two-dimensional rectangular SLOPP with explicit upper bounds on the number of times a small item of particular type can be cut from  the large object \citep{ChristofidesWhitlock1977, Wang1983, Beasley1985b,ChristofidesHadjiconstantinou1995,Beasley2004}. First of all, in RBP, the upper bound is aggregated and the total number of rectangles is bounded. Besides, the master object, namely the image, has an irregular shape rather than rectangular, which pushes RBP towards irregular cutting / packing problems \citep{DownslandDownsland1995,BaldacciBoschettiGanovelliManiezzo2014,CherriMundimAndrettaToledoOliveiraCarravilla2016}. 

Mathematical programming formulations \textcolor{red}{of cutting and packing problems} are mixed-integer linear programs (MILPs) in general and their complexity depends on the geometric methodologies used to model constraints related to the type of interactions between the small items and their layout on the master object. The core geometric methodologies \textcolor{red}{available in the literature} are very well explained in the tutorial by \cite{BennellOliveira2008}. There are four of them; but they can be grouped into two: the pixel / raster method and polygonal methods. The pixel / raster method divides the surface of the master object into discrete regions in order to represent it as a grid coded in the form of a matrix. One drawback of this representation is the inability to represent irregular shapes accurately. However, this is not the case when the D-function is used; it enables benefitting from the well known tests for line intersection and point intersection of direct trigonometry. Unfortunately the inefficiency in checking the feasibility is its weakness; it takes exponential time in the number of edges of the items, which is quadratic in the grid size for the pixel / raster method. The nofit polygon (NFP) is a polygonal construct that offers higher efficiency than direct trigonometry, and more accuracy, since it uses the original edges. The nofit polygon of two items $i$ and $j$ (NFP$_{ij}$), is the locus of all points where the reference point of item $j$ cannot be placed without overlapping item $i$. Unfortunately, calculating NFP$_{ij}$ is still a non-trivial task and can be very time consuming. Finally, the $\Phi$-function is the most recent polygonal methodology invented to represent all mutual positions of two polygons \citep{StoyanTernoScheithauerGilRomanova2001, StoyanScheithauerGilRomanova2004}. The major problem is the determination of the expression of a suitable $\Phi$-function.

The proposed MILP formulations are not \textcolor{red}{numerous}. Some of them are compaction models hybridized with meta-heuristics. Compaction is known to be a difficult problem as well \citep{LiMilenkovic1993}. Compaction models can improve layouts by moving small items continuously on the master object without overlapping, by changing the relative positions of the pairs. A good example of this type of work is due to \cite{GomesOliveira2006}. Their hybrid algorithm solves the Linear Programming (LP) relaxation of their MILP compaction formulation under the guidance of a simulated annealing search. The formulation uses NFP to model constraints preventing overlapping. \textcolor{red}{The main drawback of this formulation is that its definition of non-overlapping constraints does not limit the relative positions of the small items strictly enough so that many different branches of a branch-and-bound (BB) tree can contain the same solution, when this formulation is used in a BB algorithm. }
Clearly, this can increase the inefficiency of the search. \cite{FischettiLuzzi2009} develop another MILP formulation. \textcolor{red}{They use a different formulation of NFP than the one used by Gomes and Oliveira (2006). }
Their approach takes advantage of the earlier effort of \cite{DanielsMilenkovicLi1994} and \cite{Li1994}. It is based on slicing, i.e. partitioning into convex disjoint areas, of the region outside the NFP, which corresponds to the region the second small item can be placed onto without overlapping the first one, for every pair of small items. However, they do not specify the way in which the slices are defined. \cite{Alvarez-ValdesMartinezTamarit2013} use their slices more specifically and define binary variables for each region in which the reference point of a small item can be placed with respect to another one without overlapping. Using variables associated with slices resolves the inefficiency problem of Gomes and Oliveira's BB algorithm \citep{GomesOliveira2006}. They define slices horizontally, which helps to control relative vertical position of the small items while developing two new MILP formulations and BB algorithms for their solutions. Besides, \cite{Alvarez-ValdesMartinezTamarit2013} prefer a set partitioning constraint to describe the exterior of the NFP as  \cite{FischettiLuzzi2009}, which is done using a set covering constraint by \cite{GomesOliveira2006}. \textcolor{red}{Algorithms to build the NFP are time consuming, complex and numerically unstable. This limits real world applications of these models.} \cite{CherriMundimAndrettaToledoOliveiraCarravilla2016} propose two directions to overcome these limitations and built robust mathematical optimization models. The first one derives non-overlapping constraints based on direct trigonometry without using NFP. The second one decomposes small objects into convex pieces prior to the computation of the NFPs. Notice that both of the approaches are polygonal, contrasting the preference of  \cite{BaldacciBoschettiGanovelliManiezzo2014}; their MILP formulations for nesting with defects is based on the pixel / raster approach with the simplest coding scheme \citep{OliveiraFerreira1993}.

RBP is not the first close encounter of computer vision with cutting / packing. The NFP is related to Minkowski or vector sums. Any two-dimensional region can be considered as a set of vectors, and the Minkowski sum of two regions is the region obtained by summing all pairs of vectors. This relation was first pointed out by \cite{StoyanPonomarenko1977} and used extensively by \cite{MilenkovicDanielsLi1992,LiMilenkovic1993} in order to determine the constraints in their compaction processes. Minkowski sum is in fact a particular case of a more general concept which is known as mathematical morphology \citep{Shih2017}. It is widely used in computer vision and image processing and motivates some interesting applications for the cutting / packing of highly irregular items in an irregular master object \citep{WhelanBachelor1991, WhelanBachelor1996, BouganisShanahan2006}.

\subsection{Covering and partitioning problems}
The rectangle covering problem mainly deals with the covering of a compact polygonal target region with a finite family of small rectangles. There can be different \textcolor{red}{objectives}. For instance, one can search for a minimum number of rectangles needed to cover the target region \citep{Heinrich-LitanLubbecke2006}, or in case there exists several covers, one can look for the best one with respect to some objective \citep{StoyanRomanovaScheithauerKrivulya2011}.  The rectangle partitioning problem has similar features; but  the  \textcolor{red}{aim} is  to obtain a partition or dissection of the target region \citep{Ohtsuki1982}, this time. 

As can be noticed there are many relations between covering and cutting / packing problems. One example is the use of $\Phi$-function as a geometry modeling tool \citep{StoyanRomanovaScheithauerKrivulya2011}. In the same work the authors exploit the particular nature of the covering problem and discuss the use of an extension of the $\Phi$-function called $\Gamma$-function \citep{Stoyan2007}, which they apply for determining whether a given set of rectangles, with respect to their configuration, form a cover of the target region.

The rectangle covering problem is hard and exact solution methods have an enumerative nature. Some of them use the above mentioned $\Phi$ and $\Gamma$ functions such as the one developed by \cite{StoyanRomanovaScheithauerKrivulya2011} for which the choice of a suitable initial configuration is particularly important. Another possibility is to try to solve the problem after formulating it as an MILP. →\textcolor{red}{The major difference is the definition of the target object: in our case, it consists of finite number of rasterized points, in contrast with the branch-and-bound algorithm of Stoyan et al. (2011), where the target region includes infinite number of points.}

MILP models of the rectangle covering problem have both advantages and disadvantages as shown by \citep{ScheithauerStoyanRomanova2009} based on the new formulations they propose. They adopt Beasley's approach \citep{Beasley1985a,Beasley1985b} to formulate the first one and use a three indexed binary variable to describe the placement of the reference point, which is the lower left corner, of a rectangle at a position. This results in a binary integer programming problem (BIP) with very large number of variables and weak LP relaxation lower bound, which makes it difficult to solve exactly by a BB algorithm using an LP relaxation based lower bounding scheme. They also propose a second formulation under the assumption that the polygonal target region is convex. They do not use a pixel / raster method to represent the target region and define relative  position variables for modeling the interactions between the rectangles, which lowers considerably the number of variables and constraints since they are independent of the size of the target region, which is not the case with the first formulation.

\subsection{Rectangle blanket problem}
Nesting problems are two-dimensional cutting / packing problems involving irregular shapes. They can be roughly defined as the placement of small items in a configuration in a master object subject to possible constraints related to defective areas or areas with different quality of the master object and their compatibility with the small items to cut, in order to optimize an objective. In RBP there is one master object with an irregular shape. It is the image to be assigned a blanket consisting of non-overlapping axis aligned small rectangles not necessarily packed within the image, which is the main difference between RBP and irregular nesting. Besides, the objective of RBP is different. It consists of the minimization of the sum of the uncovered area of the master object, which can be treated as the total waste, and the excess of the small rectangles lying outside the image, which is the uncovering area of the blanket. However, the typical objective in nesting is to minimize the waste (e.g. \cite{DownslandDownsland1995}). In the case of nesting with defects, which occurs when the master object has defect zones, each small item has a quality value depending on its configuration in the master object and the objective becomes the maximization of the total quality value of the cutting patterns of the small items \citep{BaldacciBoschettiGanovelliManiezzo2014}.

In rectangle covering and partitioning there is one master object, i.e. target region, as RBP. The main difference is again in the objective function, i.e. the relation between the set of small items and the master object. RBP is a slightly more relaxed version of both problems. As can be observed in Figure~\ref{fig:example}, neither the shape nor the rectangles need to be fully contained within the other. In other words a rectangle blanket is neither a cover nor a partition; it partly covers and partly dissects a polygonal region, namely the target image. \textcolor{red}{Consequently, it is possible to see it as a relative of \textit{approximate (or incomplete) rectangle partitioning} or \textit{exclusive maximal rectangle covering problems}. The maximal covering problem \citep{ChurchReVelle1974, Murray2016}, which is often referred in facility location theory is quite similar except for the exclusive (or partitioning) constraints. It basically asks to maximize the covered area while using a given number of primitive shapes.} \cite{Chan2014} tackle a similar problem in the context of integrated circuit manufacturing where the rectangles are allowed to overlap. After transforming the binary input image appropriately, RBP can also be considered as a generalization of Bentley's classic \emph{maximum sum subsequence problem} to two dimensions for multiple subsequences~\citep{Bentley1984}. In fact, ~\cite{Csuros2004} have previously proposed a method to find a set of $K$ disjoint subsequences of a one dimensional array such that the sum of all elements in the set is maximized. Later, ~\cite{Bengtsson2006,Bengtsson2007} improved the efficiency of his method to have linear time complexity.

There are three coding schemes in pixel / raster method: the first one is proposed by \cite{SegenreichBraga1986}, the second one by \cite{OliveiraFerreira1993}, and the third one by \cite{BabuBabu2001}. The scheme by \cite{OliveiraFerreira1993} is the simplest and mostly preferred  in  MILP formulations; it uses 1 to code the item and 0 to represent the empty space. We formulate RBP as a BIP based on this scheme.  \textcolor{red}{This formulation resembles} Beasley's formulation \citep{Beasley1985a,Beasley1985b} at the first look. The master object, i.e. target image, is represented by the binary coefficient matrix, which is given as the part of the problem data  as will be seen in the next section. \textcolor{red}{In addition,} the binary variables are single indexed and represent whether a rectangle is selected or not in the packing constraints that allows the points of the master surface to be covered by at most one rectangle.

\subsection{Our contributions}
As we have tried to expose in detail RBP is a new problem having relations with cutting / packing, rectangle covering and partitioning problems. Following its definition, first  we introduce a binary integer linear programming (BIP) formulation, which can be classified as a set packing formulation \citep{ConfortiCornuejolsZambelli2014} extended with an additional cardinality restriction on the total number of rectangles forming the blanket. We benefit from the geometry and pixel / raster representation of the a computer image  while modeling the coefficients of the objective function and constraint matrix. Then, we develop a new branch-and-price (BP) algorithm, which we implement using two branching rules. The first one is the well-known variable branching of integer programming. The second one extends the rule by \cite{RyanFoster1981}  \textcolor{red}{for set packing}, which is originally proposed for set partitioning problems, for set packing. Our third contribution is the geometric bisection scheme that solves the pricing subproblems efficiently. The algorithmic development considers also tailing-off effect and technics for its prevention such as the the use of Lagrangean lower bounds and dual smoothing. BP algorithm can be computationally expensive for large instances. To overcome this size limitation, we suggest three heuristics. The first one is a new simulated annealing heuristic; and it is our forth contribution. The other two, adopt ideas available in the literature for other computer vision related problems. In addition, we perform extensive computational tests for assessing the performance of the algorithms.

The paper consists of six sections. The BIP formulation and BP algorithm can be found in the next two sections. The heuristics are explained in Section~\ref{sec:heuristics}. Section~\ref{sec:experiments}  essentially  reports the computational results. Finally, the last section concludes the paper.

\section{Problem formulation}\label{sec:formulation}

A \textit{polyomino} is a union of unit squares, namely a square with integer coordinates and area $1$~\citep{Schrijver2003}. It is possible to represent a shape as a polyomino $\mathcal{P}$, by replacing any pixel $\mathbf{p}$ belonging to its binary image with a unit square.

Given a polyomino $\mathcal{P}$, it is possible to construct a graph $G=(V(G),E(G))$ with vertices as all pixels contained in $\mathcal{P}$. Two vertices are the endpoints of an edge if and only if $\mathcal{P}$ contains an axis-aligned rectangle containing both pixels. $G$ is called the \textit{visibility graph} of the polyomino $\mathcal{P}$~\citep{Maire1994, MotwaniRaghunathanSaran1989, MotwaniRaghunathanSaran1990} and has interesting properties. Gy\"ori~\citep{Gyori1984} has shown that $\alpha(G)$, the minimum number of cliques that cover $G$, is equal to $\chi(G)$, the maximum number of independent vertices (i.e. the size of the maximum stable set), if each vertical or horizontal line has a convex intersection with $\mathcal{P}$, namely if $\mathcal{P}$ is \textit{orthogonally convex}. This is a consequence of the fact that $G$ is perfect for orthogonally convex images. Using the correspondence between a clique and a rectangle, and the relation $\alpha(G)\geq \chi(G)$ it is possible to obtain a lower bound on the size of a rectangle cover of $\mathcal{P}$. An intriguing question is whether there is a constant factor approximation algorithm for the rectangle cover problem~\citep{BernEppstein1997}, which is related to whether the ratio $\alpha(G) / \chi(G)$ is bounded by a constant. Although this is still an open question, numerical results~\cite{Heinrich-LitanLubbecke2006} report in their study support the fact that such an approximation exists. These results are based on the BIP formulation of a set covering problem. We also follow this line of research and propose a set packing type BIP formulation for RBP. 

A target image $\mathcal{I}$ is a union of disconnected polyominos and it is possible to associate a zero-one matrix $I\in \mathbb{B}^{W\times H}$ with the target image $\mathcal{I}$, where $W$ and $H$ are the dimensions of the target image. For example, they are the dimensions of the smallest rectangle covering the polyominos representing the target image fully. Then, for $p_1 = 1,2,\dots,W$ and $p_2 = 1,2,\dots,H$
\begin{equation}\label{equ:def-image}
I_{p_1p_2} = \left\{ \begin{array}{cl} 1 & \text{if pixel } \mathbf{p} \text{ belongs to target image } \mathcal{I}\\
0 & \text{otherwise,} \end{array}\right.
\end{equation}
is a binary matrix representing target image $\mathcal{I}$, where $\left( \begin{array}{c} p_1 \\  p_2 \end{array}\right)$ are the indices of the cell assigned to pixel $\mathbf{p}$.
We present the rectangle $r$ where the pixel values are $1$ inside the rectangle,  with the binary matrix $\mathbf{r}\in\mathbb{B}^{W\times H}$ such that 
\begin{equation}\label{equ:def-rectangle}
r_{p_1p_2} = \left\{ \begin{array}{cl} 1 & \text{if } r_\text{left} \le p_1 \le r_\text{right} \text{ and } r_\text{top} \le p_2 \le r_\text{bottom}\\
0 & \text{otherwise,} \end{array}\right.
\end{equation}
for any pixel $\mathbf{p}$ with cell coordinates $\left( \begin{array}{c} p_1 \\  p_2 \end{array}\right)$. Here, $r_\text{left},r_\text{right},r_\text{top}$ and $r_\text{bottom}$ are the coordinates of the left, right, top and bottom edges of the rectangle, respectively.  Please observe that this is exactly the pixel / raster method with \cite{OliveiraFerreira1993} coding scheme used to represent a master object, i.e. target image. It is used to represent small items on the master surface in cutting / packing \citep{BaldacciBoschettiGanovelliManiezzo2014}.  Throughout the paper we have adopted image based conventions, because the rectangle blanket problem arises from image processing and computer graphics related fields. According to this convention, y-axis of the coordinate system is oriented downwards. Therefore, $r_\text{top} \le r_\text{bottom}$ for rectangles.

Let $\mathcal{R}$ be the set of all possible rectangles in $\mathbb{B}^{W\times H}$ and has size $|\mathcal{R}| = R$. Then RBP can be formulated as the subset selection problem
\begin{equation}
\text{RBP: }\min \left\{ z(\mathcal{B}) : \mathcal{B}\subseteq \mathcal{R}, |\mathcal{B}|\leq K, \text{ and }  \mathcal{B} \text{ is a blanket}  \right\},
\end{equation}
for a given target image $\mathcal{I}$. Here a feasible subset $\mathcal{B} = \{\mathbf{r}^1,\mathbf{r}^{\mbox 2},\dots,\mathbf{r}^B\}$ of $B = |\mathcal{B}|\leq K$ \textcolor{red}{non-overlapping rectangles represents a blanket} and cost 
\begin{eqnarray}
z(\mathcal{B}) & = & \sum_{\mathbf{r}\in\mathcal{B}}\left( \sum_{p_1=1}^W\sum_{p_2=1}^H r_{p_1p_2} - \sum_{p_1=1}^W\sum_{p_2=1}^HI_{p_1p_2}\times r_{p_1p_2}\right) \nonumber \\ 
& + & \left(\sum_{p_1=1}^W\sum_{p_2=1}^HI_{p_1p_2} - \sum_{\mathbf{r}\in\mathcal{B}} \sum_{p_1=1}^W\sum_{p_2=1}^HI_{p_1p_2}\times r_{p_1p_2}\right).
\end{eqnarray}
The first term is the area (i.e. the number of pixels) of blanket $\mathcal{B}$ overflowing the target image (i.e. total uncovering area) and the second term is the area of the target image uncovered by blanket $\mathcal{B}$ (i.e. total uncovered area), since rectangles are non-overlapping. The product $I_{p_1p_2}\times r_{p_1p_2}$ is $1$ if $\mathbf{p}$ belongs to both the target image and rectangle $\mathbf{r}$; otherwise it is $0$. Besides, the first term of the second line equals to the size of the target \textcolor{red}{image}, namely $\sum_{p_1=1}^W\sum_{p_2=1}^HI_{p_1p_2} = | \mathcal{I} |$ and  the objective can be rearranged as:
\begin{equation}
z(\mathcal{B}) = | \mathcal{I} | + \sum_{\mathbf{r}\in\mathcal{B}}\left( \sum_{p_1=1}^W\sum_{p_2=1}^H r_{p_1p_2} - 2\sum_{p_1=1}^W\sum_{p_2=1}^HI_{p_1p_2}\times r_{p_1p_2}\right).
\end{equation}
Here, $ | \mathcal{I} |$ is constant and thus RBP has the equivalent form 
\begin{equation}
\text{RBP: }\min \left\{  \sum_{\mathbf{r}\in\mathcal{B}}c(\mathbf{r}) : \mathcal{B}\subseteq \mathcal{R}, |\mathcal{B}|\leq K, \text{ and }  \mathcal{B} \text{ is a blanket} \right\},
\end{equation}
with 
\begin{equation}
c(\mathbf{r}) = \sum_{p_1=1}^W\sum_{p_2=1}^H r_{p_1p_2} - 2\sum_{p_1=1}^W\sum_{p_2=1}^HI_{p_1p_2}\times r_{p_1p_2}. \label{eq7}
\end{equation}

\textcolor{red}{Before proceeding any further we would like to comment on RBP's computational difficulty. Let us first consider the decision version of a weighted variant of the set packing problem (SP), the maximum weight set packing problem (MWSP) for this purpose.}

\noindent \textcolor{red}{INSTANCE: A family $\mathcal{F}$ of finite sets, positive integer weights $c(F)$ $F\in \mathcal{F}$, and a positive integer $L$.\\
QUESTION: Does $\mathcal{F}$ contain a subset $\mathcal{B}\subseteq \mathcal{F}$ of mutually disjoint sets with total weights $\sum_{F\in \mathcal{B}} c(F) \geq L$}

\textcolor{red}{MWSP is NP-Complete since SP is a restriction with unit weights (i.e. $c(F) = 1$) $F\in \mathcal{F}$, which is shown to be NP-Complete by \cite{Karp1972}. It is also possible to state a decision version of RBP: } 

\noindent \textcolor{red}{INSTANCE: A target image $\mathcal{I}$ represented with a $W\times H$† matrix $I$ with binary entries, a family $\mathcal{F}$ of rectangles represented with binary matrices, positive integer weights $c(\mathbf{r})$ $\mathbf{r}\in \mathcal{F}$ for each rectangle, and positive integers $K$ and $L$.\\
QUESTION: Does $\mathcal{F}$ contain a subset $\mathcal{B}\subseteq \mathcal{F}$ of mutually disjoint rectangles (i.e. a blanket) with size $|\mathcal{B}|\leq K$ and total weights $\sum_{\mathbf{r}\in \mathcal{B}} c(\mathbf{r}) \geq L$}

\textcolor{red}{RBP is NP-Complete since it can be restricted to MWSP by setting $K = |\mathcal{I}|\leq W\times H$, which makes the cardinality constraint redundant, with rectangle weights calculated using formula (7). Yet, another way of reaching to the verdict that RBP is NP-hard can be by means of the planar geometric packing problem (PGP):} 

\noindent \textcolor{red}{INSTANCE: A set $\mathcal{B}$ of geometric objects, positive integer weights $c(F)$ $F\in \mathcal{B}$, a not necessarily connected region $\mathcal{R}$† in the plane.\\
QUESTION: Is it possible to determine whether the set $\mathcal{B}$ of geometric objects can be placed within region $\mathcal{R}$ in a mutually non-intersecting way?}

\textcolor{red}{\cite{FowlerPatersonTanimoto1981} have shown that PGP is NP-Complete even when the set $\mathcal{B}$ is restricted to a given number of identical squares to be placed with their sides parallel to the axes of a Cartesian coordinate system. RBP is restricted to PGP for the case  $\mathcal{B}$ is a set of $|\mathcal{B}|\leq K$ non-overlapping rectangles (i.e. a blanket) and total weight $\sum_{r\in B}c(\mathbf{r}) \geq L$. Again, the weight $c(\mathbf{r})$ is calculated according to formula (7) (or (5) for the total weight).}

As many of the combinatorial optimization problems, the above formulation can be re-expressed using binary variables and a slight change in the notation with $c(\mathbf{r}^j)$ denoting the value (cost) of rectangles $\mathbf{r}^j \in \mathcal{R}$, $j=1,2,\dots,R$, as the constrained set packing problem
\begin{eqnarray}
\text{RBP: }& \min & \sum^R_{j=1} c(\mathbf{r}^j) x_j  \label{eq3}\\
 & \text{s.t.} &  \sum^R_{j=1} x_j \leq K \label{eq4}\\
       & & \sum^R_{j=1} r_{p_1p_2}^jx_j  \leq 1  \hspace{1cm} p_1 = 1,2,\dots,W; p_2 =1,2,\dots,H \label{eq5} \\
       & & x_j \in \{0,1\}  \hspace{1.7cm} j = 1,2,\dots,R. \label{eq6}
\end{eqnarray}
Here, the decision variable $x_j$ is set to $1$ if rectangle $j$ is selected for the blanket; otherwise it is set to $0$.  Inequality~(\ref{eq4}) restricts the number of rectangles in an optimal blanket.  The given integer upper bound $K$ is actually an implicit parameter for the approximation quality: the larger it is, the better the blanket $\mathcal{B}$ approximates the target image $\mathcal{I}$.  Set packing inequalities~\eqref{eq5} allow each pixel $\mathbf{p}$ to be in at most one of the rectangles. The number of all possible configurations of $K$ rectangles is $\mathcal{O}((W\times H)^{2K})$, which can be very large depending on the size of the target image.

Mohr and Zachmann use a fitness function while assessing the quality of a given rectangle set based on the total number of uncovered and uncovering pixels. They consider RBP within the context of silhouette matching in particular, and propose a dynamic programming approach~\citep{MohrZachmann2010a} and a recursive search heuristic~\citep{MohrZachmann2010b}. \cite{DemirozSalahAkarun2014} benefit from the same fitness function for person tracking and fall detection. More general criteria are also suggested in the literature for the analytical description of relations between a given target image and the finite set of rectangles that is supposed to approximate it;  $\Phi$ function applied as a fitness criterion for object packing~\citep{BennellScheithauerStoyanRomanova2010}, 
$\Gamma$ function for polygonal region covering~\citep{StoyanRomanovaScheithauerKrivulya2011},  and a value function based on the quality indices associated with pixels for nesting with defects \citep{BaldacciBoschettiGanovelliManiezzo2014}. Unfortunately, their evaluation requires more computational effort and they do not serve our purpose better than~(\ref{eq7}) in the BIP formulation~(\ref{eq3})--(\ref{eq6}) of RBP.

\section{An exact solution method}\label{sec:BnP}

Since the number of all possible rectangles can be large depending on the size of the image matrix, column generation procedure can be applied to generate new rectangles as long as they improve the objective function. However, column generation is not directly applicable when the variables are integer. We wrap the whole column generation problem into a branch-and-bound scheme, which is also known as branch-and-price. Let RBP$^{(t)}$ denote the restricted integer programming  master problem (i.e. rectangle blanket problem) at step $t$. Then, the LP relaxation of the restricted integer programming master at step $t$ (RLPM$^{(t)}$) is the linear programming problem
\begin{eqnarray}
\text{RLPM$^{(t)}$:} &\min &\sum^{R^{(t)}}_{j=1} c(\mathbf{r}^j) x_j  \label{eq8}\\
& \text{s.t.} &  \sum^{R^{(t)}}_{j=1} x_j \leq K \hspace{6.35cm} \label{eq9}\\
& & \sum^{R^{(t)}}_{j=1} r_{p_1p_2}^jx_j  \leq 1  \hspace{0.5cm} p_1 = 1,2,\dots,W; p_2 =1,2,\dots,H \label{eq10} \\
       & & x_j \geq 0  \hspace{1.9cm} j = 1,2,\dots,R^{(t)}, \label{eq11} 
\end{eqnarray}
after relaxing binary restrictions on the variables. Here, $R^{(t)} \leq R$ is the number of columns (variables) and $\mu^{(t)}$ and $\{\pi^{(t)}_{p_1p_2}:p_1 = 1,2,\dots,W; p_2 =1,2,\dots,H\}$ are the optimal values of the dual variables  $\mu$ and $\{\pi_{p_1p_2}:p_1 = 1,2,\dots,W; p_2 =1,2,\dots,H\}$ at step $t$. Inequalities $x_j \leq 1$ are not included since they are implied by constraints~\eqref{eq10} and \eqref{eq11} . Then, the reduced cost can be expressed as

\begin{equation}
\overline{c}(\mathbf{r}^j) = c(\mathbf{r}^j) - \sum^{W}_{p_1=1} \sum^{H}_{p_2=1} r_{p_1p_2}^j \pi_{p_1p_2} - \mu\hspace{1cm} j = 1,2,\dots,R^{(t)}, \label{eq18}
\end{equation}
where the dual variables \mbox{\boldmath$\pi$} and $\mu$ are restricted to be nonpositive.

We provide a formal pseudocode of the column generation algorithm that solves linear programming relaxation of the integer programming formulation of RBP as Algorithm~\ref{alg1}. This is the most generic form and does not include improvements and implementation details such as the multiple column generation, lower bounding, and stabilization. \textcolor{red} {However, we still believe that it is worth mentioning some of them here since they can make the pseudocode easier to follow. Notice that $\mathbf{x} = \mathbf{0}$ is always a trivial solution and RLPM$^{(t)}$ is always feasible. This requires the setting of the $W\times H + 1$ slack variables to the right hand sides of the inequalities \eqref{eq9} and \eqref{eq10} as the initial basic feasible solution. Hence, the $(W\times H +1)\times (W\times H +1)$ identity matrix, which corresponds to the slack variables can always be selected as the initial basic matrix. The solution of the pricing problem at step $t$ is mainly for checking \mbox{\boldmath $\pi$}$^{(t)}$ and $\mu^{(t)}$ are dual feasible for LPM. This provides the minimum reduced cost value $\overline{c}(\mathbf{r}^{(t+1)})$ and an optimal rectangle (i.e. column) $\mathbf{r}^{(t+1)}$. Objective value $z_{RLPM^{(t)}} = \sum_{j=1}^{R^{(t)}}c(\mathbf{r}^j)x_j^{(t)}$ of the restricted LP master is obtained by only considering $R^{(t)}$ columns generated up to step $t$ and $R^{(t+1)} = R^{(t)} + 1$.}

\begin{algorithm}[h]
\begin{algorithmic}[1]
   \Procedure{SolveLinearProgrammingMaster}{LPM} \textcolor{red}{
   \State\hspace{-0.8cm} (Initialization): Set $t=0$ and solve RLPM$^{(0)}$. 
   \State\hspace{-0.8cm} (Pricing): Call Algorithm~\ref{alg:branch-and-bound-psp} to solve pricing subproblem. Let $\overline{c}(\mathbf{r}^{(t+1)})$ and $\mathbf{r}^{(t+1)}$ be the minimum reduced cost value and corresponding optimal rectangle.
    \State\hspace{-0.8cm} (Optimality check): If $\overline{c}(\mathbf{r}^{(t+1)}) = 0$, then STOP. Set  $\mathbf{x}_{LPM} = \mathbf{x}^{(t)}$ and $z_{LPM} = z_{RLPM^{(t)}}$ and go to Step 6. Otherwise, go to Step 5.
  \State\hspace{-0.8cm} (Generating a new column): If  $\overline{c}(\mathbf{r}^{(t+1)}) < 0$, then introduce the column
 \begin{equation*}
\left( \begin{array} {c} 1 \\ \mathbf{r}^{(t+1)} \end{array} \right)
\end{equation*}
with variable $x_{R^{(t)} + 1}$ and unit cost $c(\mathbf{r}^{(t+1)})$ to obtain RLPM$^{(t+1)}$. Set $t \leftarrow t + 1$,  and go to Step 3.
   \State\hspace{-0.7cm} \Return $\mathbf{x}_{LPM}$ and $z_{LPM}$.
   \EndProcedure}
\end{algorithmic}
\caption{Column generation algorithm that solves the LP relaxation of RBP.}
\label{alg1}
\end{algorithm}

The solution of the RLPM using column generation yields an optimal solution $\mathbf{x}^* = \mathbf{x}^{(t^*)}$ (at some node of the branch-and-bound tree) at step $t^*$ after solving RLPM$^{(t^*)}$ (i.e. restricted master LP with $R^{(t^*)}$ columns). If it is fractional and it is not possible to prune that node by bound or infeasibility, new branches can be created by partitioning the solution set of the integer programming master RBP, which can be done by adding constraint(s) either to the master problem or the pricing subproblems. In the branch-and-price algorithm, we let $\mathcal{L}$ be a collection of rectangle blanket problems RBP$^i$ of the form $z^i = \min \{\mathbf{c}^T\mathbf{x}: \mathbf{x}\in \mathcal{S}^i\}$ where $\mathcal{S}^i$ is a subset of the original feasible solution set $\mathcal{S}$ (i.e. the feasible solution set of the original problem RBP described with constraints~(\ref{eq4})--(\ref{eq6})). Associated with each subproblem of $\mathcal{L}$ is a lower bound $\underline{z}^i \leq z^i$. However, there are some special difficulties in combining column generation with integer programming techniques such as selecting a branching strategy and dealing with slow convergence.

\subsection{Pricing subproblem}

As a consequence of the reduced cost expression~(\ref{eq18}) the pricing subproblem can be formulated as
\begin{equation}
\overline{c}(\mathbf{r}^{(t+1)}) = \min \{c(\mathbf{r}) - \sum^{W}_{p_1=1} \sum^{H}_{p_2=1} r_{p_1p_2}\pi^{(t)}_{p_1p_2}  - \mu^{(t)}: \mathbf{r} \in \mathcal{R}\}, \label{eq19}
\end{equation}
which can be restated as
\begin{equation}
\overline{c}(\mathbf{r}^{(t+1)}) + \mu^{(t)} = \min \{\sum^{W}_{p_1=1} \sum^{H}_{p_2=1} (1 - 2I_{p_1p_2} - \pi^{(t)}_{p_1p_2})r_{p_1p_2}: \mathbf{r} \in \mathcal{R}\} \label{eq20}
\end{equation}
by using the definition of $c(\mathbf{r})$ given with expression~(\ref{eq7}) and the fact that $\mu^{(t)}$ is constant. Here, $\mathbf{r}^{(t+1)} = \arg \min\{\overline{c}(\mathbf{r}^{(t+1)}) - \mu^{(t)}\}$, namely an optimal solution of the minimization problem given as the right-hand side of expression (\ref{eq20}). Hence, if $\overline{c}(\mathbf{r}^{(t+1)}) =  z(\mathbf{r}^{(t+1)}) - \mu^{(t)} = 0$ where $z(\mathbf{r}^{(t+1)})$ is its optimal value, then the current dual solution is feasible and an optimal solution of the LP relaxation of RBP is reached. Otherwise, if $\overline{c}(\mathbf{r}^{(t+1)}) < 0$, a new column is added with the following entries: $c(\mathbf{r}^{(t+1)})$ for objective function (\ref{eq8}), $1$ for inequality (\ref{eq9}) and $\{r_{p_1p_2}^{(t+1)}: p_1 = 1,2,\dots,W; p_2 =1,2,\dots,H\}$ for inequalities (\ref{eq10}) corresponding to the $R^{(t+1)} = (R^{(t)} + 1)$\textsuperscript{th} variable. Observe that $c(\mathbf{r}^{(t+1)})$ is calculated using expression (\ref{eq7}) with $\mathbf{r}^{(t+1)}$.
As a consequence of this discussion the pricing subproblem becomes
\begin{equation}
\text{PSP}^{(t)}:\ z(\mathbf{r}^{(t+1)}) = \min \left\{\sum^{W}_{p_1=1} \sum^{H}_{p_2=1} (1 - 2I_{p_1p_2} - \pi^{(t)}_{p_1p_2})r_{p_1p_2}: \mathbf{r} \in \mathcal{R}\right\}. \label{eq21}
\end{equation}
Please recall that dual variables $\pi^{(t)}_{p_1p_2}$ are restricted to be nonpositive.

It is possible to formulate the pricing subproblem using binary variables as a BIP. Unfortunately, they could include many binary variables and \textcolor{red}{the problem} becomes inefficient to solve. Hence, for the solution of the subproblems created at each node of the RBP's search tree, we adapt a branch-and-bound method that geometrically searches within a rectangle for a subrectangle with \textcolor{red}{the} minimum cost (i.e. such that the sum of the weights of the pixels included in the subrectangle is the smallest). This method is originally proposed for determining the subwindow for localizing an object of interest within a given image~\citep{LampertBlaschkoHofmann2009}. We summarize our adaptation in the sequel of this subsection for the sake of completeness. The idea is to start with an initial rectangle set containing all possible rectangles and dividing it into two disjoint subsets at each branch. Notice that this geometric algorithm, which is essentially a two-dimensional interval bracketing (bisection) method, eventually computes an optimal solution of the pricing subproblem PSP$^{(t)}$ at iteration $t$. 

Using the definition given in~(\ref{equ:def-rectangle}) we can represent any rectangle $\mathbf{r}$ by means of  the quadruplet $(r_\text{top},r_\text{left},r_\text{bottom},r_\text{right})$ denoting respectively its top, left, bottom and right edge coordinates. As we have already mentioned, because of the conventions we adopt from image processing and \textcolor{red}{computer graphics} related fields, the orientation of the $y$-axis is downward, and $r_\text{top} \le r_\text{bottom}$ as a result.

Similarly, we represent a set $\mathcal{R}^{k}$ of rectangles associated with node $k$ of the search tree as $\mathcal{R}^{k} = (\mathbf{T}^{k},\mathbf{L}^{k},\mathbf{B}^{k},\mathbf{R}^{k})$. Here, $\mathbf{T}^{k},\mathbf{L}^{k},\mathbf{B}^{k}$ and $\mathbf{R}^{k}$ are the range of values that top, left, bottom and right edge's coordinates can take. Each of these ranges have lower and upper bounds. For example  $\mathbf{T}^{k} = [\text{top}_\text{low}^{k}, \text{top}_\text{high}^{k}]$ means that the top edge coordinate of the rectangles of set $\mathcal{R}^{k}$ are between $\text{top}_\text{low}^{k}$ and $\text{top}_\text{high}^{k}$, inclusive.  This representation is illustrated in Figure~\ref{fig:rectangle-set} with $\mathbf{r}_{\cup}$  and $\mathbf{r}_{\cap}$ denoting respectively the  subsets corresponding to the union and intersection of all rectangles in the set. 
\begin{figure}[!h]
	\begin{center}
   		\includegraphics[width=0.6\linewidth]{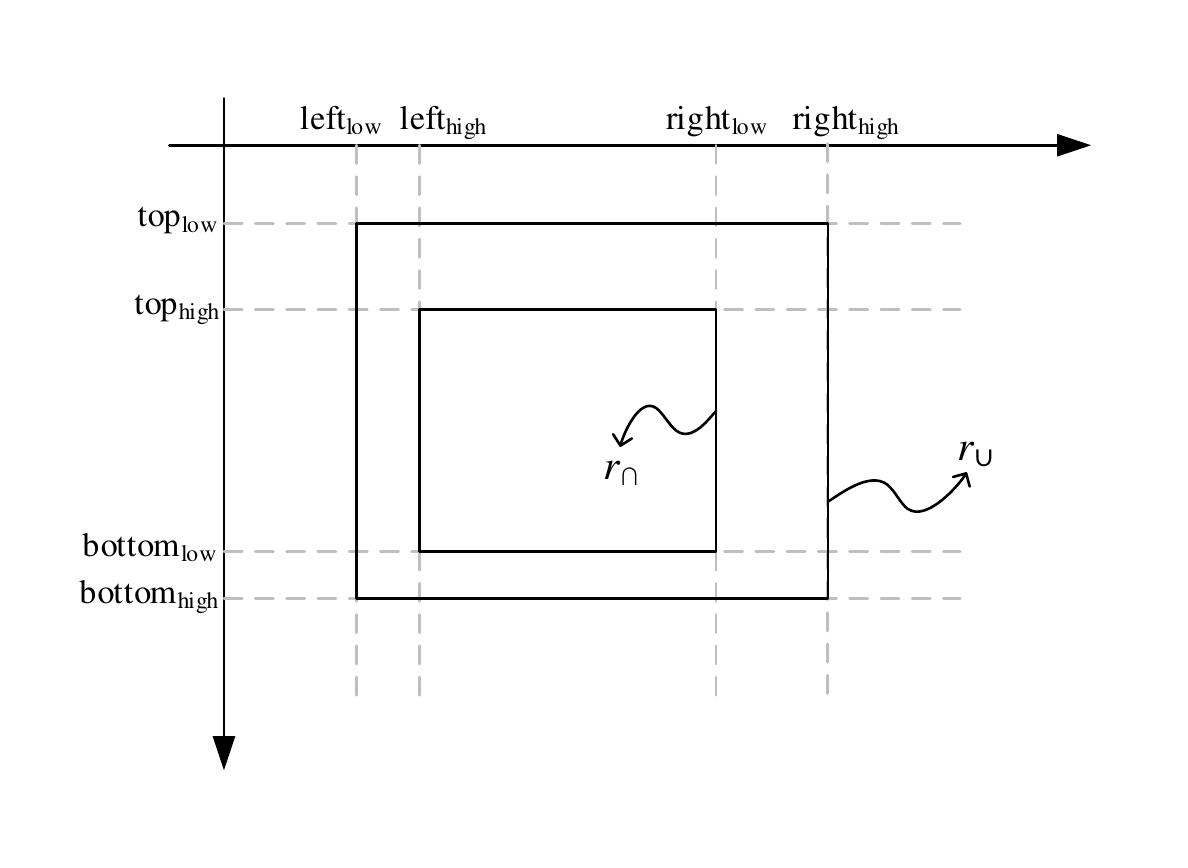}
  		\caption {Representation of a rectangle set} \label{fig:rectangle-set}
  	\end{center}
\end{figure}

At the beginning, the initial node (i.e. node $0$) of the branch-and-bound tree includes the whole set of rectangles $\mathcal{R}^{0} = ([1,H],[1,W],[1,H],[1,W])$. At each step, the largest range is bisected to form two branches such that rectangle subsets form a partition of their parent node's rectangle set. Binary search proceeds according to the best-first branching rule using a lower bound on the sum of the weights of the pixels the rectangles include for pruning, until a node representing a subset consisting of a single rectangle with the minimum cost is reached.

The sum of the negative weights inside a rectangle, is a lower bound for the sum of all the weights inside it. We are going to use a similar principle to define a lower bound for a given rectangle set. Let $z(\mathbf{r})$ be the cost associated with a given rectangle $\mathbf{r}$ and $\ell(\mathcal{R})$  be the lower bound on the costs of the rectangles belonging to the rectangle set $\mathcal{R}$. Hence,
\begin{equation}
z(\mathbf{r}) = \sum^{r_\text{right}}_{p_1=r_\text{left}} \sum^{r_\text{top}}_{p_2=r_\text{bottom}}(1 - 2I_{p_1p_2} - \pi^{(t)}_{p_1p_2})r_{p_1p_2}, \label{eq30}
\end{equation}
is  the objective function for rectangle $\mathbf{r}= (r_\text{left},r_\text{right},r_\text{top},r_\text{bottom})$. Then, $\ell(\mathcal{R})$ satisfies the following properties:
\begin{align}
	\ell(\mathcal{R}) &\leq \min_{\mathbf{r}\in \mathcal{R}}z(\mathbf{r}), \\
	\ell(\mathcal{R}) &= z(\mathbf{r}) \text{ if set } \mathcal{R} \text{ is the singleton } \{\mathbf{r}\}.
\end{align}

A value for $\ell(\mathcal{R})$ can be calculated by means of the formula
\begin{equation}
\ell(\mathcal{R}) = \ell^{-}(\mathbf{r}_{\cup}) + \ell^{+}(\mathbf{r}_{\cap}).\label{eq31}
\end{equation}
Here, $\ell^{-}(\mathbf{r})$ and  $\ell^{+}(\mathbf{r})$ are the sum of the negative and positive weights of the pixels of rectangle $\mathbf{r}$. The inequalities
\begin{eqnarray}
\ell^{+}(\mathbf{r}_{\cap})& \leq & \ell^{+}(\mathbf{r}), \\
\ell^{-}(\mathbf{r}_{\cup}) & \leq & \ell^{-}(\mathbf{r}) 
\end{eqnarray}
hold for any rectangle $\mathbf{r}$ of set  $\mathcal{R}$ because of two reasons. First, every rectangle $\mathbf{r}\in \mathcal{R}$ includes $\mathbf{r}_{\cap}$ so the sum of the positive pixels inside $\mathbf{r}$ is at least $l^{+}(\mathbf{r})$. Second, every rectangle $\mathbf{r}\in \mathcal{R}$ is included in $\mathbf{r}_{\cup}$, and following the same reasoning, it can be seen that the second inequality also holds. If we sum the inequalities we can see that $\ell(\mathcal{R}) \leq \min_{\mathbf{r}\in \mathcal{R}}z(\mathbf{r})$.

The rectangle set representation allows $\mathbf{r}_{\cup}$ and $\mathbf{r}_{\cap}$ to be calculated very quickly since, $\mathbf{r}_{\cup} = [\text{top}_\text{low}, \text{left}_\text{low}, \text{bottom}_\text{high}, \break \text{right}_\text{high}] $ and $\mathbf{r}_{\cap} = [\text{top}_\text{high}, \text{left}_\text{high}, \text{bottom}_\text{low}, \text{right}_\text{low}]$. When $\mathbf{r}_\cap$ does not define a valid rectangle, then the intersection is empty. Furthermore, $\ell^{-}(\mathbf{r})$ and  $\ell^{+}(\mathbf{r})$ can be evaluated in constant time using integral images (summed area tables) of only negative and only positive values~\citep{Crow1984}, which makes the computation of the lower bounds very efficient.

We have all the ingredients of a branch-and-bound algorithm that can solve the pricing subproblem. Let $\mathcal{L}$ be a collection of subproblems PSP$^i$ of the form $z(\mathcal{R}^i) =  \min \left\{\sum^W_{p_1=1} \sum^H_{p_2=1} w_{p_1p_2}r_{p_1p_2}: \mathbf{r} \in \mathcal{R}^i\right\}$; where $w_{p_1p_2} = (1 - 2I_{p_1p_2} + \pi^{(t)}_{p_1p_2})$ is the weight of pixel $\mathbf{p}$. Associated with each subproblem PSP$^i$ of $\mathcal{L}$ is a lower bound $\ell(\mathcal{R}^i) \leq z(\mathcal{R}^i)$ computed according to~(\ref{eq31}). We list the steps of the branch-and-bound algorithm, \textcolor{red}{which solves PSP}, formally as Algorithm~\ref{alg:branch-and-bound-psp}. 

Finally, \cite{LampertBlaschkoHofmann2009} report that their subwindow search algorithm runs in linear time or faster. Although we have not tried to show rigorously, we can say that Algorithm~\ref{alg:branch-and-bound-psp} also has a polynomial time complexity in the worst case since it is a close relative of \cite{LampertBlaschkoHofmann2009} algorithm: it is essentially a two-dimensional bisection procedure. In addition, lower bounds can be computed \textcolor{red}{efficiently}, as mentioned above.
\begin{algorithm}[h]
\begin{algorithmic}[1]
   \Procedure{SolvePricingSubproblem}{PSP}
   \State\hspace{-0.8cm} (Initialization): Set PSP$^0$ = PSP, $\mathcal{R}^0 = \mathcal{R}$,  $\underline{z}^0 = \ell(\mathcal{R}^0)$,  $\mathcal{L}$ = \{PSP$^0$\},  $\mathbf{r}^* \leftarrow (1,W,1,H)$ and \break $\overline{z} = \sum_{p_1 = 1}^W \sum_{p_2=1}^H \max(0, 1 - 2I_{p_1p_2} + \pi^{(t)}_{p_1p_2})r_{p_1p_2}^*$.
   \State\hspace{-0.8cm} (Termination test): If $\mathcal{L} = \emptyset$, then   \textcolor{red}{ rectangle $\mathbf{r}^*$} that yields $\overline{z}$ is an optimal solution of the pricing subproblem.
  \State\hspace{-0.8cm} (Problem solution and lower bound computation): Select and delete a problem from  $\mathcal{L}$, say PSP$^i$.
  \State\hspace{-0.8cm} (Pruning): 
  \begin{description}
  \item[i.  ] If  $\underline{z}^i \geq \overline{z}$, then go to Step 3.
  \item[ii. ] If rectangle set $\mathcal{R}^i$ does not consist of a single rectangle, then go to Step 6.
  \item[iii.] If rectangle set $\mathcal{R}^i$ consists of a single rectangle, say $\mathbf{r}^i$, and $z(\mathcal{R}^i) < \overline{z}$, then set $\overline{z} = z(\mathcal{R}^i)$ 
  \item\hspace{0.4cm} and $\mathbf{r}^* \leftarrow \mathbf{r}^i$, go to Step 3. Otherwise go to Step 6.
  \end{description}
\State\hspace{-0.8cm} (Division):   \textcolor{red}{ Let $\{\mathcal{R}^{ij}\}_{j=1}^2$ be a bi-partition of $\mathcal{R}^i$ and $\mathcal{R}^{i1} \cap \mathcal{R}^{i2} = \emptyset$, which is obtained by bisecting the largest range of rectangle $\mathbf{r}^i$ of set $\mathcal{R}^i$. Add problems \{PSP$^{ij}$\}$_{j=1}^2$ to $\mathcal{L}$ with lower bounds $\underline{z}^{ij} = \ell(\mathcal{R}^{ij})$ $j = 1,2$; go to Step 3.}
   \State\hspace{-0.8cm} \Return $\mathbf{r}^*$ and $\overline{z}$.
   \EndProcedure
\end{algorithmic}
\caption{Branch-and-bound algorithm that solves PSP for computing $\mathbf{\mathbf{r}}^*$.}
\label{alg:branch-and-bound-psp}
\end{algorithm}

\subsection{Branching}

First difficulty one can face in combining column generation with integer programming techniques is related to the branching rule. A scheme suitable for column generation must be devised at some node, say node $h$, of the branch-and-bound tree. We should point out that we omit index $h$ to clarify this fact in the derivations in order to increase the readability. For example, we use $\mathcal{S}$, $\mathcal{R}$, $\mathcal{S}^i$, $\mathcal{R}^i$, RLPM$_i^{(t)}$, PSP$_i^{(t)}$ instead of $\mathcal{S}^h$, $\mathcal{R}^h$, $\mathcal{S}^{hi}$, $\mathcal{R}^{hi}$, RLPM$_{hi}^{(t)}$, PSP$_{hi}^{(t)}$.

\subsubsection{Branching explicitly in the master problem}

A conventional scheme is to consider one of the fractional entries of $\mathbf{x}^*$, say $x_j^*$, and set  $x_j = 0$ in one branch and $x_j = 1$ on the other. These are usually introduced as new constraints to the master problem. Unfortunately, this is not a good choice since it yields an unbalanced branch-and-bound tree and the optimum is usually reached after many branchings as explained in~\citep{Vanderbeck2005}. 

We have observed that the use of column generation in the solution of the linear programming master LPM produces integral or close to integral optimal solutions in particular for RBP, which is an important advantage and results in very few branchings.  Hence, one can prefer branching implicitly in the master variables, since it is simple to implement and define a partition of the feasible solution set $\mathcal{S}$ as $\mathcal{S}^{i} = \{\mathcal{S}\cap \{\mathbf{x}: x_q = i\}\}$ for $i=0,1$ given that $x_q^* \not\in \mathbb{Z}$ is the fractional entry of $\mathbf{x}^*$ and $x_q$ is selected as the branching variable.

\subsubsection{Branching implicitly in the master problem}

This type of branching is preferred to the previous one since it is more likely to produce a balanced search tree as pointed by  \cite{Wolsey1998}. A well-known branching rule for the set partitioning problems is due to \cite{RyanFoster1981}. It is possible to adopt it for the solution of the set packing problem after adding slack variables and transforming packing inequalities to partitioning equalities. However, this causes a considerable increase in the number of variables. Instead, we propose a new branching scheme for RBP without considering the cardinality inequality \eqref{eq4}. It is a consequence of the next proposition and suitable for the solution of RBP by branch-and-price.

\begin{proposition}\label{prop1}
Let $\mathbf{x}$ be the fractional optimal solution to the LPM, i.e. $0 < x_k < 1$ for some $k = 1,2, \dots, R$. Then, there exists two pixels $\left( \begin{array} {c} e_1 \\ e_2 \end{array}\right)$ and $\left( \begin{array} {c} f_1 \\ f_2 \end{array}\right)$ and rectangles $j \neq k$ such that $r_{e_1e_2}^k \neq r_{e_1e_2}^j$, $r_{f_1f_2}^k = r_{f_1f_2}^j = 1$, $0 < x_j \leq 1$.
\end{proposition}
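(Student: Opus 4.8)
The plan is to deduce all three conclusions from a single structural fact about the optimal vertex $\mathbf{x}$: rectangle $k$ must share a pixel with some other column $j$ that $\mathbf{x}$ actually uses ($x_j>0$); once such a pixel $f$ and such a $j$ are in hand, the second pixel $e$ and the bound $x_j\le 1$ come for free. Throughout I would take $\mathbf{x}$ to be a \emph{basic} optimal solution of the LPM (the one the simplex engine driving the column generation returns), i.e.\ an extreme point of the polyhedron defined by the packing inequalities~\eqref{eq5} together with $\mathbf{x}\ge\mathbf{0}$ (the cardinality inequality~\eqref{eq4} being dropped, as in this branching scheme). Call a pixel $(p_1,p_2)$ \emph{tight} at $\mathbf{x}$ if $\sum_{j}r^{j}_{p_1p_2}x_{j}=1$.

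The first — and only nontrivial — step is to show that rectangle $k$ contains a tight pixel. I would argue by contradiction: if every pixel $(p_1,p_2)$ with $r^{k}_{p_1p_2}=1$ satisfied $\sum_{j}r^{j}_{p_1p_2}x_{j}<1$, then for a sufficiently small $\varepsilon>0$ both $\mathbf{x}+\varepsilon\mathbf{e}_k$ and $\mathbf{x}-\varepsilon\mathbf{e}_k$ would remain feasible: the packing inequalities on pixels outside $\mathbf{r}^k$ are unaffected, those on pixels of $\mathbf{r}^k$ retain positive slack, and $0<x_k<1$ keeps nonnegativity (and the implied $x_k\le 1$) intact. But then $\mathbf{x}$ is the midpoint of two distinct feasible points, contradicting that it is an extreme point. (Equivalently, via ranks: if $\mathbf{r}^k$ contained no tight pixel, the column of $x_k$ in the matrix of constraints active at $\mathbf{x}$ would be zero, so that system could not have full column rank.) Hence there is a pixel $(f_1,f_2)$ with $r^{k}_{f_1f_2}=1$ and $\sum_{j}r^{j}_{f_1f_2}x_{j}=1$.

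The rest is bookkeeping on the packing inequalities. From $x_k<1$ and tightness at $(f_1,f_2)$ we get $\sum_{j\neq k}r^{j}_{f_1f_2}x_{j}=1-x_k>0$, so some index $j\neq k$ has $r^{j}_{f_1f_2}=1$ and $x_j>0$; in particular $r^{k}_{f_1f_2}=r^{j}_{f_1f_2}=1$. Since $(f_1,f_2)$ lies in $\mathbf{r}^j$, feasibility of $\mathbf{x}$ gives $x_j\le\sum_{i}r^{i}_{f_1f_2}x_{i}\le 1$, hence $0<x_j\le 1$. Finally, $j\neq k$ and, by~\eqref{equ:def-rectangle}, distinct (nonempty) axis-aligned rectangles have distinct binary representations, so there is a pixel $(e_1,e_2)$ with $r^{k}_{e_1e_2}\neq r^{j}_{e_1e_2}$; this automatically forces $(e_1,e_2)\neq(f_1,f_2)$. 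This establishes all three claims.

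The main obstacle — in fact the only place where care is needed — is the first step, and with it the insistence that $\mathbf{x}$ be an extreme point rather than merely optimal. A zero-cost rectangle, $c(\mathbf{r}^k)=0$, disjoint from every other used rectangle could be assigned an arbitrary fractional value in some optimal $\mathbf{x}$, and the statement would genuinely fail for that particular $\mathbf{x}$; restricting to a basic optimum — equivalently, first rounding any such \emph{free} fractional variable to $0$ — removes this degeneracy, after which the perturbation argument above applies verbatim.
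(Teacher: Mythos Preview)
Your proof is correct but takes a genuinely different route from the paper's. The paper argues directly from \emph{optimality}: if no rectangle $j\neq k$ with $x_j>0$ shared a pixel with $\mathbf{r}^k$, one could push $x_k$ up to $1$ without violating any packing constraint and strictly improve the objective, contradicting optimality of $\mathbf{x}$; it then invokes ``no duplicated column in the basis'' to produce the distinguishing pixel $e$. You instead argue from $\mathbf{x}$ being an \emph{extreme point}: if $\mathbf{r}^k$ contained no tight packing row, the perturbation $\mathbf{x}\pm\varepsilon\mathbf{e}_k$ would remain feasible and $\mathbf{x}$ could not be a vertex; from tightness at some pixel $f$ and $x_k<1$ you extract $j$, and distinctness of the rectangles yields $e$. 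The paper's improvement argument is shorter but tacitly uses $c(\mathbf{r}^k)<0$ --- exactly the zero-cost degeneracy you flag in your last paragraph --- whereas your vertex argument is insensitive to the sign of $c(\mathbf{r}^k)$ and makes the reliance on a \emph{basic} optimum explicit rather than implicit. Both approaches finish identically once the shared pixel $f$ and the index $j$ are in hand.
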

\begin{proof}
Assume that $r_{f_1f_2}^j = 0$ for all $f = \left( \begin{array} {c} f_1 \\ f_2 \end{array}\right)$ such that $r_{f_1f_2}^k = 1$, $x_j > 0$ and $j \neq k$. Then, we can improve the objective value by setting $x_k = 1$ since doing so does not violate any constraints. \textcolor{red}{This contradicts the proposition that $\mathbf{x}$ is an optimal solution. }  Hence, there must exist a rectangle $j$ and pixel $f = \left( \begin{array} {c} f_1 \\ f_2 \end{array}\right)$ with $r_{f_1f_2}^k = r_{f_1f_2}^j = 1$. Then, since there is no duplicated column in the basis, there must exist one pixel $e = \left( \begin{array} {c} e_1 \\ e_2 \end{array}\right)$ such that $r_{e_1e_2}^k \neq r_{e_1e_2}^j$.
\end{proof}

After identifying rows $e$ and $f$  we can impose the branching constraints 
\begin{eqnarray}
\sum_{k: r_{e_1e_2}^k = r_{f_1f_2}^k = 1} x_k \leq 0 & \text{ and } & \sum_{k: r_{e_1e_2}^k = r_{f_1f_2}^k = 1} x_k \geq 1
\end{eqnarray}
for respectively left and right branches. The pixels $e$ and $f$ have to be covered by different rectangles on the left branch and by the same rectangle on the right branch. 

As a consequence of this proposition if it is not possible to identify any $(e,f)$ pairs, then the solution of the master problem must be integer. The algorithm terminates in a finite number of branchings since there are only a finite number of row pairs. Besides, a large number of variables are eliminated at each branch. This rule eliminates the submatrix
\begin{equation}
\left( \begin{array} {cc} 1 & 1 \\ 0 & 1 \end{array} \right),
\end{equation}
which is precisely the excluded submatrix in the characterization of the totally balanced matrices \citep{HoffmanKolenSakarovitch1985}. Total balancedness of the coefficient matrix is a sufficient condition for the LP relaxation of a set packing problem to have integer optimal solutions.

RBP includes a cardinality constraint for the number of sets to be packed in addition to the ordinary set packing constraints. The new branching rule still remains valid for its solution by branch-and-price.

\subsubsection{Implementing the branching rules}

Notice that, new constraints added to the pricing subproblem should be compatible with the geometric solution procedure presented previously as Algorithm~\ref{alg:branch-and-bound-psp}. For example, they cannot be expressed in an algebraic form, since an explicit integer programming formulation of PSP, is not considered. Branching schemes that can be implemented by modifying the objective function \eqref{eq21} are appropriate.

However, as can be observed, any branching scheme that can be handled by only modifying the objective function \eqref{eq3} is appropriate. We restrict a pixel coverage by making its weight arbitrarily large (i.e. $w_{u_1u_2} = +\infty$) and arbitrarily small (i.e. $w_{u_1u_2} = -\infty$), respectively for $r_{u_1u_2}=0$ and $r_{u_1u_2}=1$. 

This is  what we do for $x_j = 1$ branch of the first branching rule: the weights of the pixels belonging to rectangle $j$ are set to very large numbers in the image of the pricing subproblem so that no rectangle overlapping with rectangle $j$ is generated. The right hand side of the constraints of the master problem are modified accordingly. However, for $x_j = 0$ branch the list of the prohibited rectangles belonging to the ancestors is passed to the child. So a rectangle generated by the pricing subproblem is discarded if it belongs to this list and the second best rectangle is considered. This procedure continues until the best column (i.e. the column with the smallest reduced cost) that is not included in the list is generated.

Another approach is to find the geometric representation of the new restrictions at each branch. First, columns violating the branching constraints are removed from the master problem. When the second branching rule is used, at a left branch where two pixels are forced to reside in different rectangles, we prune the node by deleting the associated rectangle set if their intersection contains these two pixels as illustrated in Figure~\ref{fig:branching_conditions}a. In case a right branch is visited we prune the node if the union of rectangles contains only one of these two pixels, since both of the pixels must reside in the same rectangle. This is illustrated in Figure~\ref{fig:branching_conditions}d and Figure~\ref{fig:branching_conditions}e. Remaining alignments, which are illustrated with  Figure~\ref{fig:branching_conditions}b, Figure~\ref{fig:branching_conditions}c and  Figure~\ref{fig:branching_conditions}f are valid for both branches.

\begin{figure}[!h]
	\begin{center}
    		\includegraphics[width=0.80\linewidth]{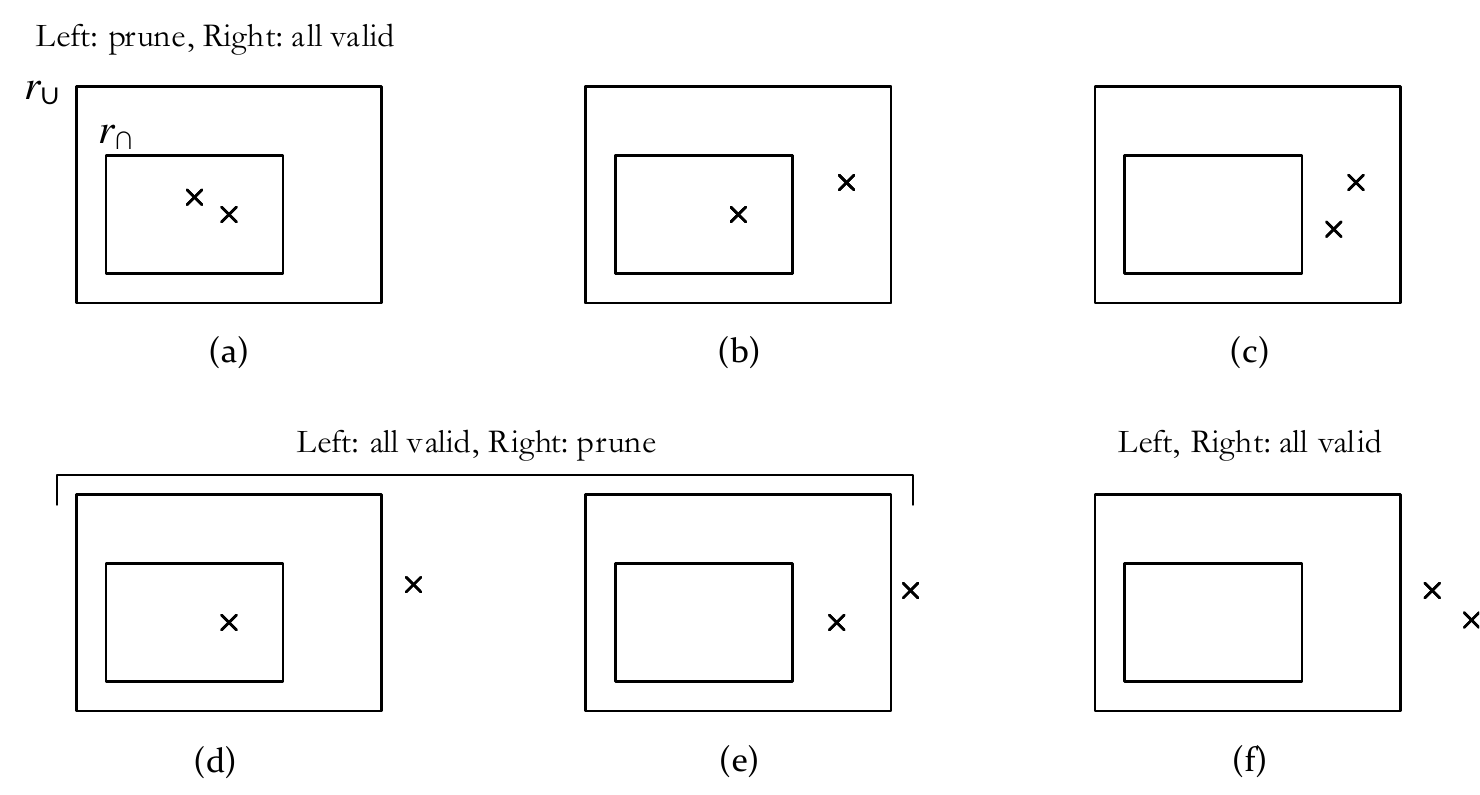}
    		\caption{Possible alignments of two pixels with respect to the rectangle set during pricing}\label{fig:branching_conditions}
		\end{center}
\end{figure} 

\subsection{Improvements for the tailing-off effect}

The next difficulty is the efficiency of the branch-and-bound algorithm. It can decrease remarkably because of the tailing-off effect of the column generation algorithm, which can be prevented by means of effective lower bounds on the optimal value of RLPM and dual smoothing.

\subsubsection{Lower bound for early stopping column generation}

It is possible to adapt the very well-known Lagrangean bounds (see page 449 in \citep{VanderbeckWolsey2010} for RBP, which we give in the following proposition.

\begin{proposition}
Let $LB(\mbox{\boldmath$\pi$}^{(t)},\mu^{(t)}) = K \overline{c}(\mathbf{r}^{(t+1)}) + z_{RLPM^{(t)}}
$, where $z_{RLPM^{(t)}}$ and $z_{LPM}$ are respectively the optimal values of the relaxed master problem at step $t$ and the LP relaxation of the original BIP formulation. Then, 
\begin{equation}
z_{RLPM^{(t)}} \geq z_{LPM} \geq LB(\mbox{\boldmath$\pi$}^{(t)},\mu^{(t)}).
\end{equation}
\end{proposition}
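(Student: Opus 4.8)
The plan is to prove the two inequalities $z_{RLPM^{(t)}} \geq z_{LPM}$ and $z_{LPM} \geq LB(\boldsymbol{\pi}^{(t)},\mu^{(t)})$ separately. The first is immediate: RLPM$^{(t)}$ is LPM with the variables $x_j$ for $j > R^{(t)}$ fixed to zero, both LPs are feasible (each contains $\mathbf{x}=\mathbf{0}$) and bounded below, and any feasible point of RLPM$^{(t)}$ padded with zeros is feasible for LPM with the same objective value; minimizing over the larger column set therefore cannot raise the optimum. The content is in the second inequality, a Lagrangean-type bound obtained by repairing the step-$t$ dual solution into a dual-feasible solution for the full LP.

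For that second inequality I would work with the LP dual of LPM, namely $\max\{K\mu + \sum_{p_1,p_2}\pi_{p_1p_2} : \mu + \sum_{p_1,p_2} r_{p_1p_2}^j\pi_{p_1p_2} \leq c(\mathbf{r}^j)\ \text{for } j=1,\dots,R,\ \mu\leq 0,\ \pi_{p_1p_2}\leq 0\}$, observing that its $j$-th constraint is exactly $\overline{c}(\mathbf{r}^j)\geq 0$ in the reduced-cost notation of \eqref{eq18}. The pair $(\boldsymbol{\pi}^{(t)},\mu^{(t)})$ is optimal for the dual of RLPM$^{(t)}$, so it satisfies these constraints for the generated columns $j\leq R^{(t)}$ but may violate them for the missing columns; the worst of these, $\overline{c}(\mathbf{r}^{(t+1)}) = \min_{\mathbf{r}\in\mathcal{R}}\overline{c}(\mathbf{r})$, is precisely the pricing subproblem's optimal value. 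I would then absorb the whole correction into the cardinality dual: set $\tilde{\mu} := \mu^{(t)} + \overline{c}(\mathbf{r}^{(t+1)})$ and $\tilde{\boldsymbol{\pi}} := \boldsymbol{\pi}^{(t)}$. Because $\overline{c}(\mathbf{r}^{(t+1)})\leq 0$ throughout column generation (equality marking convergence), we get $\tilde{\mu}\leq\mu^{(t)}\leq 0$, so the sign restrictions persist; and for every $j=1,\dots,R$ the reduced cost under $(\tilde{\boldsymbol{\pi}},\tilde{\mu})$ equals $\overline{c}(\mathbf{r}^j)-\overline{c}(\mathbf{r}^{(t+1)})\geq 0$ (both reduced costs being those computed at step $t$) by minimality of $\overline{c}(\mathbf{r}^{(t+1)})$. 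Hence $(\tilde{\boldsymbol{\pi}},\tilde{\mu})$ is feasible for the dual of LPM, and weak duality yields $z_{LPM}\geq K\tilde{\mu} + \sum_{p_1,p_2}\tilde{\pi}_{p_1p_2} = K\,\overline{c}(\mathbf{r}^{(t+1)}) + \big(K\mu^{(t)} + \sum_{p_1,p_2}\pi^{(t)}_{p_1p_2}\big)$. The bracketed term is the dual objective of RLPM$^{(t)}$ at its optimal dual solution, which by strong duality (RLPM$^{(t)}$ being feasible and bounded) equals $z_{RLPM^{(t)}}$; the right-hand side is thus $K\,\overline{c}(\mathbf{r}^{(t+1)}) + z_{RLPM^{(t)}} = LB(\boldsymbol{\pi}^{(t)},\mu^{(t)})$, which closes the chain.

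The step I expect to be the crux is the dual repair: deciding that the entire correction goes into $\mu$ (and none into $\boldsymbol{\pi}$) and then checking both that the repaired pair still obeys the nonpositivity of the dual variables and that it simultaneously restores feasibility of every column's dual constraint. These hinge on two facts that must be stated with care — that the pricing procedure (Algorithm~\ref{alg:branch-and-bound-psp}) really returns the minimum reduced cost over all of $\mathcal{R}$ and not only over the columns generated so far, and that this minimum is nonpositive, which is exactly what keeps $\tilde{\mu}$ inside the feasible sign region. Everything else — the two strong-duality appeals and the zero-padding argument for the first inequality — is routine.
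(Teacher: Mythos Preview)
Your proof is correct. The paper, however, does not actually prove this proposition: it merely states it as an adaptation of the well-known Lagrangean bound from \cite{VanderbeckWolsey2010} and moves directly to its consequences for early termination. Your dual-repair argument --- shifting the entire correction $\overline{c}(\mathbf{r}^{(t+1)})$ into the cardinality multiplier $\mu$ and invoking weak duality for LPM together with strong duality for RLPM$^{(t)}$ --- is the standard derivation of this bound and fills in precisely the details the paper leaves to the reference.
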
 
As a direct consequence of this proposition, column generation can be stopped at step $t$ if $\lceil LB \rceil \geq z_{RLPM^{(t)}}$ or $\lceil LB \rceil \geq \overline{z}$ where $LB$ is the maximum of the previously computed lower bounds and $\overline{z}$ is the current integer upper bound (i.e. objective value of the incumbent). The first case implies that the restricted master is solved to optimality and the second one implies that the current node can be pruned. For the first case, we use the fact that the optimal value of RBP is integer (i.e. it is the sum of the product of a binary variable with integer coefficients).

Because of the simplified notation we use (i.e. no explicit index denoting the node of the search tree) the bounds given above may seem to be valid only for the root node. However, it is possible to state similar stopping conditions for any node of the branch-and-bound tree following a similar path.

\subsubsection{Dual smoothing}

One of the reasons behind the tailing-off effect is the oscillation of the dual variables erratically during the iterations of the column generation. One remedy is to stabilize the dual variables using the convex combination of the current and incumbent dual values as \cite{Wentges1997} and \cite{ PessoaSadykovUchoaVanderbeck2015} suggest:
\begin{equation}
  \tilde{\mbox{\boldmath$\pi$}}^{(t)} = \alpha \mbox{\boldmath$\pi$}^{(t)}_{\text{best}}+ (1-\alpha) \mbox{\boldmath$\pi$}^{(t)}.
\end{equation}
Here, $\mbox{\boldmath$\pi$}^{(t)}_{\text{best}}$ are the dual variable values that give the largest of the lower bounds (i.e. incumbent dual) calculated up to iteration $t$.  Then, the pricing subproblem is solved with the modified duals $\tilde{\mbox{\boldmath$\pi$}}^{(t)}$, instead of the current duals $\mbox{\boldmath$\pi$}^{(t)}$. If the reduced cost of the generated column is nonnegative, it is a misprice and the column is not added to RLPM$^{(t)}$. Since we are not working on a Lagrangean relaxation of RBP, there is no available subgradient to benefit from the automatic parameter tuning described in \cite{PessoaSadykovUchoaVanderbeck2015}. Nevertheless, we have experimentally noticed that setting $\alpha = 0.8$ increases the efficiency considerably.
 
\section{Heuristics}\label{sec:heuristics}

Solving RBP exactly can be computationally demanding, especially for instances of realistic size. Hence, efficient heuristics are very valuable. In this section, we introduce three of them. The first two are simple heuristics that enlarge the blanket, one rectangle at a time. They are based on the ideas proposed in the previous works by~\cite{DemirozSalahAkarun2014} and~\cite{MohrZachmann2010b}. The third one is a novel application of constrained simulated annealing.  

\subsection{Split and Fit}\label{sec:splitNfit}

Split and Fit heuristic (SF) iteratively splits rectangles and decreases the objective function using their \emph{fitness scores}. The fitness score of rectangle $\mathbf{r}$ is defined to be inversely proportional to the area of the non-overlapping region of the rectangle with the target image and can be formulated as
\begin{equation}
g(\mathbf{r})= \frac 1{\sum_{p_1 = r_{left}}^{r_{right}}\sum_{p_2 = r_{top}}^{r_{bottom}}(1-I_{p_1p_2})}\cdot\label{equ:rect-fitness}
\end{equation}
It can be computed in constant time using the integral image (i.e. summed area table) of the given image. The blanket determined by SF is represented by the rectangle set $\mathcal{R}^*$, which is initialized with the axis-aligned initial rectangle $(1,W,1,H)$ for a given object image. Another structure $\mathcal{Q}$ is maintained for keeping all possible split pairs along with their fitness scores. At each step, a rectangle $\mathbf{r}$ with the lowest fitness score and its complement rectangle $\mathbf{r}'$ (i.e. \textcolor{red}{the rectangle with the higher fitness score}
) are removed from $\mathcal{Q}$. Its parent, namely the rectangle that is split to obtain $\mathbf{r}$ is removed from $\mathcal{R}^*$ and all the siblings are removed from $\mathcal{Q}$. Finally, $\mathbf{r}$ and  $\mathbf{r}'$ are added to $\mathcal{R}^*$, and their  possible split pairs are added to  $\mathcal{Q}$. This splitting process is repeated until $K$ rectangles are obtained.  In the original work, \cite{DemirozSalahAkarun2014} used a tolerance value to enable early stopping. Unfortunately, although early stopping increases the efficiency, the solutions become less accurate. Therefore, we have run SF without early stopping (i.e. with zero tolerance) in our  experiments.

Possible split pairs of a rectangle are generated by dissecting it vertically and horizontally in different proportions. This process is controlled with parameter $\rho$. Formally, possible split ratios of a rectangle pair are $(\frac{i}{\rho}, \frac{\rho -i}{\rho})$ for $i=1,2,\dots,\rho - 1$. For example, for $\rho = 4$ possible vertical splits of a rectangle with width $W$ have widths $\{ (\frac{W}{4}, \frac{3W}{4}), (\frac{W}{2}, \frac{W}{2}), (\frac{3W}{4}, \frac{W}{4})\}$. Note that $\rho = 2$ corresponds to splitting the rectangles exactly in half as it is in the original due to \cite{DemirozSalahAkarun2014}. We have experimentally observed that setting $\rho = 3$ performs better in our experiments.

This procedure yields a set of rectangles covering the target image, i.e. the pixels are fully contained in the rectangles. To improve the solution further, as a final step, each rectangle is shrunk until $c(\mathbf{r})$ increases. \textcolor{red}{Growing / shrinking operation moves one of the edges of a rectangle by one pixel outside/ inside, as shown in Figure \ref{fig:proposals}.} The steps of the heuristic are illustrated in Figure~\ref{fig:decompose} for $\rho = 2$.

\begin{figure}
   \centering
   {\includegraphics[width=0.1\linewidth]{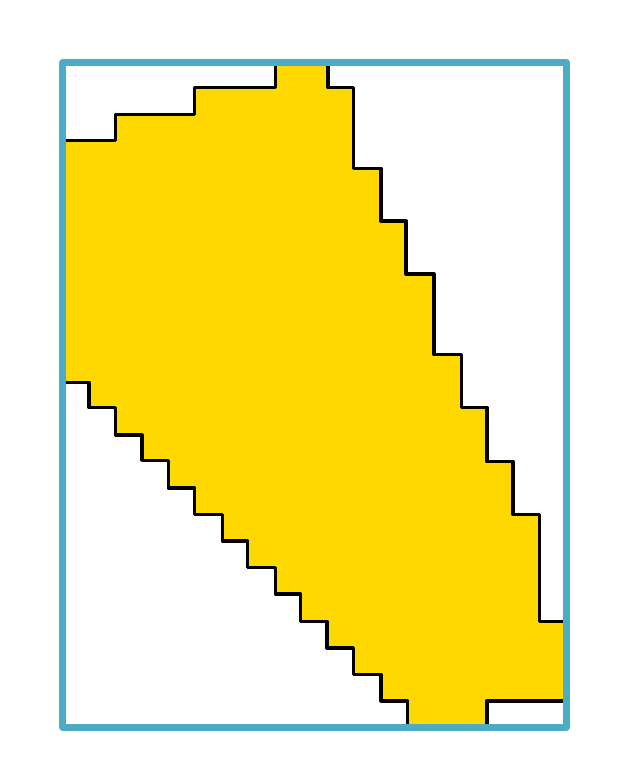} $\;$
   \includegraphics[width=0.1\linewidth]{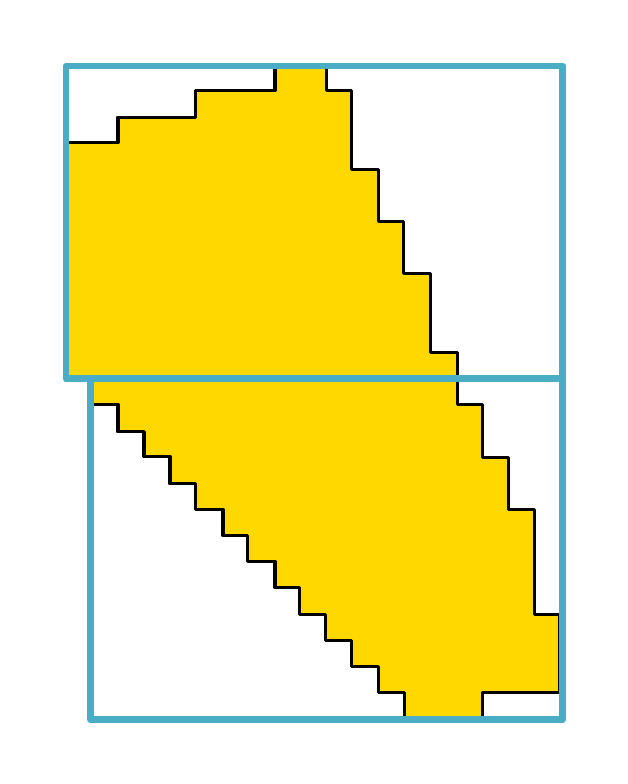} $\;$
   \includegraphics[width=0.1\linewidth]{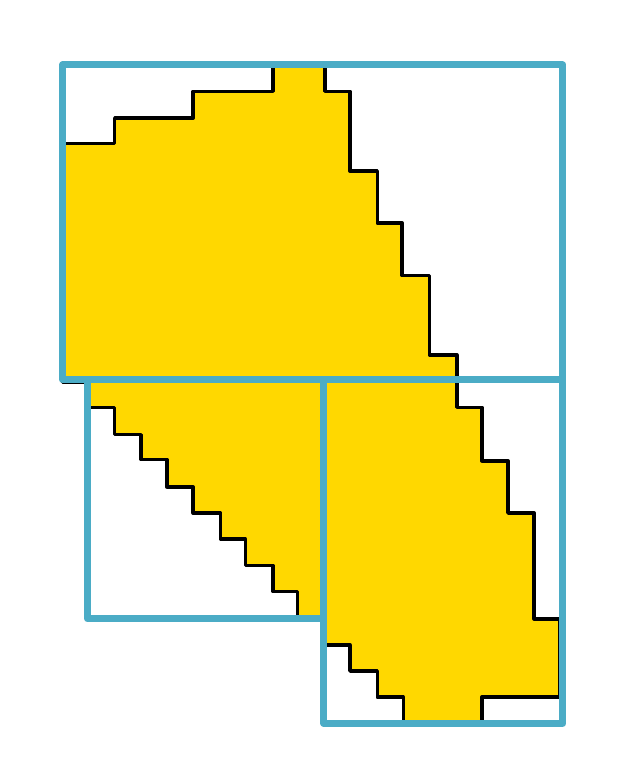} $\;$ 
   \includegraphics[width=0.1\linewidth]{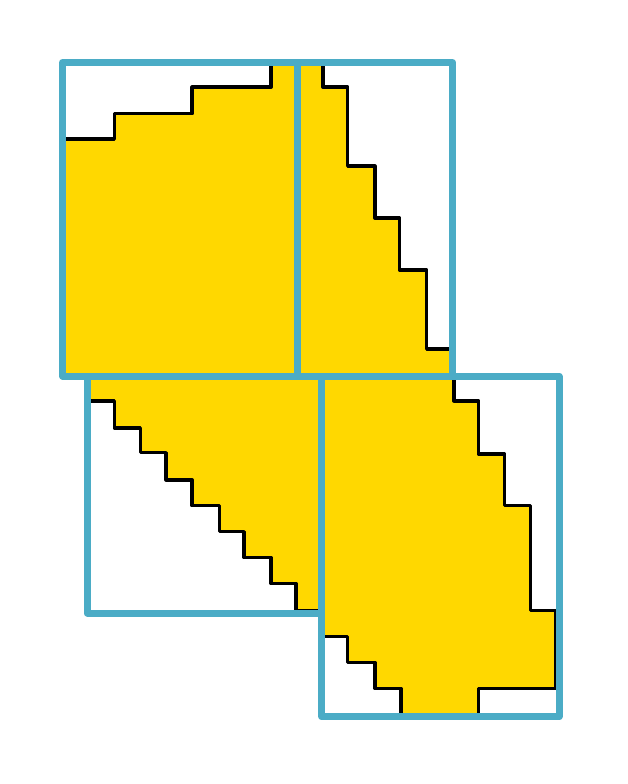} $\;$
   \includegraphics[width=0.1\linewidth]{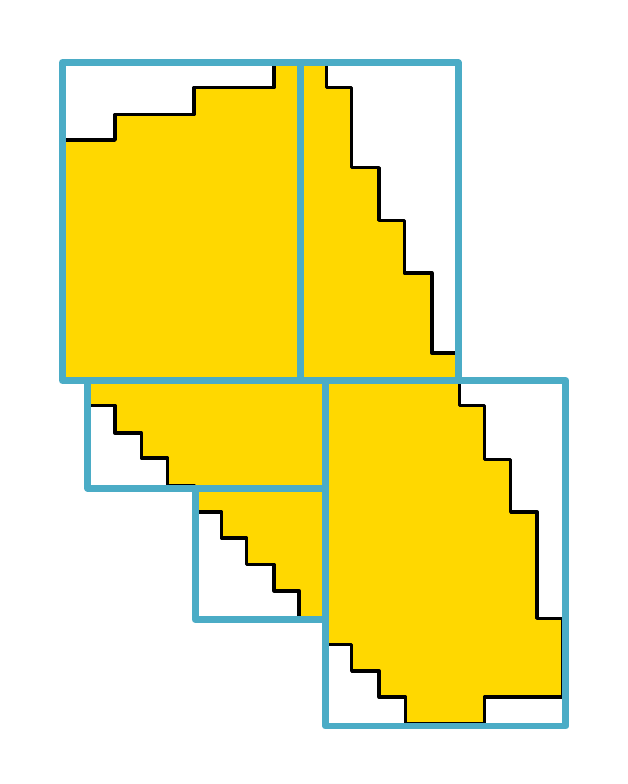} $\;$
   \includegraphics[width=0.1\linewidth]{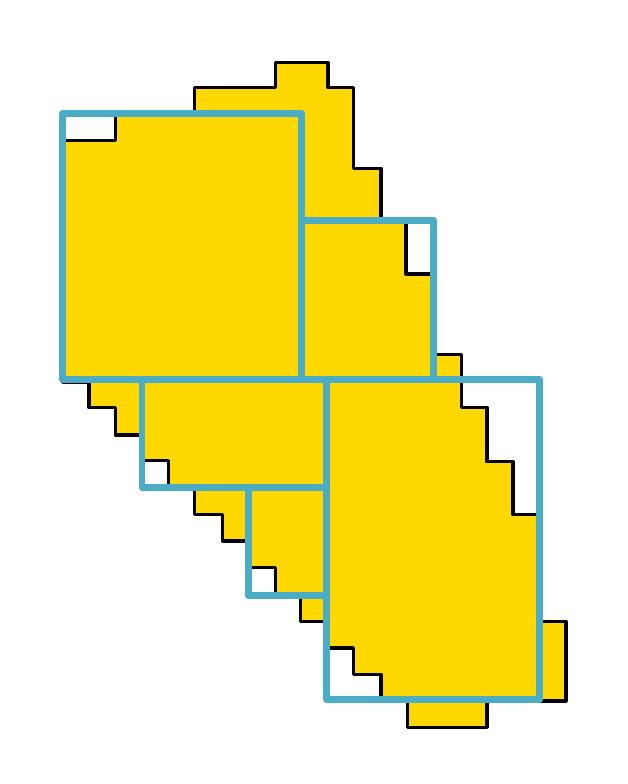}}
   \caption{Illustration for Split and Fit for $\rho=2$. \textcolor{red}{ The last image shows the output of the shrinking operation.}}
   \label{fig:decompose}
\end{figure}

\subsection{Fast Adaptive Silhouette Area Based Template Matching}

Fast Adaptive Silhouette Area Based Template matching (FAST) heuristic is proposed by \cite{MohrZachmann2010b}. It eventually determines an approximation of a given image by means of rectangles, which makes it a potential solution method for the solution of RBP. Unfortunately, the output set of rectangles do not exactly form a blanket since they are allowed to overlap, which means FAST can produce infeasible solutions for RBP in its original form.  Based on our experiments, we can say that the size of the overlaps is usually very small and does not have a serious impact on the application performance of the heuristic. Nevertheless, we have modified FAST to guarantee the feasiblity of its solutions, i.e. sets of non-overlapping rectangles, at the end.

Like SF, FAST finds rectangles iteratively, considering a \emph{benefit function}, which is similar to the fitness score~(\ref{equ:rect-fitness}), 
\begin{equation}
  f(\mathbf{r}) = \sum_{p_1 = r_{left}}^{r_{right}}\sum_{p_2 = r_{bottom}}^{r_{top}}(I_{p_1p_2}-\tau).
\end{equation}
Here, $\tau \in [0,1]$ is a parameter that controls the total uncovered area by penalizing covering a zero valued pixel. The method consists of two main steps. First, a rectangle of size $\sfrac{W}{2}\times \sfrac{H}{2}$ that maximizes $f(\mathbf{r})$ is found. $W$ and $H$ denote respectively the width and the height of the image, again.  If this rectangle has uncovered pixels, then the size is halved. This procedure is repeated until a rectangle that completely lies inside the binary image is found. \textcolor{red}{Then, the rectangle is grown until $f(\mathbf{r})$ starts to decrease}. The rectangle area is erased from the target image. These two steps are repeated until $K$ rectangles are obtained. 

\subsection{Constrained Simulated Annealing}\label{sec:csa}

Constrained Simulated Annealing CSA \citep{Wah2006} is originally proposed for solving a nonlinear programming problem consisting of the minimization of $f(\mathbf{x})$ subject to the equality constraints $h_i(\mathbf{x})=0$, $i=1,2,\dots,m$. CSA considers the Lagrange function \begin{equation}\label{eq40}
   L(\mathbf{x},\mbox{\boldmath$\lambda$}) = f(\mathbf{x}) +\sum_{i=1}^m\lambda_i h_i(\mathbf{x}).
\end{equation}
obtained by moving the constraints into the objective function with multipliers \mbox{\boldmath$\lambda$}$\in\mathbb{R}^m$.

The goal is to find $\mathbf{x}^*$ that minimizes $f(\mathbf{x})$ subject to the equality constraints $h_i(\mathbf{x})=0$, $i=1,2,\dots,m$ by finding $(\mathbf{x}^*,\mbox{\boldmath$\lambda$}^*)$ that minimizes (\ref{eq40}). Here, $\mbox{\boldmath$\lambda$}^*$ is the vector of Lagrange multipliers at which optimum solution $\mathbf{x}^*$ is obtained. In other words an equality constrained nonlinear minimization problem is solved in $\mathbf{x}$ by solving an unconstrained minimization problem in $\mathbf{y} = (\mathbf{x},\mbox{\boldmath$\lambda$})$, which is one of the classical research problems of nonlinear optimization.

In this outline, $\mathcal{N}(\mathbf{y})$, $G(\mathbf{y}' \mid \mathbf{y})$ and $A(\mathbf{y},\mathbf{y}',T)$ denote respectively the neighborhood of $\mathbf{y}$, generation probability of $\mathbf{y}'$ given $\mathbf{y}$  and the acceptance probability of the new point $\mathbf{y}'$. Then the new point $\mathbf{y}'$ can be obtained by changing $\mathbf{x}$ to $\mathbf{x}'$, i.e. at  $\mathbf{y}' = (\mathbf{x}',\mbox{\boldmath$\lambda$})$, or by changing $\mbox{\boldmath$\lambda$}$ to $\mbox{\boldmath$\lambda$}'$, i.e. at $\mathbf{y}' = (\mathbf{x},\mbox{\boldmath$\lambda$}')$. 
The algorithm is very similar to the conventional simulated annealing procedure, e.g. convergence condition can be extended to take the unchanged $\mathbf{y}$ into account in successive iterations.  The neighborhood $\mathcal{N}(\mathbf{y})$  and the acceptance probability $A(\mathbf{y},\mathbf{y}',T)$ for $\mathbf{y} = (\mathbf{x},\mbox{\boldmath$\lambda$})$ are defined as
\begin{equation}
\mathcal{N}(\mathbf{y}) =  \left\{ (\mathbf{x}',\mbox{\boldmath$\lambda$}) : \mathbf{x}' \in \mathcal{N}_1(\mathbf{x}) \right\} \cup \left\{ (\mathbf{x}, \mbox{\boldmath$\lambda$}') : \mbox{\boldmath$\lambda$}' \in \mathcal{N}_2(\mbox{\boldmath$\lambda$}) \right\},
\end{equation}
and
\begin{equation}
	A(\mathbf{y},\mathbf{y}',T) = \begin{cases} 
		\exp \left( - \frac{L(\mathbf{y}')-L(\mathbf{y})}{T} \right) & \text{if } \mathbf{y}' = (\mathbf{x}', \mbox{\boldmath$\lambda$})\\
		\\
		\exp \left( -\frac{L(\mathbf{y})-L(\mathbf{y}')}{T} \right) & \text{if } \mathbf{y}' = (\mathbf{x}, \mbox{\boldmath$\lambda$}'). \\
	\end{cases}
\end{equation}
In other words, at each iteration, a random point is generated by fixing $\mathbf{x}$ or \mbox{\boldmath$\lambda$}. 

CSA can be applied to the solution of RBP. The objective function defined as (\ref{eq3}) is $f(\mathbf{x})$ in the Lagrange function~(\ref{eq40}). There are two constraints in RBP: the maximum number of rectangles is fixed and the solution should not contain overlapping rectangles. These constraints can be incorporated into the Lagrange function as

\begin{eqnarray*}
	h_1(\mathcal{R}) &  = & \max(0, |\mathcal{R}|-K), \\
	h_2(\mathbf{r}) &  = & \sum_{\substack{\mathbf{r}^i,\mathbf{r}^j \in \mathcal{R} \\ \mathbf{r}^i \neq \mathbf{r}^j}} \sum_{p_1}\sum_{p_2}r^i_{p_1p_2}\times r^j_{p_1p_2},
\end{eqnarray*}
with multipliers $\lambda_1$ and $\lambda_2$. $h_1(\mathcal{R})$ is for penalizing the excess in the number of rectangles; it increases as the number of rectangles in the blanket exceeds the upper bound $K$. $h_2(\mathbf{r})$ is for penalizing overlapping  rectangles; its value is obtained by counting pixels in overlapping rectangle pairs. 

In this work, five primitive operations are realized using $G(\mathbf{y}' \mid \mathbf{y})$: \emph{grow}, \emph{shrink}, \emph{split}, \emph{delete} and \emph{create}. Growing/shrinking operation moves one of the edges of a rectangle $\mathbf{r}\in\mathcal{R}$ by one pixel outside/inside. Splitting operation partitions a rectangle $\mathbf{r}\in \mathcal{R}$ into two rectangles $\mathbf{r}^1$ and $\mathbf{r}^2$ ($\mathbf{r}= \mathbf{r}^1 \cup \mathbf{r}^2$ and $\mathbf{r}^1 \cap \mathbf{r}^2 = \emptyset$). Deleting simply removes a rectangle $\mathbf{r}$ from $\mathcal{R}$. Creating adds a rectangle $\mathbf{r}$ to $\mathcal{R}$ that resides in the image region (i.e. $1 \le r_\text{left} <  r_\text{right} \le W$, $1 \le r_\text{top} < r_\text{bottom} \le H$). These operations are illustrated in Figure~\ref{fig:proposals}. 

\begin{figure}[!h]
	\begin{center}   
		\includegraphics[width=0.8\linewidth]{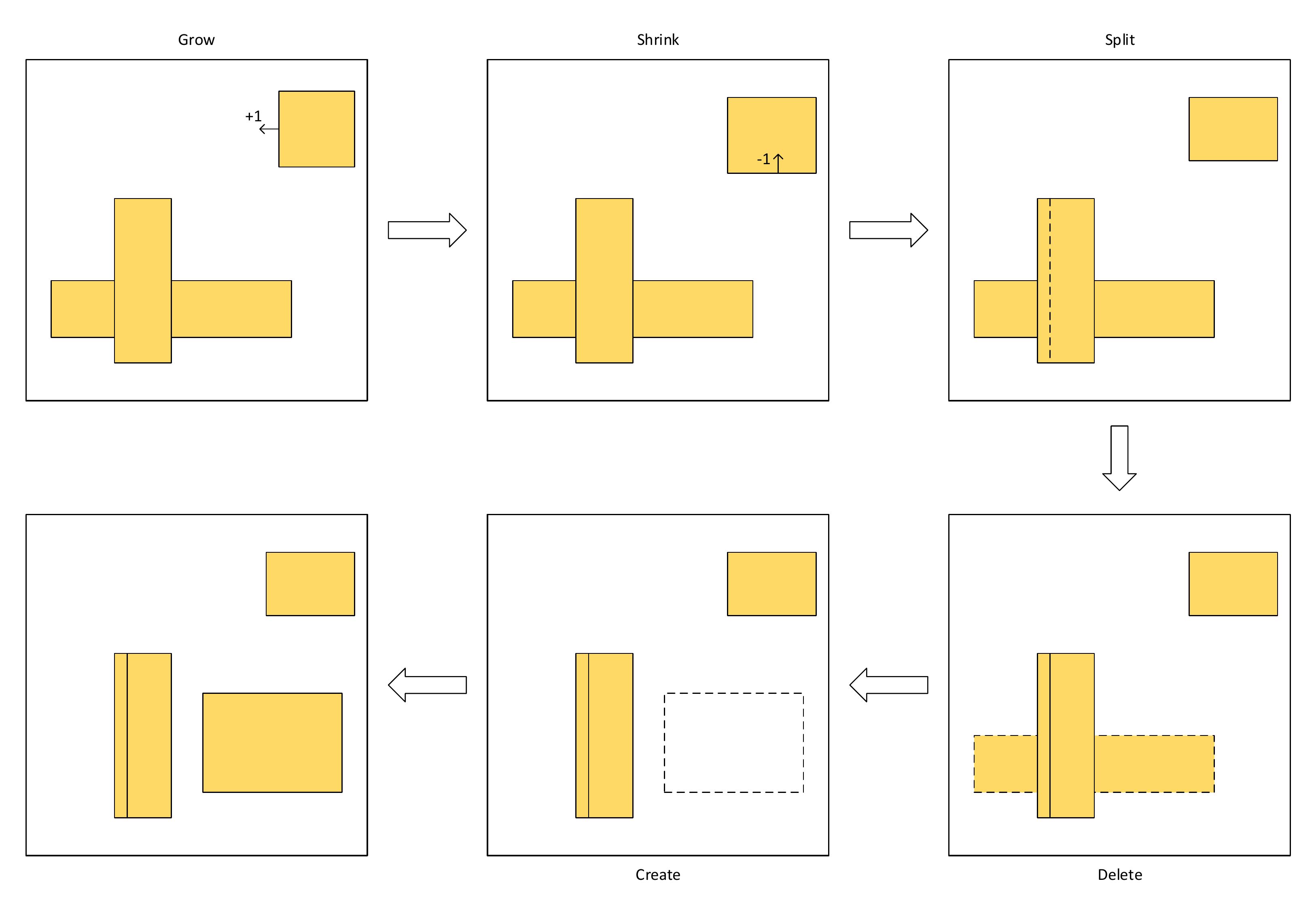}
   		\caption{Illustration for grow, shrink, split, delete and create operations} \label{fig:proposals}
	\end{center}
\end{figure}

\section{Computational results}\label{sec:experiments}

In this section we report the results obtained in the computational tests made for assessing the performance of the solution methods. 

\subsection{Test bed}
For benchmarking, five different groups of binary images are generated; they all have different properties. The benchmark instances can be seen in Figure~\ref{fig:benchmark-1}--Figure~\ref{fig:benchmark-5}. The labels and resolutions (i.e. the number of pixels in each dimension) are given below the images. The first group is \emph{ideal human silhouette}s selected from the work of ~\cite{DemirozSalahAkarun2014} on fall detection and tracking. They are simple nonconvex polygons and the size of the images are small compared to the other groups. Since the structure of all shapes used in the study are very similar, only four representative shapes are selected. The second group of benchmark images are taken from~\citep{Chan2014} where the authors created a similar benchmark for \emph{mask fracturing}, a process where complex shapes are translated into the union of simpler shapes called \textit{shots} during integrated circuit layout production. This group has the largest images and each image represents a single region. The third group contains images  generated artificially to capture certain shape properties that the first two groups do not have. This group contains convex regions, disconnected regions, regions with holes and nested disconnected regions. The fourth group of images is a subset of MPEG-7 shape dataset~\citep{latecki2000}, which are cropped and resized for our experiments. Each category in this group has 20 instances. Because the purpose of the MPEG-7 shape dataset is to evaluate shape similarity measures, we have found that using only a subset of the MPEG-7 dataset is sufficient for our study. The selected categories also capture different shape properties like the previous group. Finally, the last group consists of images belonging to 10 realistic nesting problems. They \textcolor{red}{were } obtained from leather garment and furniture with defects on the master surfaces and \textcolor{red}{have been used} to generate the results given in Table 4 of \cite{BaldacciBoschettiGanovelliManiezzo2014}. We rasterized the images included in the files in DXF format that were sent by \cite{GanovelliManiezzo2018}, using KABEJA  library written in JAVA \citep{Kabeja2018}. In short, the test bed consists of $147$ instances and each one is solved for five distinct blanket sizes (i.e. the number of rectangles in the blanket); $K\in \{3,5,10,15,20\}$. This results in $147 \times 5 = 735$ test runs for each solution method. 

\newcommand{\scale}{0.3}
\newcommand{\twidth}{3.0cm}

\begin{figure}[b]
	\centering
	\subcaptionbox*{avatar1\\18$\times$15}[\twidth]
		{\includegraphics[width=\paperwidth*\real{0.18}*\real{\scale}]{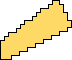}}
	\subcaptionbox*{avatar2\\19$\times$25}[\twidth]
		{\includegraphics[width=\paperwidth*\real{0.19}*\real{\scale}]{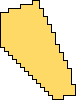}}
	\subcaptionbox*{avatar3\\28$\times$17}[\twidth]
		{\includegraphics[width=\paperwidth*\real{0.28}*\real{\scale}]{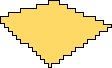}}
	\subcaptionbox*{avatar4\\21$\times$19}[\twidth]
		{\includegraphics[width=\paperwidth*\real{0.21}*\real{\scale}]{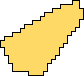}}
	\caption{Selected binary masks of \emph{ideal human silhouette}s from~\cite{DemirozSalahAkarun2014}}
	\label{fig:benchmark-1}
\end{figure}

\renewcommand{\scale}{0.05}
\renewcommand{\twidth}{2cm}

\begin{figure}
	\centering
	\subcaptionbox*{artificial1\\70$\times$70}[\twidth]
		{\includegraphics[width=\paperwidth*\real{0.70}*\real{\scale}]{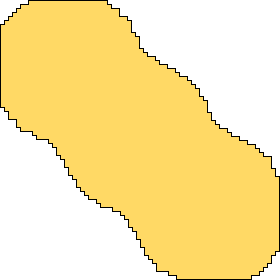}}
	\subcaptionbox*{artificial2\\309$\times$102}
		{\includegraphics[width=\paperwidth*\real{3.09}*\real{\scale}]{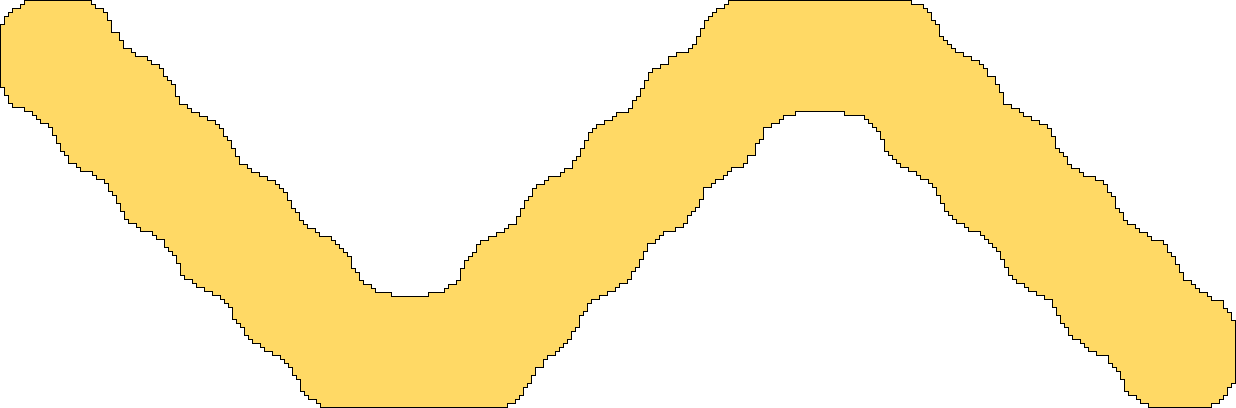}}
	\subcaptionbox*{artificial3\\170$\times$143}
		{\includegraphics[width=\paperwidth*\real{1.70}*\real{\scale}]{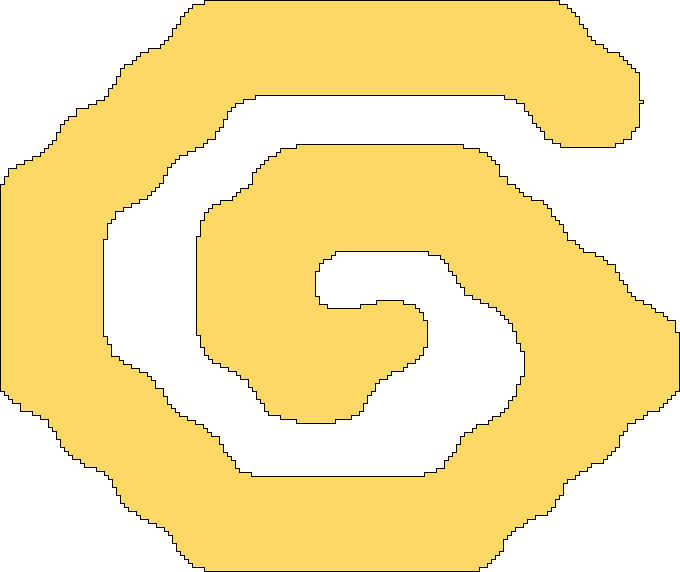}}
	\subcaptionbox*{artificial4\\145$\times$121}[\twidth]
		{\includegraphics[width=\paperwidth*\real{1.45}*\real{\scale}]{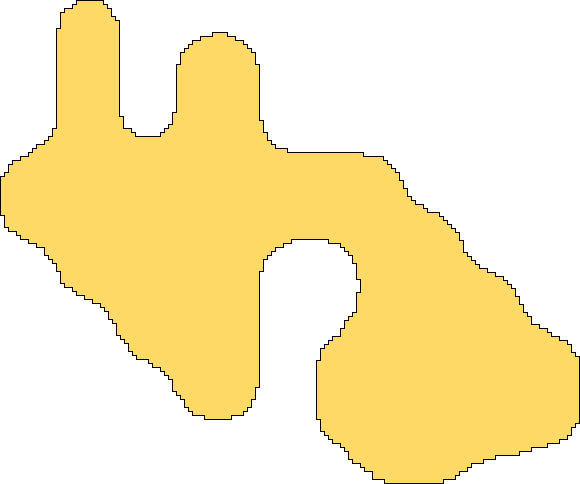}}
	\subcaptionbox*{artificial5\\65$\times$61}[\twidth]
		{\includegraphics[width=\paperwidth*\real{0.65}*\real{\scale}]{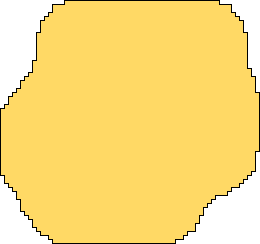}}
	\subcaptionbox*{realistic1\\110$\times$60}[\twidth]
		{\includegraphics[width=\paperwidth*\real{1.10}*\real{\scale}]{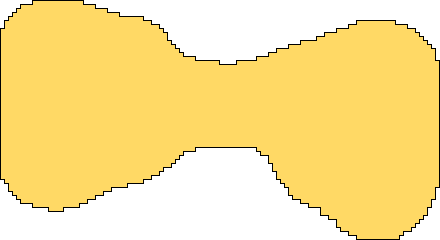}}
	\subcaptionbox*{realistic2\\162$\times$190}[\twidth]
		{\includegraphics[width=\paperwidth*\real{1.62}*\real{\scale}]{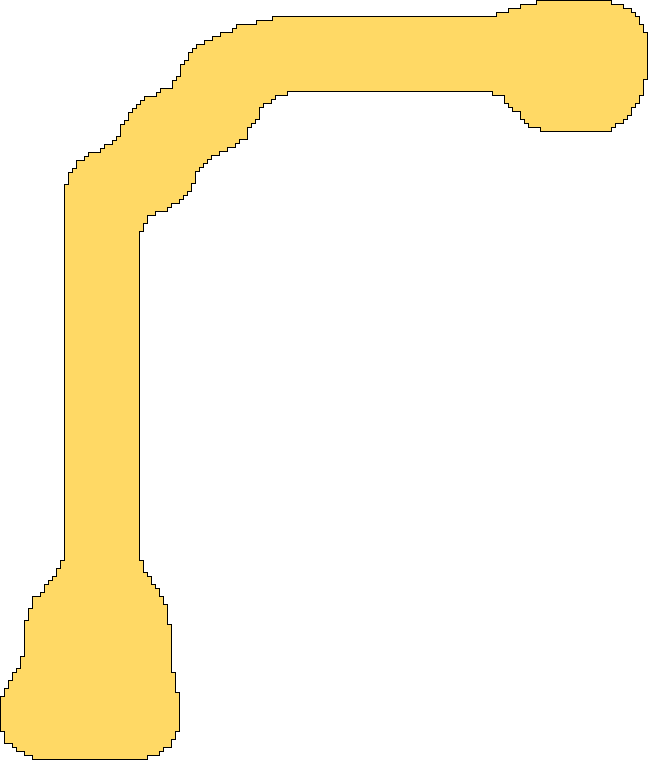}}
	\subcaptionbox*{realistic3\\80$\times$130}[\twidth]
		{\includegraphics[width=\paperwidth*\real{0.80}*\real{\scale}]{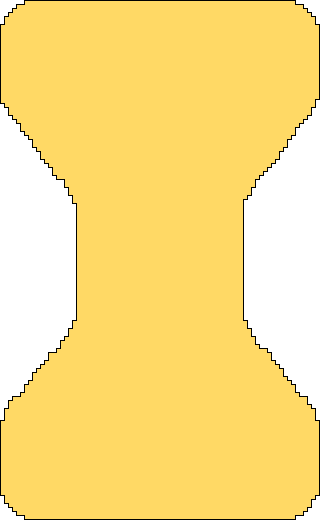}}
	\subcaptionbox*{realistic4\\87$\times$258}[\twidth]
		{\includegraphics[width=\paperwidth*\real{0.87}*\real{\scale}]{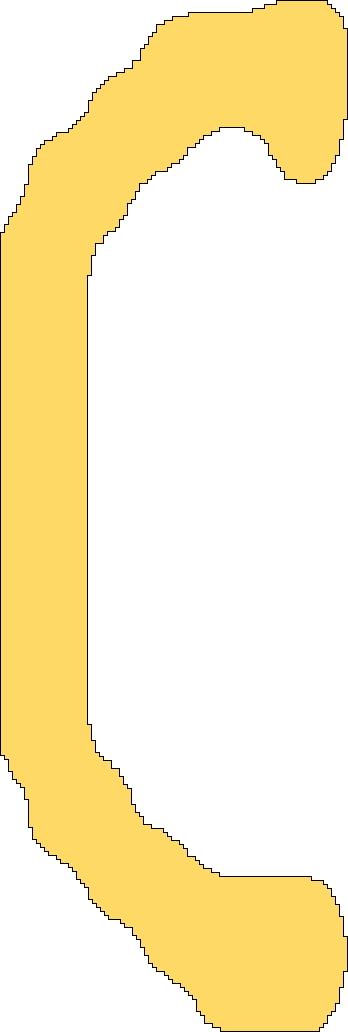}}
	\subcaptionbox*{realistic5\\152$\times$157}[\twidth]
		{\includegraphics[width=\paperwidth*\real{1.52}*\real{\scale}]{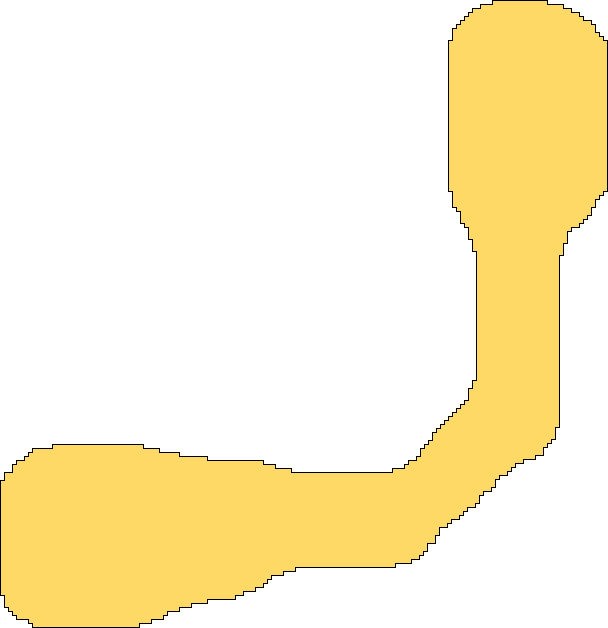}}
	\subcaptionbox*{typical1\\111$\times$55}[\twidth]
		{\includegraphics[width=\paperwidth*\real{1.11}*\real{\scale}]{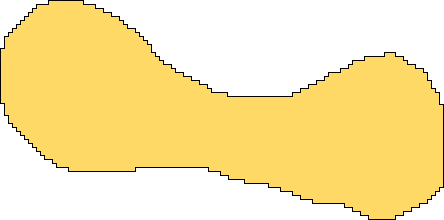}}
	\subcaptionbox*{typical10\\47$\times$190}[\twidth]
		{\includegraphics[width=\paperwidth*\real{0.47}*\real{\scale}]{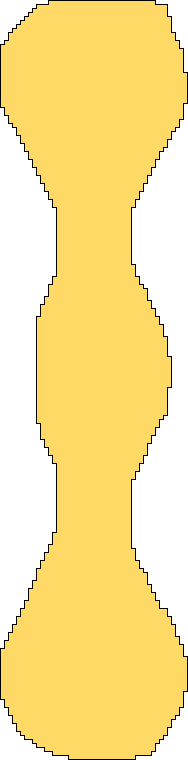}}
	\subcaptionbox*{typical2\\180$\times$222}[\twidth]
		{\includegraphics[width=\paperwidth*\real{1.80}*\real{\scale}]{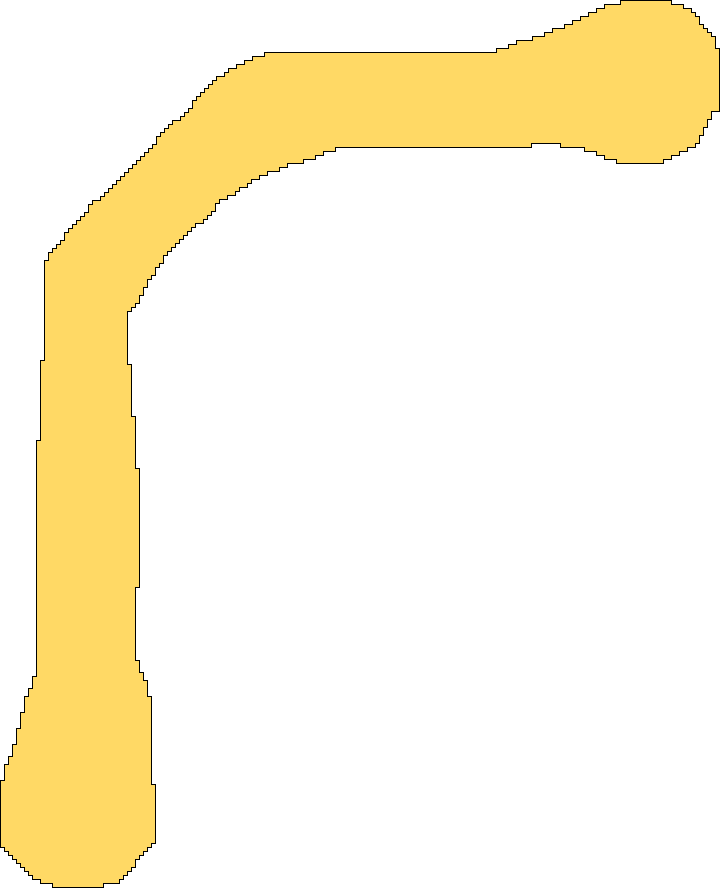}}
	\subcaptionbox*{typical3\\75$\times$95}[\twidth]
		{\includegraphics[width=\paperwidth*\real{0.75}*\real{\scale}]{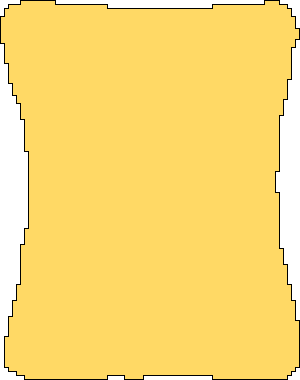}}
	\subcaptionbox*{typical4\\105$\times$280}[\twidth]
		{\includegraphics[width=\paperwidth*\real{1.05}*\real{\scale}]{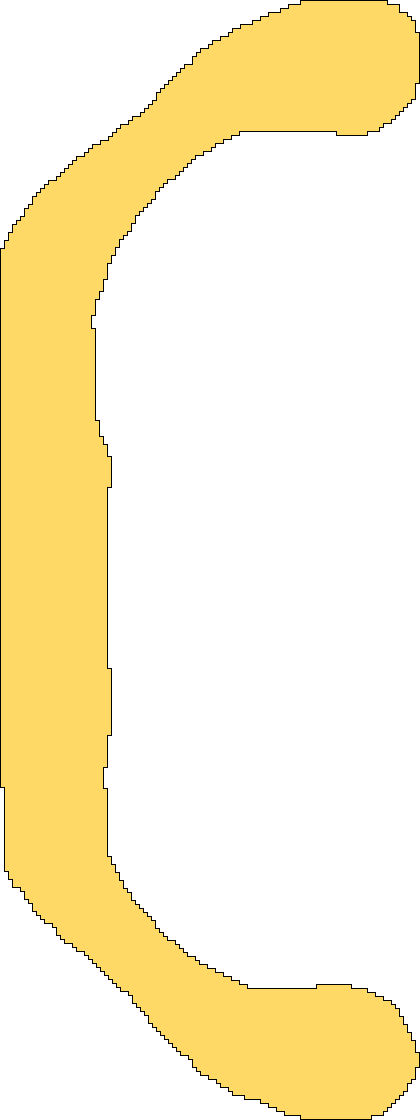}}
	\subcaptionbox*{typical5\\143$\times$125}[\twidth]
		{\includegraphics[width=\paperwidth*\real{1.43}*\real{\scale}]{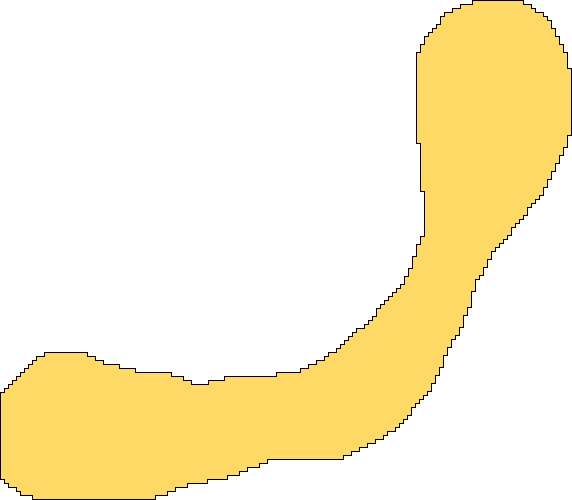}}
	\subcaptionbox*{typical6\\60$\times$60}[\twidth]
		{\includegraphics[width=\paperwidth*\real{0.60}*\real{\scale}]{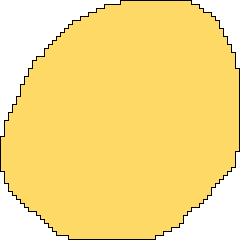}}
	\subcaptionbox*{typical7\\70$\times$75}[\twidth]
		{\includegraphics[width=\paperwidth*\real{0.70}*\real{\scale}]{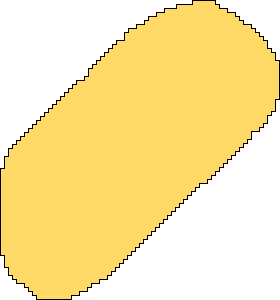}}
	\subcaptionbox*{typical8\\132$\times$137}[\twidth]
		{\includegraphics[width=\paperwidth*\real{1.32}*\real{\scale}]{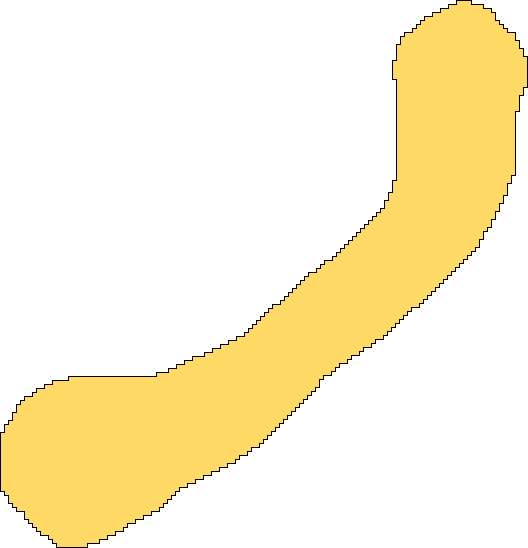}}
	\caption{Binary masks used for \emph{mask fracturing} in~\citep{Chan2014}}
	\label{fig:benchmark-2}
\end{figure}

\renewcommand{\scale}{0.03}
\renewcommand{\twidth}{2cm}

\begin{figure}
	\centering
	\subcaptionbox*{toy1\\160$\times$104}[\twidth]
		{\includegraphics[width=\paperwidth*\real{1.60}*\real{\scale}]{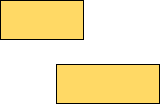}}
	\subcaptionbox*{toy2\\160$\times$140}[\twidth]
		{\includegraphics[width=\paperwidth*\real{1.60}*\real{\scale}]{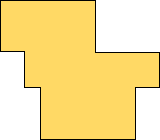}}
	\subcaptionbox*{toy3\\156$\times$160}[\twidth]
		{\includegraphics[width=\paperwidth*\real{1.56}*\real{\scale}]{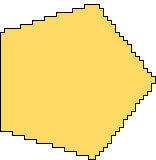}}
	\subcaptionbox*{toy4\\160$\times$140}[\twidth]
		{\includegraphics[width=\paperwidth*\real{1.60}*\real{\scale}]{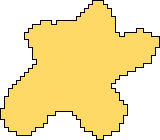}}
	\subcaptionbox*{toy5\\112$\times$160}[\twidth]
		{\includegraphics[width=\paperwidth*\real{1.12}*\real{\scale}]{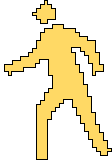}}
	\subcaptionbox*{toy6\\400$\times$348}[\twidth]
		{\includegraphics[width=\paperwidth*\real{4.00}*\real{\scale}]{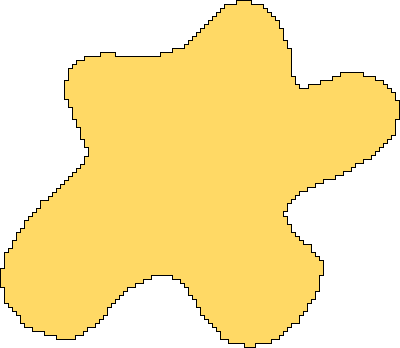}}
	\subcaptionbox*{toy7\\160$\times$156}[\twidth]
		{\includegraphics[width=\paperwidth*\real{1.60}*\real{\scale}]{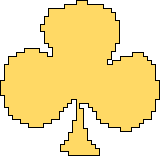}}
	\subcaptionbox*{toy8\\152$\times$160}[\twidth]
		{\includegraphics[width=\paperwidth*\real{1.52}*\real{\scale}]{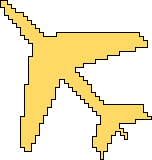}}
	\subcaptionbox*{toy9\\160$\times$128}[\twidth]
		{\includegraphics[width=\paperwidth*\real{1.60}*\real{\scale}]{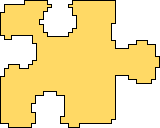}}
	\subcaptionbox*{toy10\\200$\times$200}[\twidth]
		{\includegraphics[width=\paperwidth*\real{2.00}*\real{\scale}]{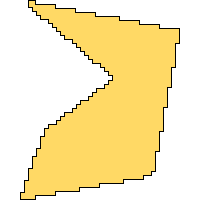}}
	\subcaptionbox*{toy11\\160$\times$160}[\twidth]
		{\includegraphics[width=\paperwidth*\real{1.60}*\real{\scale}]{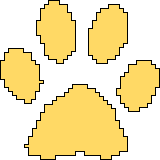}}
	\subcaptionbox*{toy12\\160$\times$140}[\twidth]
		{\includegraphics[width=\paperwidth*\real{1.60}*\real{\scale}]{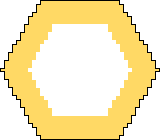}}
	\subcaptionbox*{toy13\\160$\times$140}[\twidth]
		{\includegraphics[width=\paperwidth*\real{1.60}*\real{\scale}]{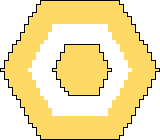}}
	\subcaptionbox*{toy14\\180$\times$200}[\twidth]
		{\includegraphics[width=\paperwidth*\real{1.80}*\real{\scale}]{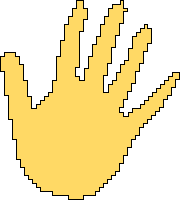}}
	\caption{Generated binary masks}
	\label{fig:benchmark-3}
\end{figure}

\renewcommand{\scale}{0.05}
\renewcommand{\twidth}{2.5cm}

\begin{figure}
	\centering
	\subcaptionbox*{bat\\121$\times$118}[\twidth]
		{\includegraphics[width=\paperwidth*\real{1.21}*\real{\scale}]{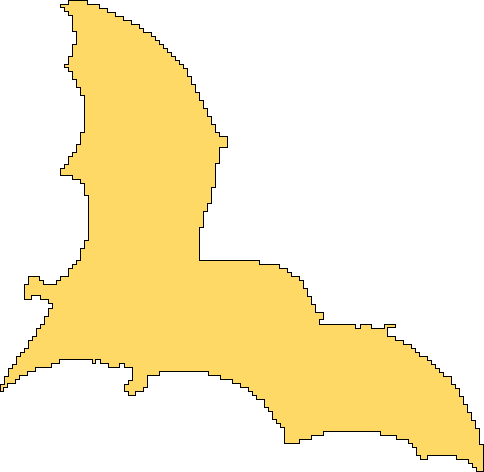}}
	\subcaptionbox*{device5\\139$\times$142}[\twidth]
		{\includegraphics[width=\paperwidth*\real{1.39}*\real{\scale}]{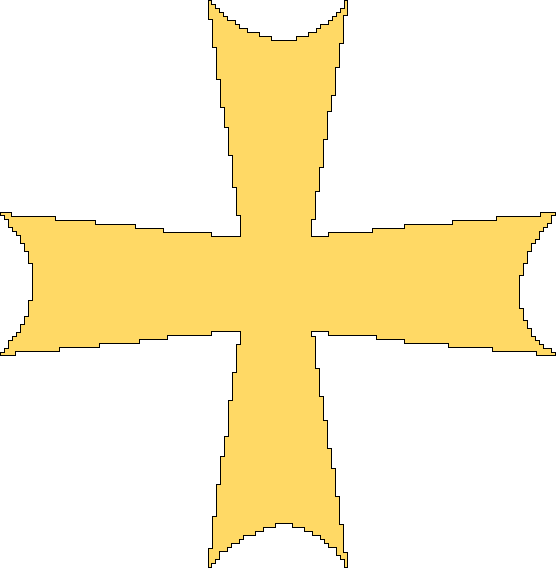}}
	\subcaptionbox*{dog\\174$\times$130}
		{\includegraphics[width=\paperwidth*\real{1.74}*\real{\scale}]{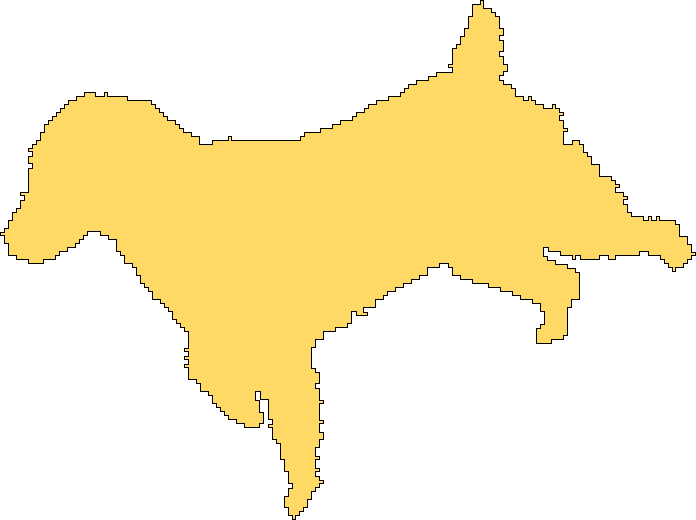}}
	\subcaptionbox*{key\\150$\times$72}
		{\includegraphics[width=\paperwidth*\real{1.50}*\real{\scale}]{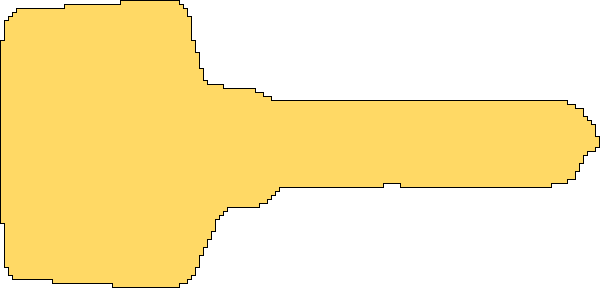}}
	\subcaptionbox*{misk\\166$\times$201}[\twidth]
		{\includegraphics[width=\paperwidth*\real{1.66}*\real{\scale}]{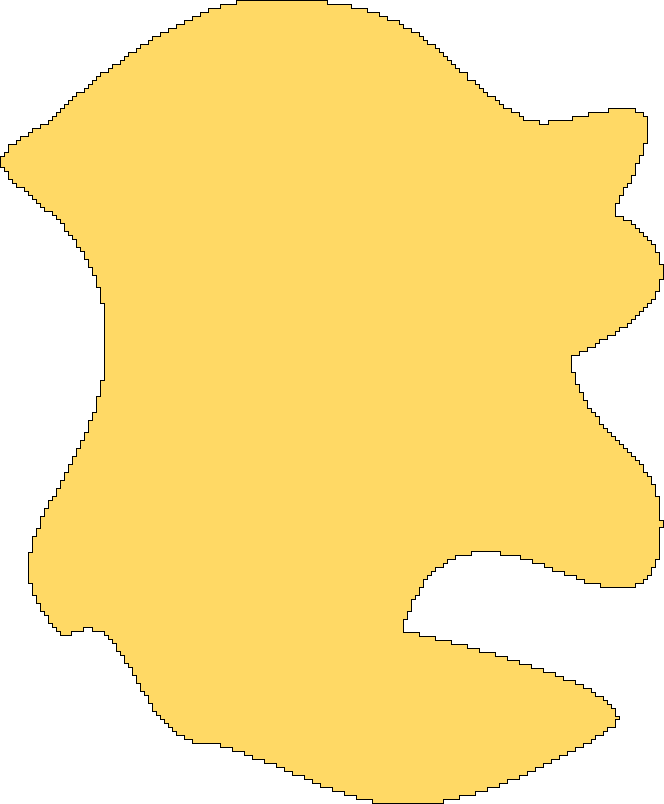}}
	\caption{Subset of MPEG7 shape dataset categories}
	\label{fig:benchmark-4}
\end{figure}	

\renewcommand{\scale}{0.03}
\renewcommand{\twidth}{3cm}

\begin{figure}
	\centering
	\subcaptionbox*{79510\\242$\times$278}[\twidth]
		{\includegraphics[width=\paperwidth*\real{2.42}*\real{\scale}]{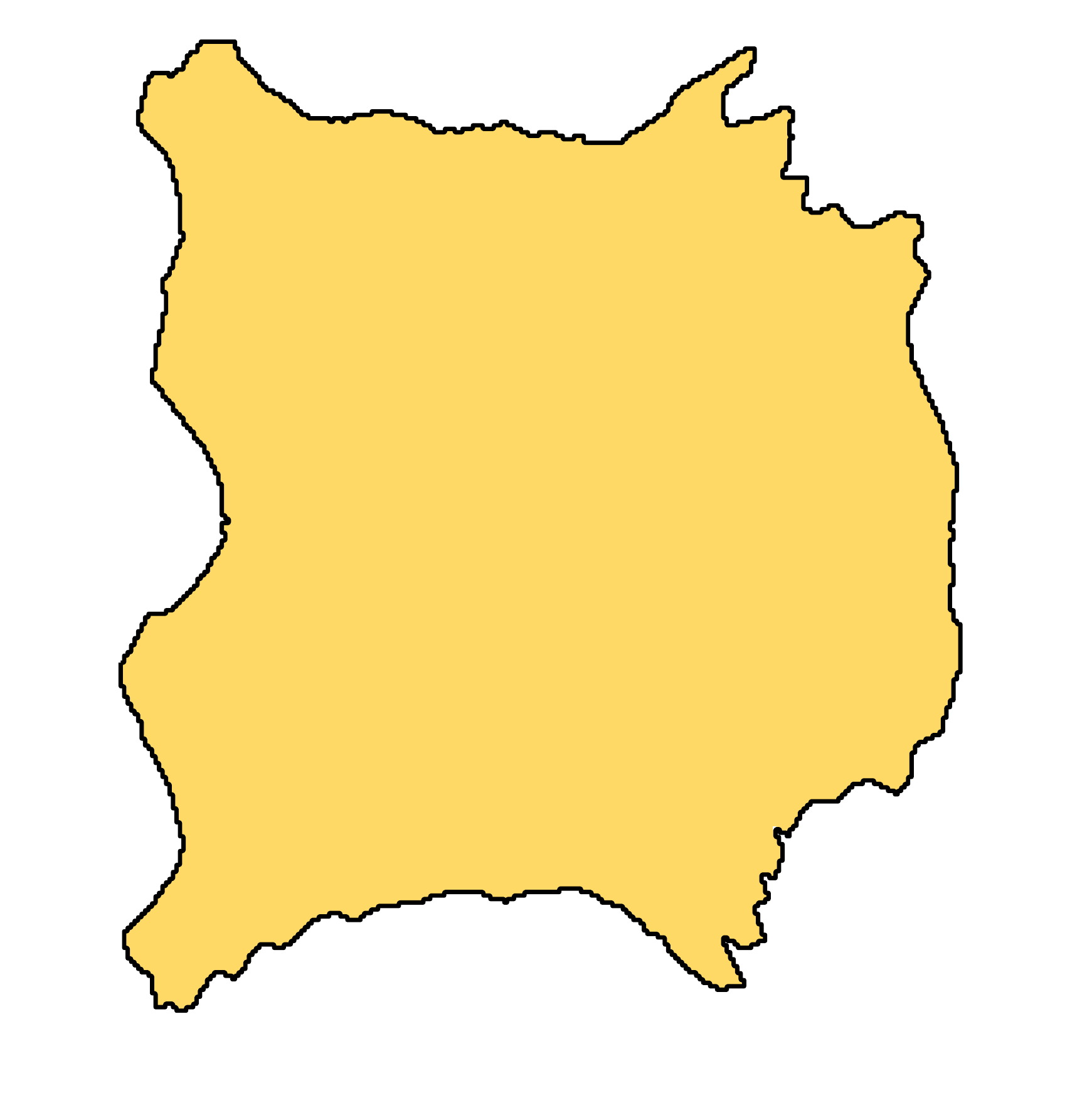}}
	\subcaptionbox*{79611\\263$\times$268}[\twidth]
		{\includegraphics[width=\paperwidth*\real{2.63}*\real{\scale}]{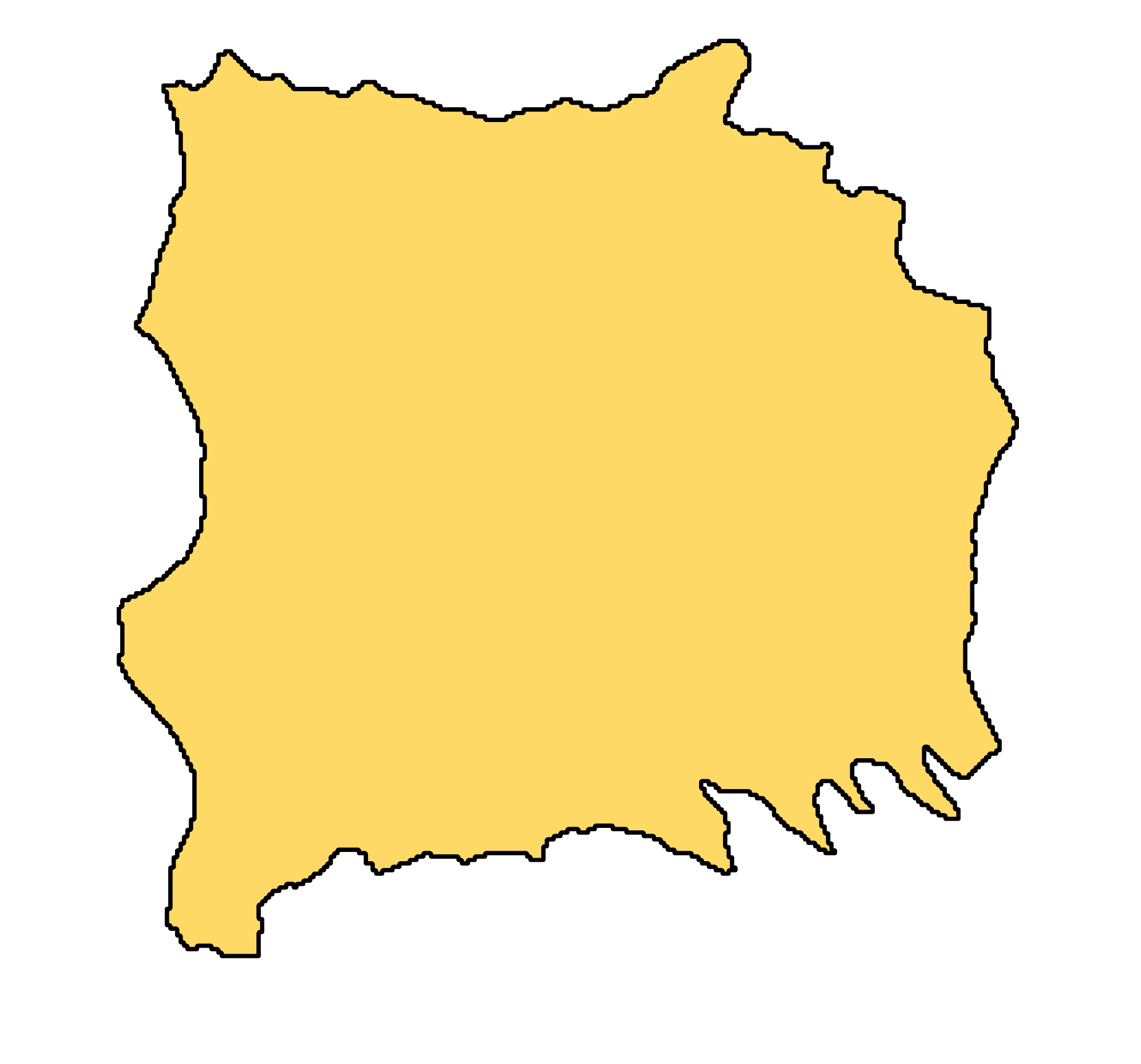}}
	\subcaptionbox*{79712\\230$\times$268}[\twidth]
		{\includegraphics[width=\paperwidth*\real{2.30}*\real{\scale}]{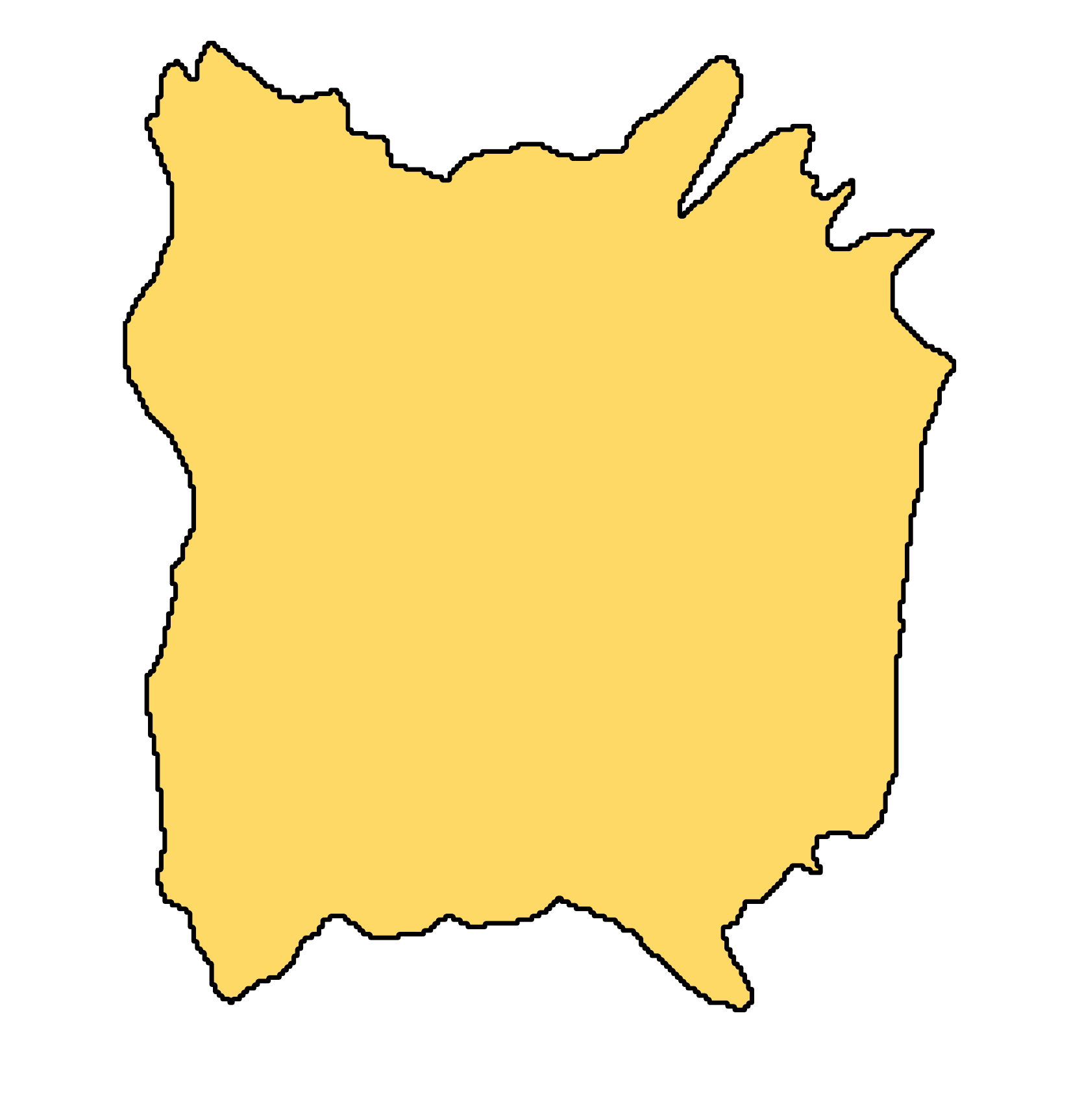}}
	\subcaptionbox*{79813\\265$\times$263}[\twidth]
		{\includegraphics[width=\paperwidth*\real{2.65}*\real{\scale}]{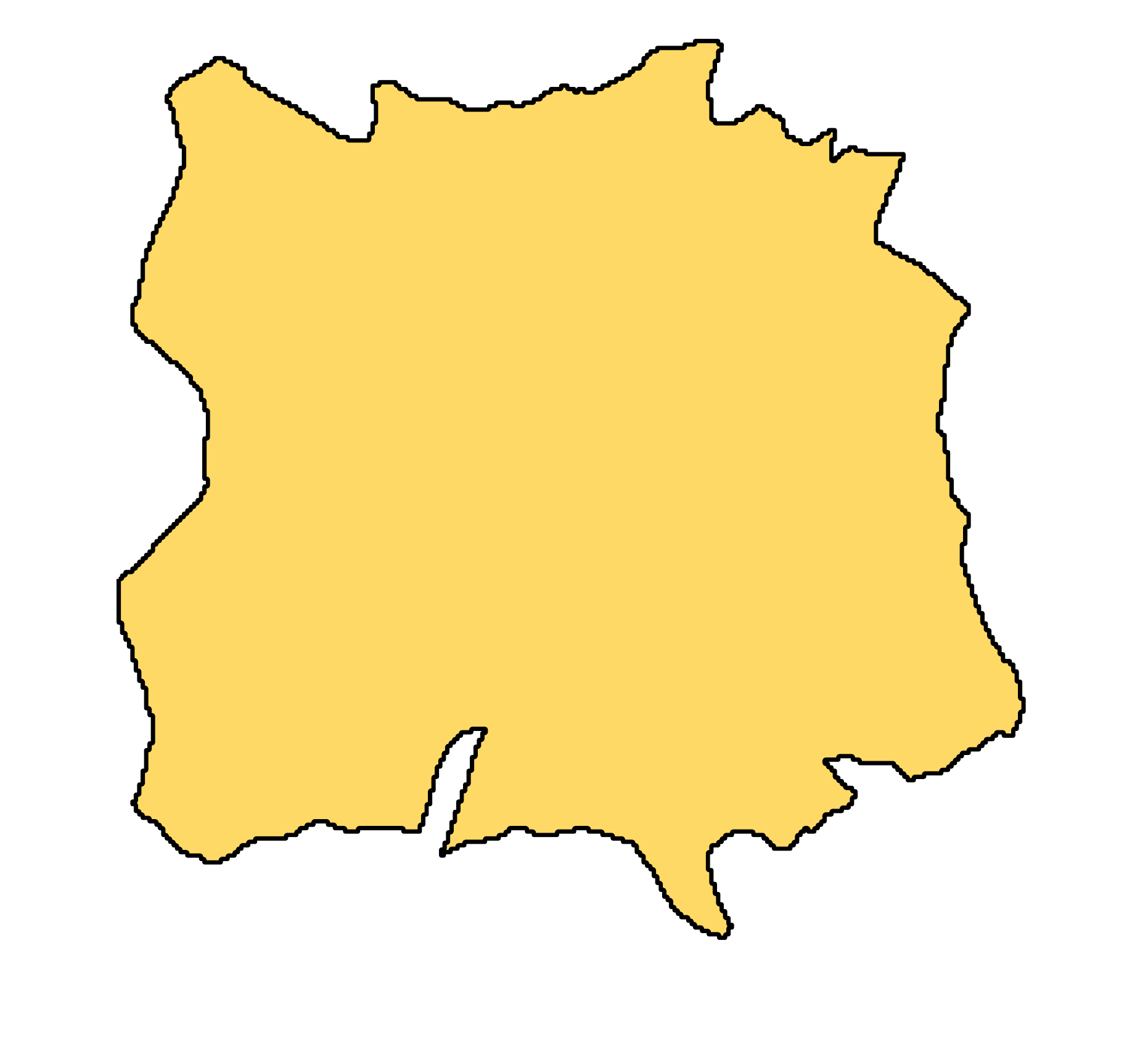}}
	\subcaptionbox*{79914\\267$\times$270}[\twidth]
		{\includegraphics[width=\paperwidth*\real{2.67}*\real{\scale}]{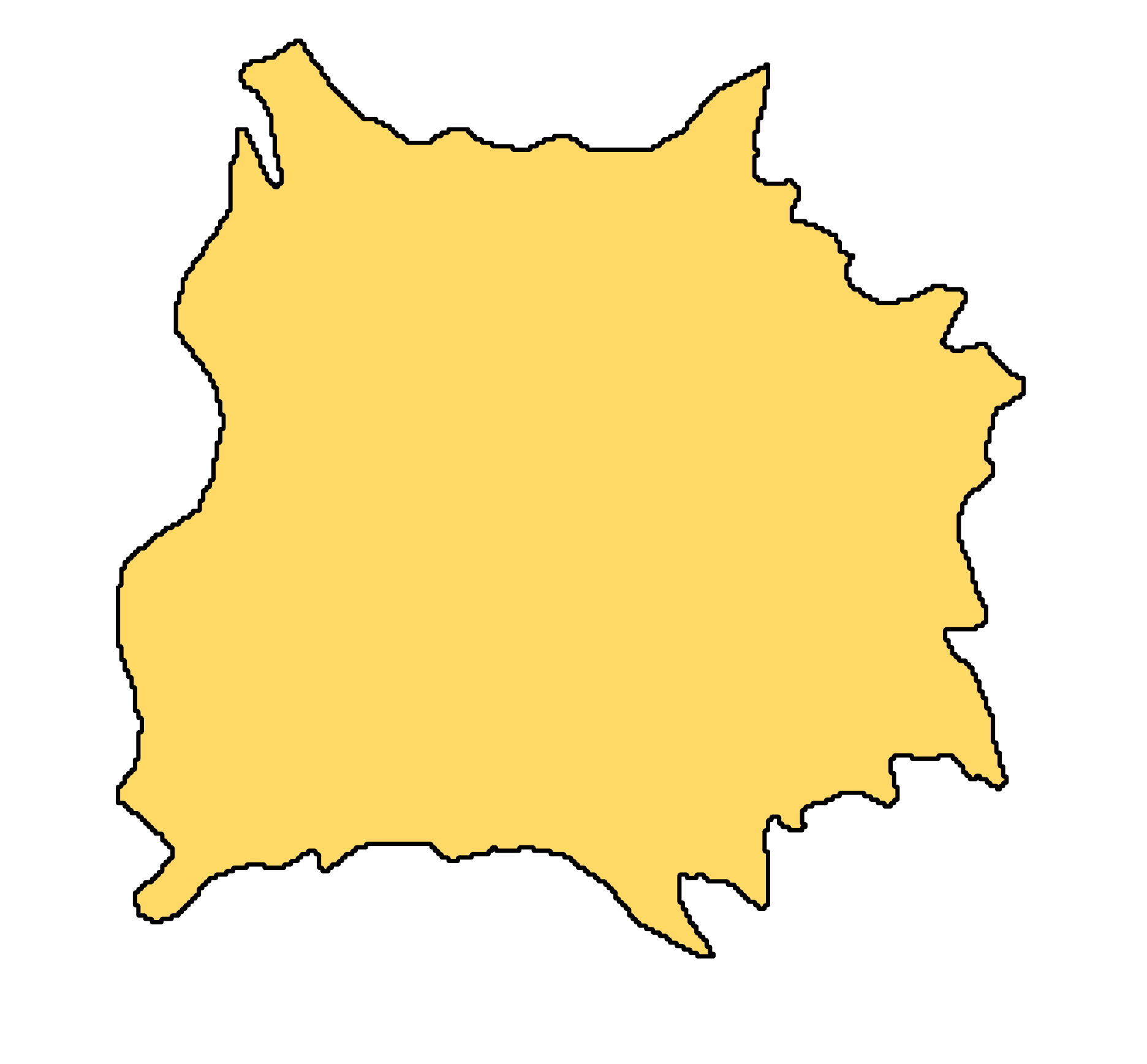}}
	\subcaptionbox*{79915\\250$\times$245}[\twidth]
		{\includegraphics[width=\paperwidth*\real{2.50}*\real{\scale}]{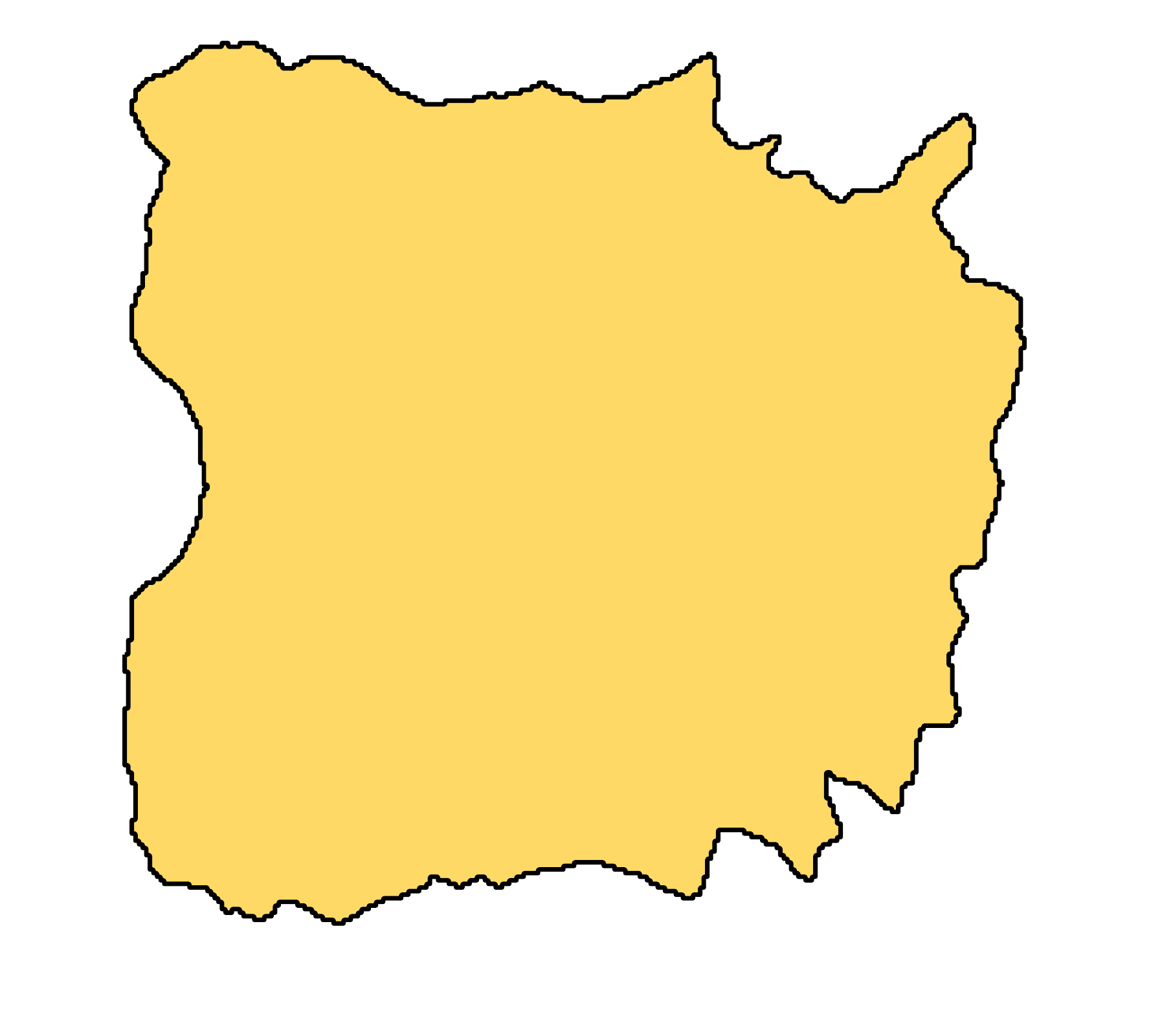}}
	\subcaptionbox*{79916\\261$\times$278}[\twidth]
		{\includegraphics[width=\paperwidth*\real{2.61}*\real{\scale}]{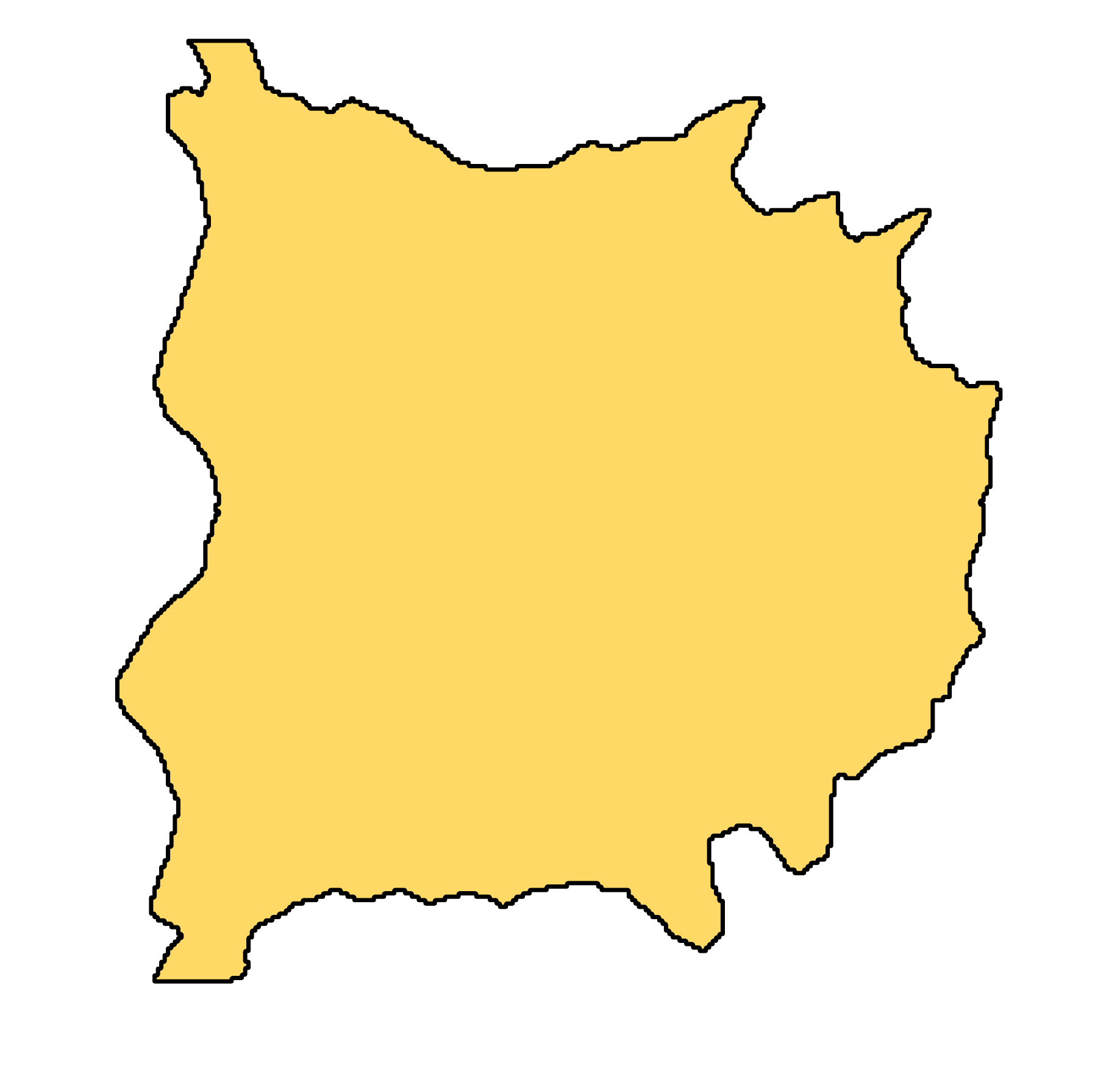}}
	\subcaptionbox*{79917\\253$\times$256}[\twidth]
		{\includegraphics[width=\paperwidth*\real{2.53}*\real{\scale}]{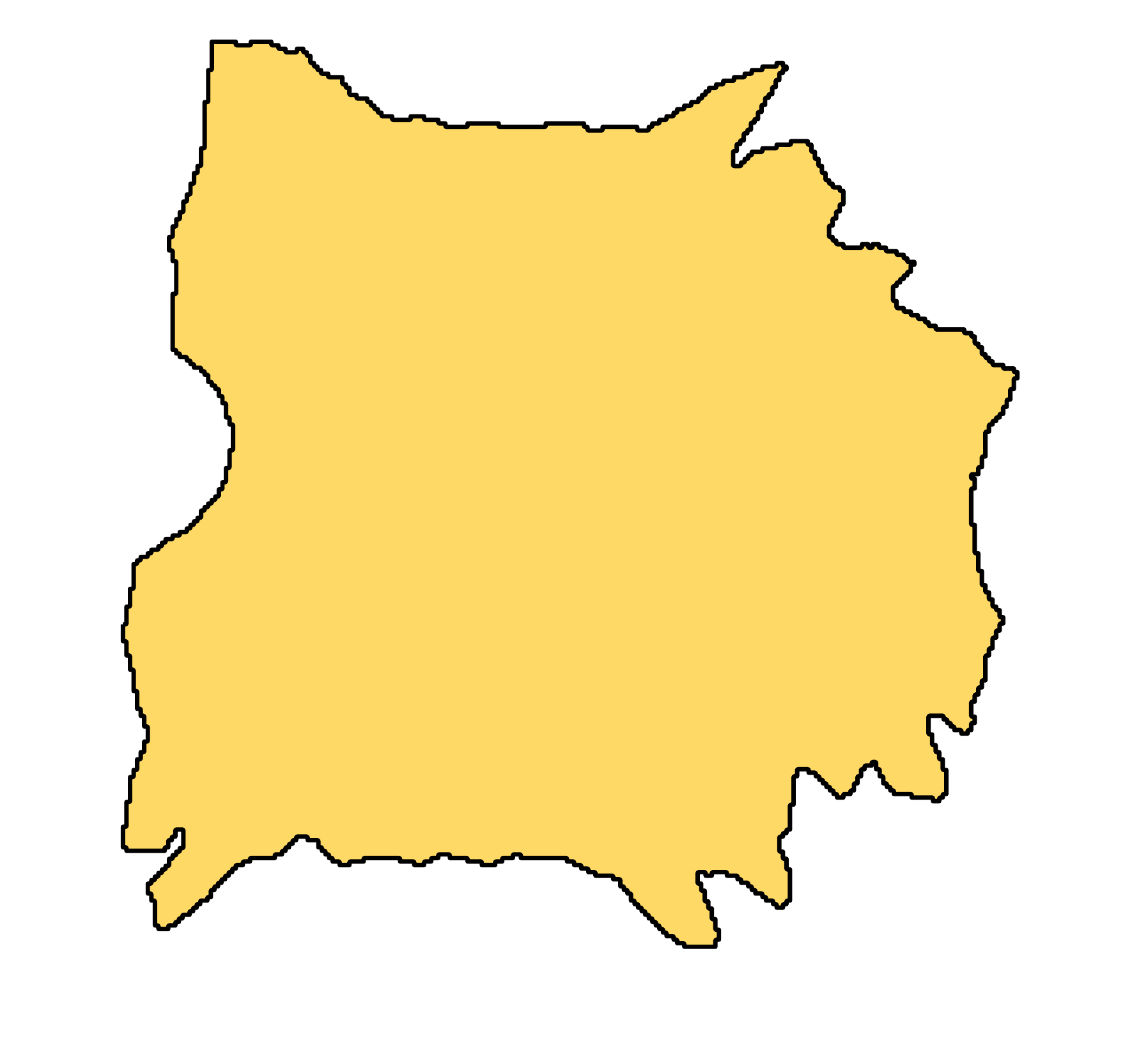}}
	\subcaptionbox*{79918\\269$\times$276}[\twidth]
		{\includegraphics[width=\paperwidth*\real{2.69}*\real{\scale}]{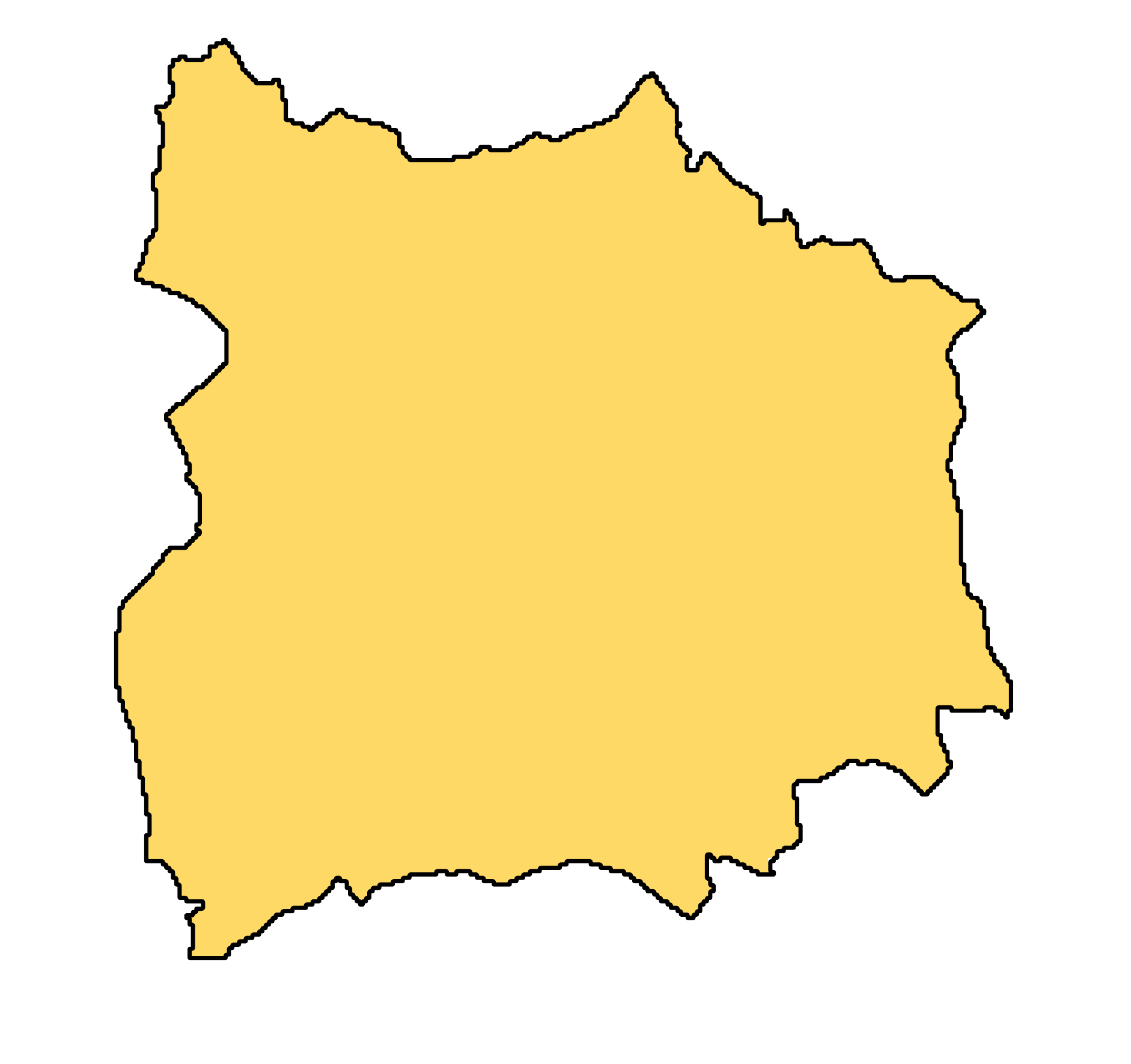}}
	\subcaptionbox*{79919\\240$\times$263}[\twidth]
		{\includegraphics[width=\paperwidth*\real{2.40}*\real{\scale}]{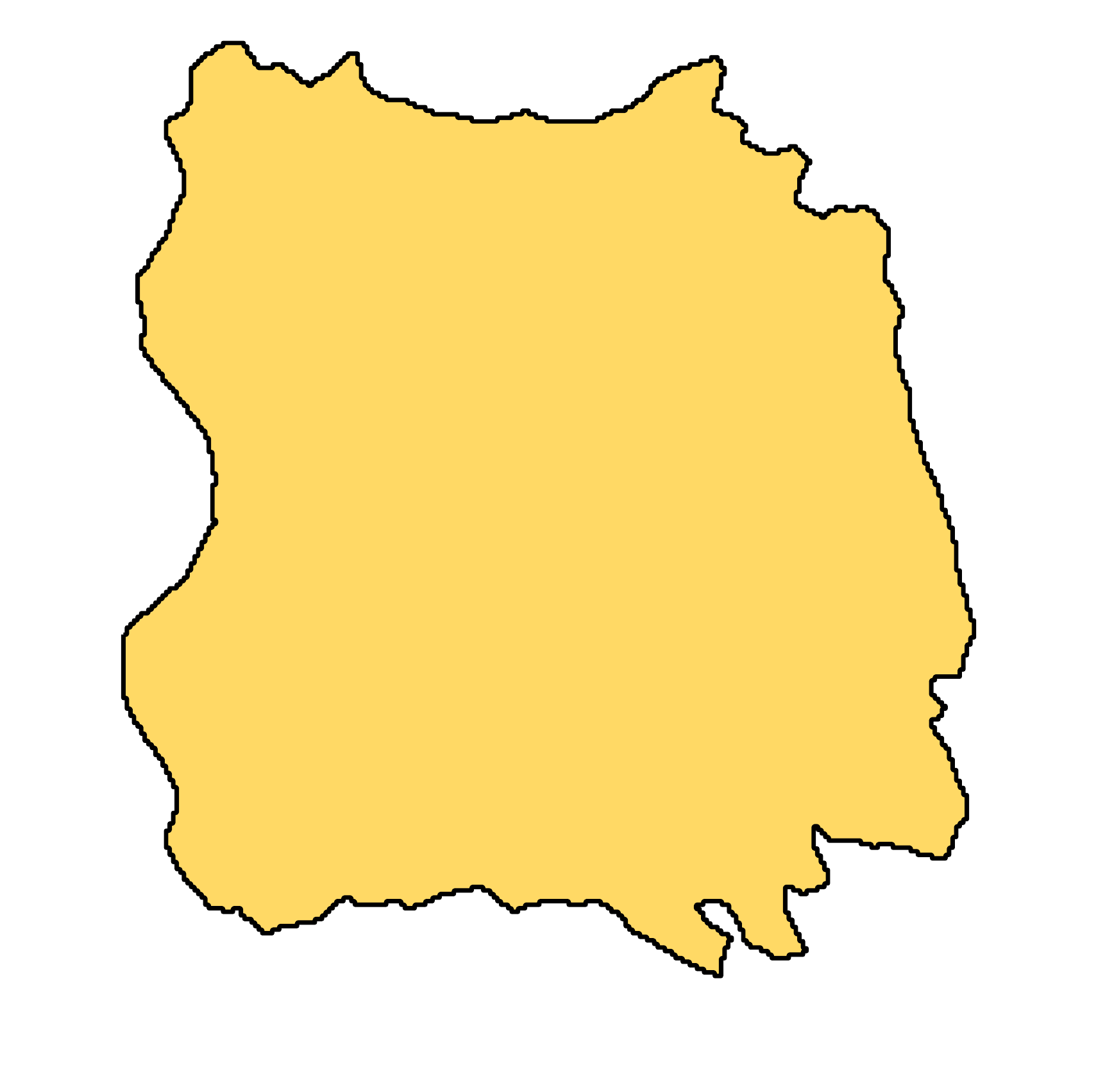}}
	 \caption{Leather master surfaces of industrial nesting problems}\label{fig:benchmark-5}
	\label{fig:benchmark-hide}
\end{figure}

\newcommand{\rulesep}{\unskip\ \vrule\ }

\begin{figure}
  \centering
  \begin{subfigure}[b]{0.12\textwidth}
    \centering
    \includegraphics[width=0.5\textwidth]{typical4}
    \caption*{Image}
  \end{subfigure}
  \rulesep
  \begin{subfigure}[b]{0.12\textwidth}
    \centering
    \includegraphics[width=0.5\textwidth]{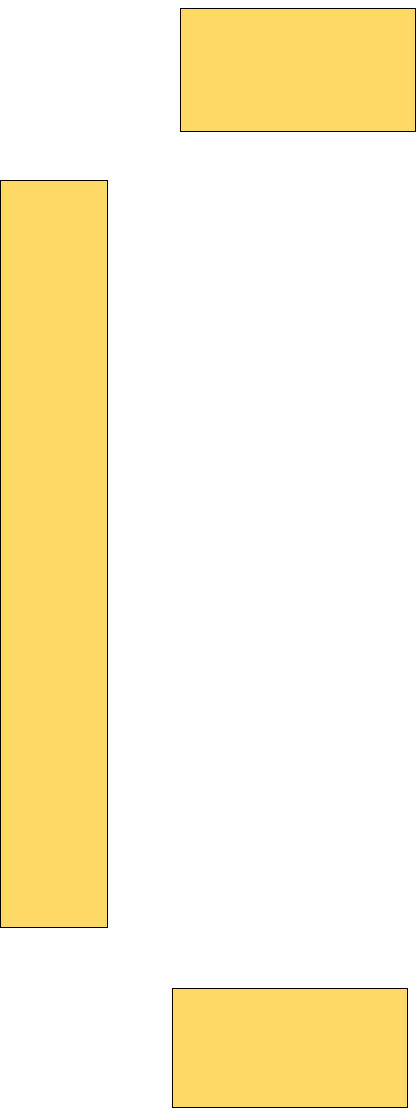}
    \caption*{BP}
  \end{subfigure}
  \begin{subfigure}[b]{0.12\textwidth}
    \centering
    \includegraphics[width=0.5\textwidth]{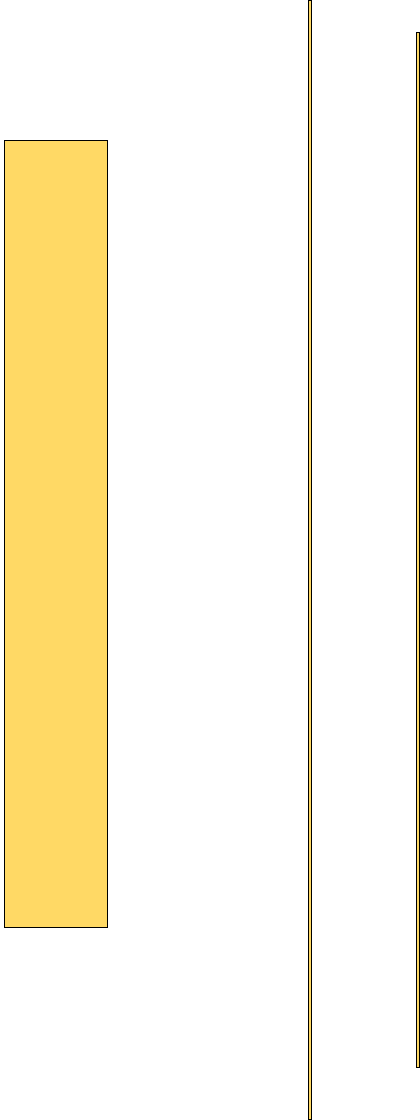}
    \caption*{SF}
  \end{subfigure}
  \begin{subfigure}[b]{0.12\textwidth}
    \centering
    \includegraphics[width=0.5\textwidth]{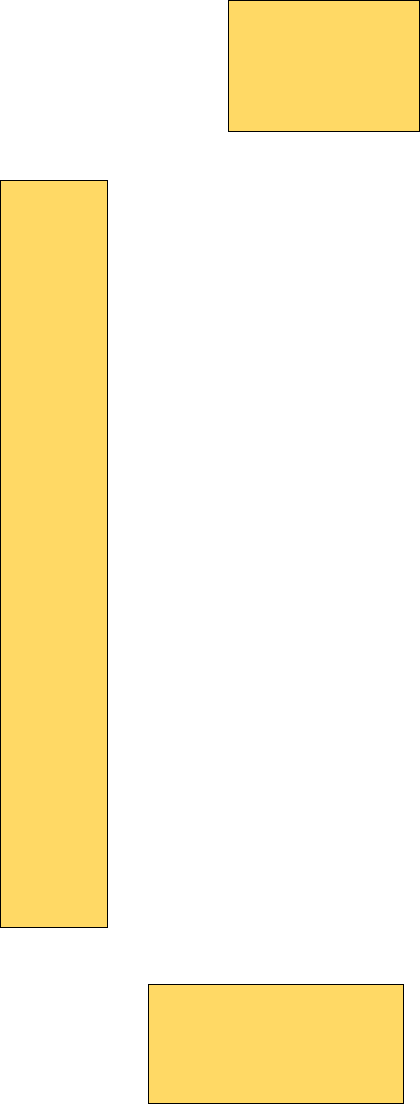}
    \caption*{FAST}
  \end{subfigure}
  \begin{subfigure}[b]{0.12\textwidth}
    \centering
    \includegraphics[width=0.5\textwidth]{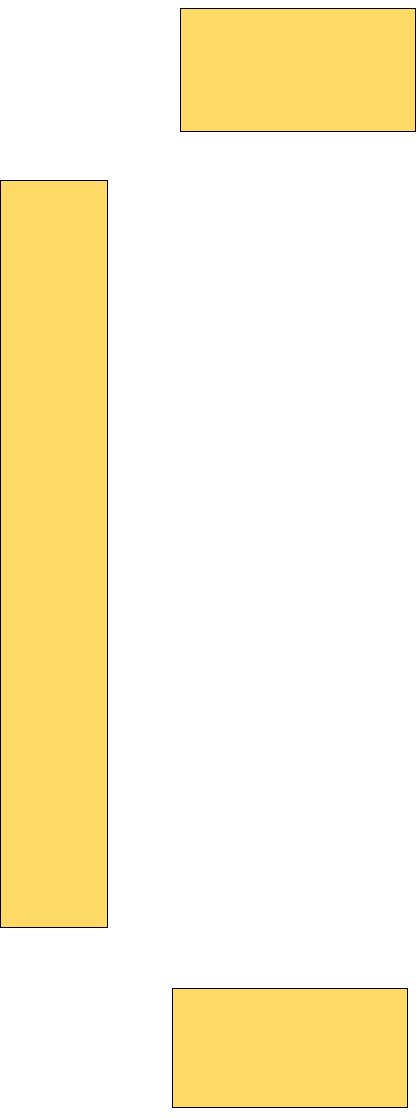}
    \caption*{CSA}
  \end{subfigure}    
  \caption{Sample nonconvex image, typical4, where SF performs poorly for $K=3$}
  \label{fig:example-1}
\end{figure}

\definecolor{lineblue}{RGB}{75,172,198}

\begin{figure}
  \centering
  \begin{subfigure}[b]{0.12\textwidth}
    \centering
    \includegraphics[width=0.5\textwidth]{artificial1}
    \caption*{Input}
  \end{subfigure}
  \rulesep
  \begin{subfigure}[b]{0.12\textwidth}
    \centering
    \includegraphics[width=0.5\textwidth]{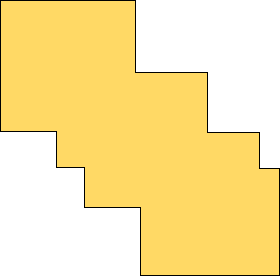}
    \caption*{BP}
  \end{subfigure}
  \begin{subfigure}[b]{0.12\textwidth}
    \centering
    \includegraphics[width=0.5\textwidth]{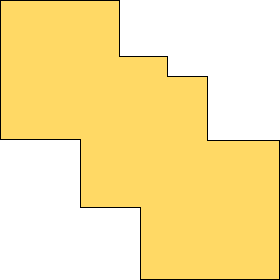}
    \caption*{SF}
  \end{subfigure}
  \begin{subfigure}[b]{0.12\textwidth}
    \centering
    \includegraphics[width=0.5\textwidth]{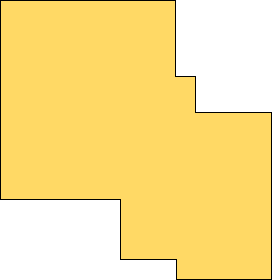}
    \caption*{FAST}
  \end{subfigure}
  \begin{subfigure}[b]{0.12\textwidth}
    \centering
    \includegraphics[width=0.5\textwidth]{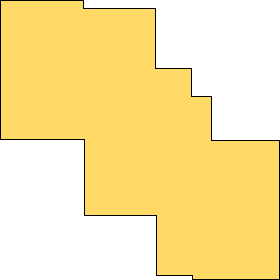}
    \caption*{CSA}
  \end{subfigure}    
  \caption{Sample image, artificial1, where FAST performs poorly for $K=5$}
  \label{fig:example-2}
\end{figure}

\begin{figure}
  \centering
  \begin{subfigure}[b]{0.15\textwidth}
    \centering
    \includegraphics[width=0.5\textwidth]{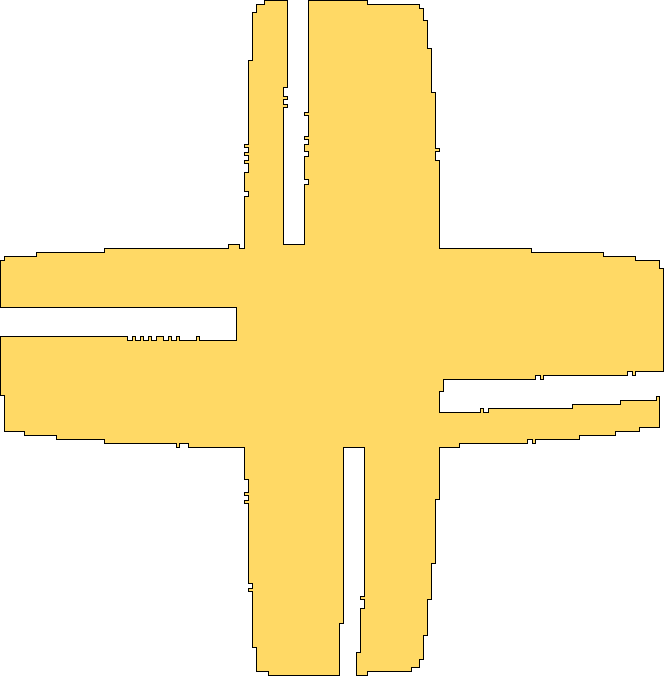}
    \caption*{Input}
  \end{subfigure}
  \rulesep
  \begin{subfigure}[b]{0.15\textwidth}
    \centering
    \includegraphics[width=0.5\textwidth]{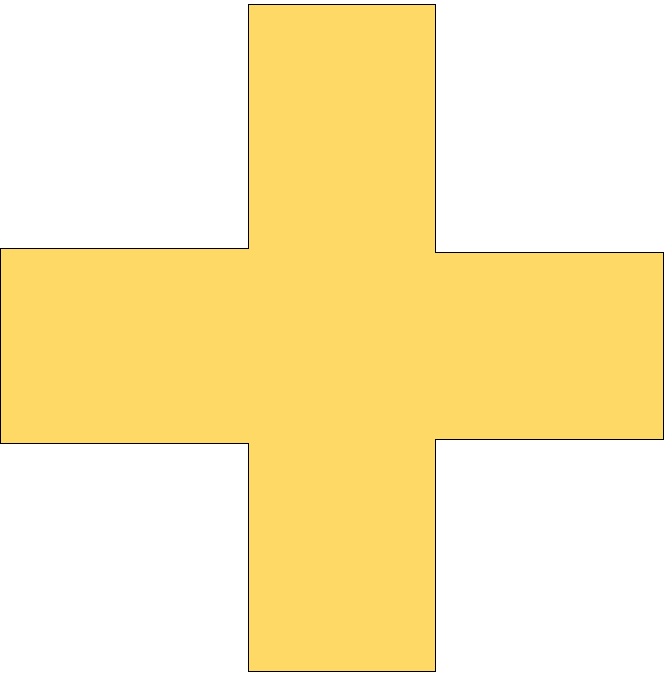}
    \caption*{BP}
  \end{subfigure}
  \begin{subfigure}[b]{0.15\textwidth}
    \centering
    \includegraphics[width=0.5\textwidth]{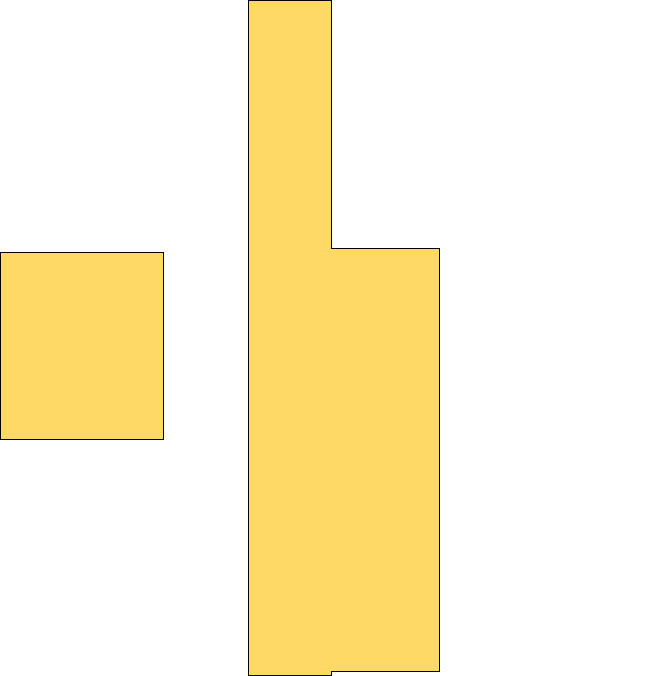}
    \caption*{SF}
  \end{subfigure}
  \begin{subfigure}[b]{0.15\textwidth}
    \centering
    \includegraphics[width=0.5\textwidth]{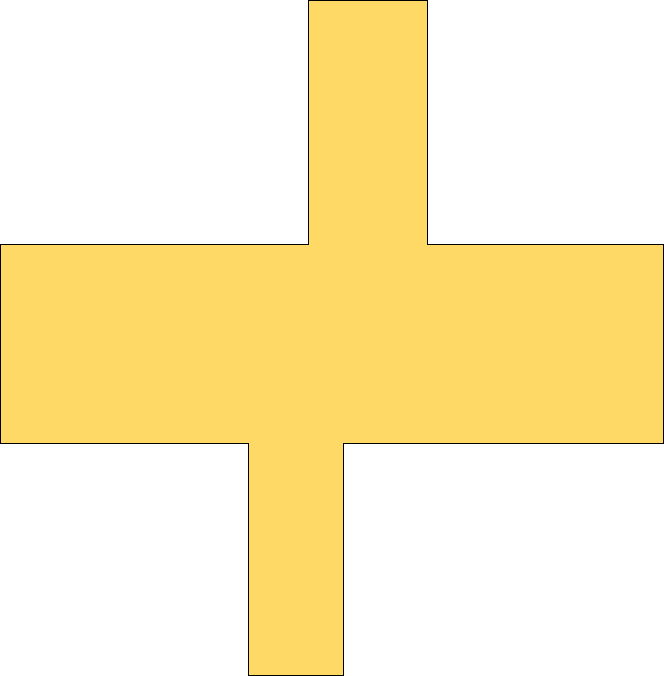}
    \caption*{FAST}
  \end{subfigure}
  \begin{subfigure}[b]{0.15\textwidth}
    \centering
    \includegraphics[width=0.5\textwidth]{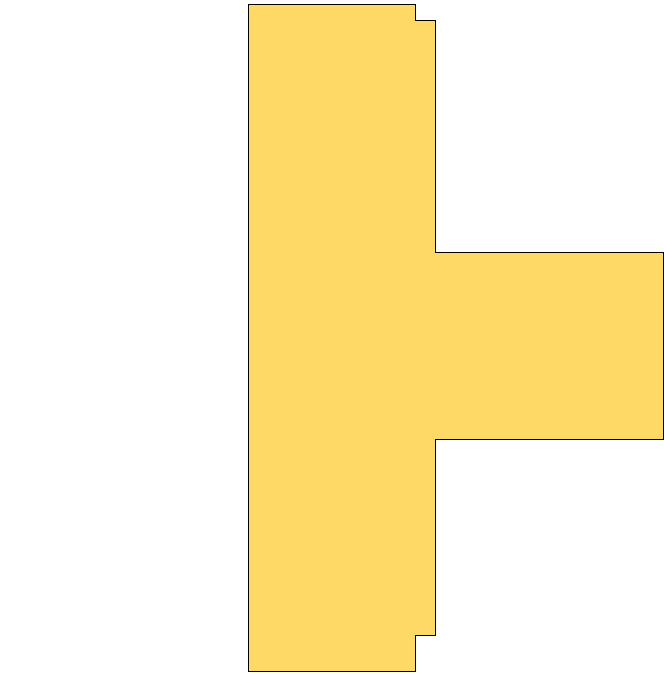}
    \caption*{CSA}
  \end{subfigure}    
  \caption{Sample image, device5-9, where BP finds the optimum for $K=3$}
  \label{fig:example-3}
\end{figure}

\subsection{Implementation details}

In the actual implementation of the heuristics, rectangles are presented as quadruplets, $(x,y,w,h)$ where $x$ and $y$ represent the top left coordinates, and $w$ and $h$ are the width and height of the rectangle. However, in BP, rectangles are presented as matrices (or vectors) with entries set according to expression~(\ref{equ:def-rectangle}).

We have noticed  that starting column configurations do not affect the number of generated columns and convergence behavior significantly in BP. Our column generation process shows rapid progress in the early iterations, especially with the dual smoothing scheme we have implemented. 

We have considered various operations to speed up the optimization procedure. Whenever RLPM is modified via column generation or branching, instead of solving it from scratch we start the optimization from the basis obtained in the previous step (i.e. warm start). Multiple columns (up to 10) are generated during pricing in order to speed up the computations. We also keep track of how long each column has been in RLPM. A column that does not enter into the basis for a certain amount of time, is discarded. If a previously discarded column is regenerated, we double the lifespan of that column when we add it back to RLPM. This technique limits the amount of columns present in RLPM and speeds up the optimization dramatically.

In the pricing problem, we have used priority queue data structure, as usually done, to get the next promising rectangle set. However, rectangle sets that contain few rectangles dominate the overall cost of the algorithm. This is because of the overhead associated with the priority queue and the data structure that holds the rectangle set. When the number of rectangles are sufficiently small, we switch to a naive algorithm that iterates through all rectangles in the set. Iterated rectangles are added to another priority queue with given capacity that determines the number of columns to return from the pricing problem. We have experimentally found that the best threshold is 256 for switching to the naive algorithm for our hardware. To make this hybrid algorithm work with the branching constraints, we also check if all the rectangles in the rectangle set satisfy the constraints before switching to the naive algorithm. This can be performed quickly by checking the pixels involved in the constraints against the rectangle set (see Figure~\ref{fig:branching_conditions} for valid configurations). 

Experiments are carried out on a computer with Core i7 3.07 GHz CPU and 8 GB RAM. BP and CSA are implemented using Java; SF and FAST  are implemented in C$^{{++}}$. Three software libraries are used: OpenCV~\citep{Kaehler2013} for operations on images, \cite{Kabeja2018} for rasterization of the images given in DXF format and Gurobi~\citep{gurobi} for BP. \textcolor{red}{ 
 The running time of all tests are limited to 1 hour, except for the runs related to the performance of the branching rules (i.e. Table 1), for which the run time limit is 1.5 hours. }

\subsection{Observations}

The computational results can be grouped in two major categories. The first one is for inspecting the effect of branching rules on the efficiency of BP. The second one is for analyzing the performance of the solution methods with respect to their efficiency and accuracy. 

\subsubsection{Branching rules}

We should point that in $652$ out of the $685$ runs  of BP on the first four data sets, the initial LP relaxation gives the integer optimal solution. \textcolor{red}{ So we only have 33 cases for comparing the branching rules. } The results are reported in Table~\ref{tab:branching-comparison}, \textcolor{red}{where the} first two columns list data sets and the corresponding blanket size. In the next two columns, we report the number of nodes visited for each rule. RULE 1 and RULE 2 denote branching explicitly or implicitly in the master, respectively. The numbers in the last two columns are simply the percent relative deviations for the LP relaxation lower bound from the best feasible solution computed in 1.5 hours CPU \textcolor{red}{time} limit using BP (i.e. this limit is 1 hour for the rest of experiments). As can be observed, they are quite small, \textcolor{red}{indicating} the tightness of the LP relaxation. Besides, ``none" means $0.00$ \% deviation. This occurs when the LP relaxation have also an integer alternative optimal solution. It is detected by BP later without changing the optimum objective value. \textcolor{red}{Observe that there is a single column for reporting the deviations in the table, because in all instances, both methods find a solution with the same objective value even if they reach the time limit and are forced to stop.  Besides, when the methods finish running within the time limit, they both find an optimal solution which makes the reported objective values equal.}

We have measured the number of nodes visited in branch-and-price using both rules. RULE 2  visited fewer nodes for 15 instances and more nodes for 11 instances. We can claim that, although there is no clear winner, RULE 2 is more efficient than RULE 1 based on the averages reported on the last row of the table. This is expected since RULE 2 is more likely to produce a balanced search tree as mentioned earlier.

\begin{table}[!htbp]
   \caption{Efficiency of the branching rules and the improvement on the initial LP relaxation}
   \begin {center}
   \begin{tabular}{rr r rr r rr r rr}
   \toprule
	& & \phantom{a} & \multicolumn{2}{c}{num. visited nodes} &  \phantom{a} &  \multicolumn{2}{c}{CPU time (secs)} &  & Relative  \\
			\cmidrule(r){4-5}	\cmidrule(r){7-8}
	name & $K$ &   & RULE1 & RULE2 &   & RULE1 & RULE2 & &Dev. (\%) \\
	\midrule
dog-12 & 3 &   & 73 & 115 & & 16 & 12 & & 1.54\\
dog-13 & 3 &   & 79 & 55 & & 18 & 8 & & 1.54\\
key-16 & 3 &   & 3 & 9 & & 210 & 310 & & none\\
key-18 & 3 &   & 3 & 3 & & 4125 & 4398 & & none\\
toy13 & 3 &   & 3 & 351 & & 1 & 17 & & none\\
toy4 & 3 &   & 89 & 43 & & 36 & 20 & & 0.75\\
avatar4 & 5 &   & 7 & 9 & & 1 & 2 & & none\\
device5-13 & 5 &   & 27 & 27 & & 853 & 1416 & & 0.28\\
device5-8 & 5 &   & 18 & 41 & & 5411 & 5402 & & 0.27\\
key-17 & 5 &   & 51 & 47 & & 2736 & 3666 & & 0.09\\
toy12 & 5 &   & 5 & 5 & & 2 & 3 & & none\\
toy14 & 5 &   & 25 & 29 & & 31 & 60 & & 0.25\\
bat-2 & 10 &   & 48 & 37 & & 5405 & 5402 & & 0.05\\
bat-6 & 10 &   & 179 & 61 & & 1920 & 1524 & & 0.13\\
device5-18 & 10 &   & 7 & 3 & & 1289 & 952 & & 0.02\\
dog-12 & 10 &   & 37 & 5 & & 20 & 6 & & 0.24\\
dog-13 & 10 &   & 7 & 17 & & 8 & 12 & & 0.24\\
toy11 & 10 &   & 3 & 5 & & 7 & 11 & & 0.15\\
typical4 & 10 &   & 6 & 6 & & 5407 & 5421 & & none\\
dog-6 & 15 &   & 5 & 7 & & 46 & 62 & & none\\
toy1 & 15 &   & 3 & 23 & & 4 & 19 & & none\\
toy10 & 15 &   & 33 & 13 & & 257 & 175 & & 0.09\\
toy11 & 15 &   & 7 & 13 & & 14 & 13 & & 0.14\\
toy14 & 15 &   & 9 & 3 & & 95 & 68 & & none\\
avatar4 & 20 &   & 9 & 3 & & 2 & 1 & & none\\
dog-12 & 20 &   & 17 & 17 & & 17 & 20 & & 0.22\\
dog-13 & 20 &   & 15 & 9 & & 12 & 14 & & 0.22\\
dog-16 & 20 &   & 7 & 3 & & 7 & 6 & & none\\
dog-17 & 20 &   & 61 & 3 & & 6 & 1 & & 0.44\\
dog-7 & 20 &   & 3 & 3 & & 86 & 90 & & none\\
toy2 & 20 &   & 958 & 5 & & 5428 & 99 & & none\\
toy4 & 20 &   & 15 & 11 & & 97 & 126 & & none\\
toy9 & 20 &   & 3 & 3 & & 61 & 66 & & none\\
\hline
\multicolumn{2}{r}{Mean} && 55.00 & 29.82 & & 1019.04 & 890.98 & &0.20 \\
  \bottomrule
   \end{tabular}
   \end{center} \label{tab:branching-comparison}
\end{table}

\subsubsection{Solution time}
According to the computational results, we can say that the heuristics are very efficient. Because the order of magnitude of running times are different, we do not provide a detailed running time comparison between BP and heuristics. FAST and SF heuristics are the fastest; and  they run in milliseconds for a typical input.  CSA takes about $2.5$ seconds to converge on the average. We force BP to stop running within 1 CPU hour.

As for the BP, column generation shows rapid progress in the early iterations, especially with the dual smoothing scheme. We illustrate this typical behavior of the optimality gap in Figure~\ref{fig:typical_behavior} by exposing the progress in the best lower bound and objective value our algorithm calculates on artificial3 data set for $K=5$. This allows stopping the process earlier using lower and upper bounds on the optimal value.

\begin{figure}[!h]
  \centering
      \begin{subfigure}[b]{0.45\textwidth}
       \includegraphics[width=\textwidth] {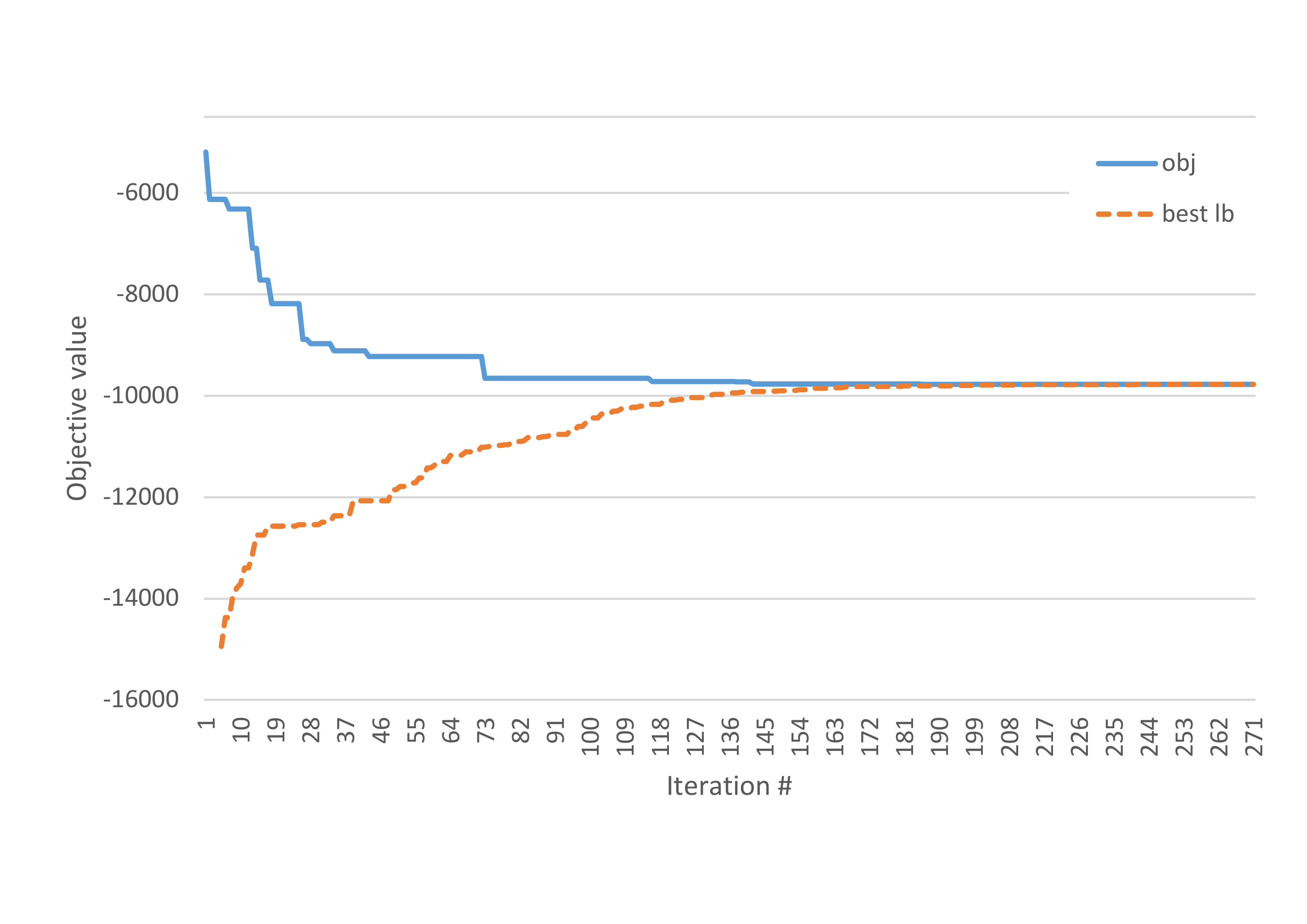}
      \caption{The objective and best lower bound values} \label{fig:typical_behavior}
    \end{subfigure}
    \begin{subfigure}[b]{0.4\textwidth}
       \includegraphics[width=\textwidth] {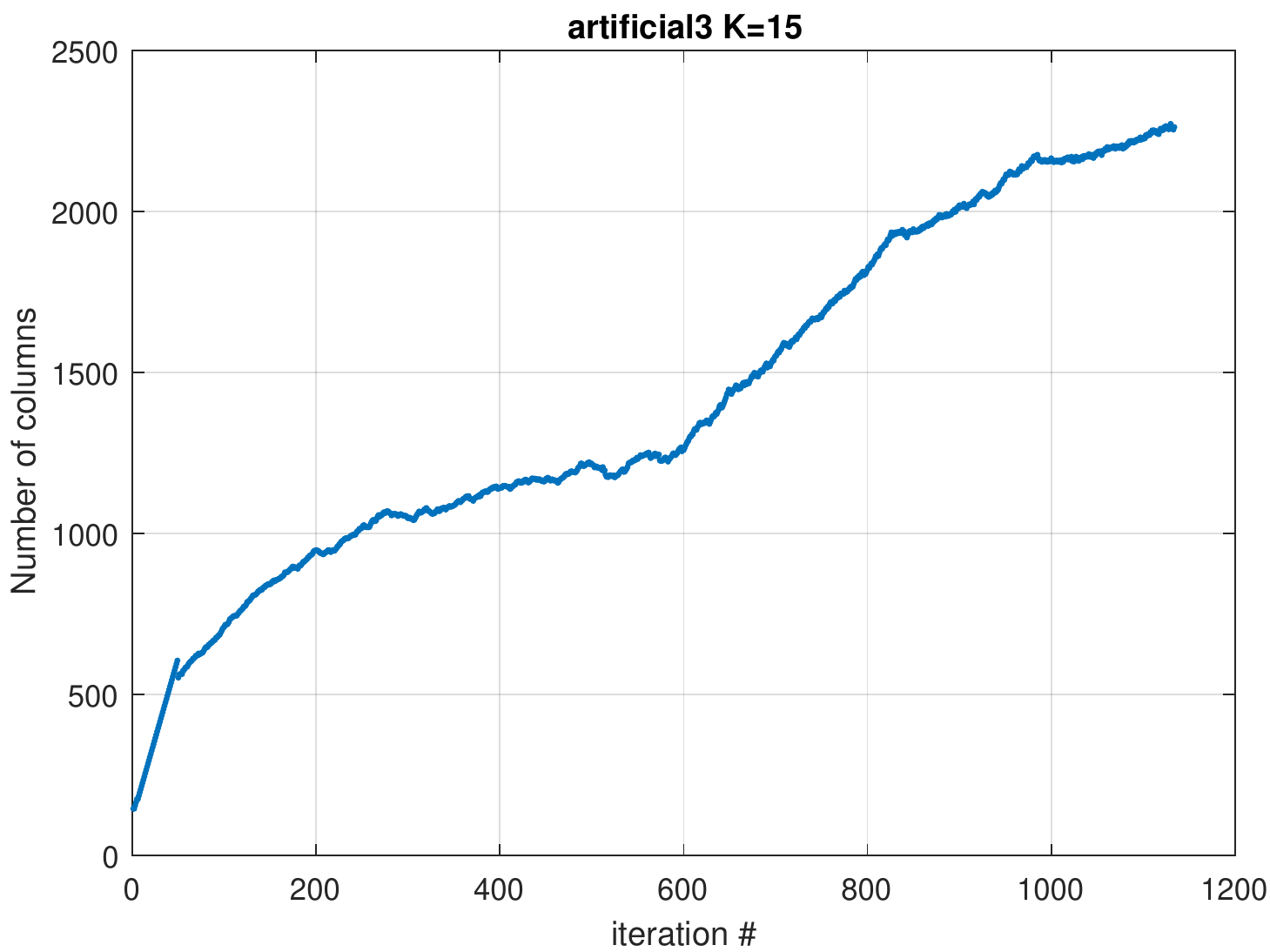}
      \caption{The number of columns in the master problem}\label{fig:column_count}
    \end{subfigure}\\
    \begin{subfigure}[b]{0.4\textwidth}
       \includegraphics[width=\textwidth] {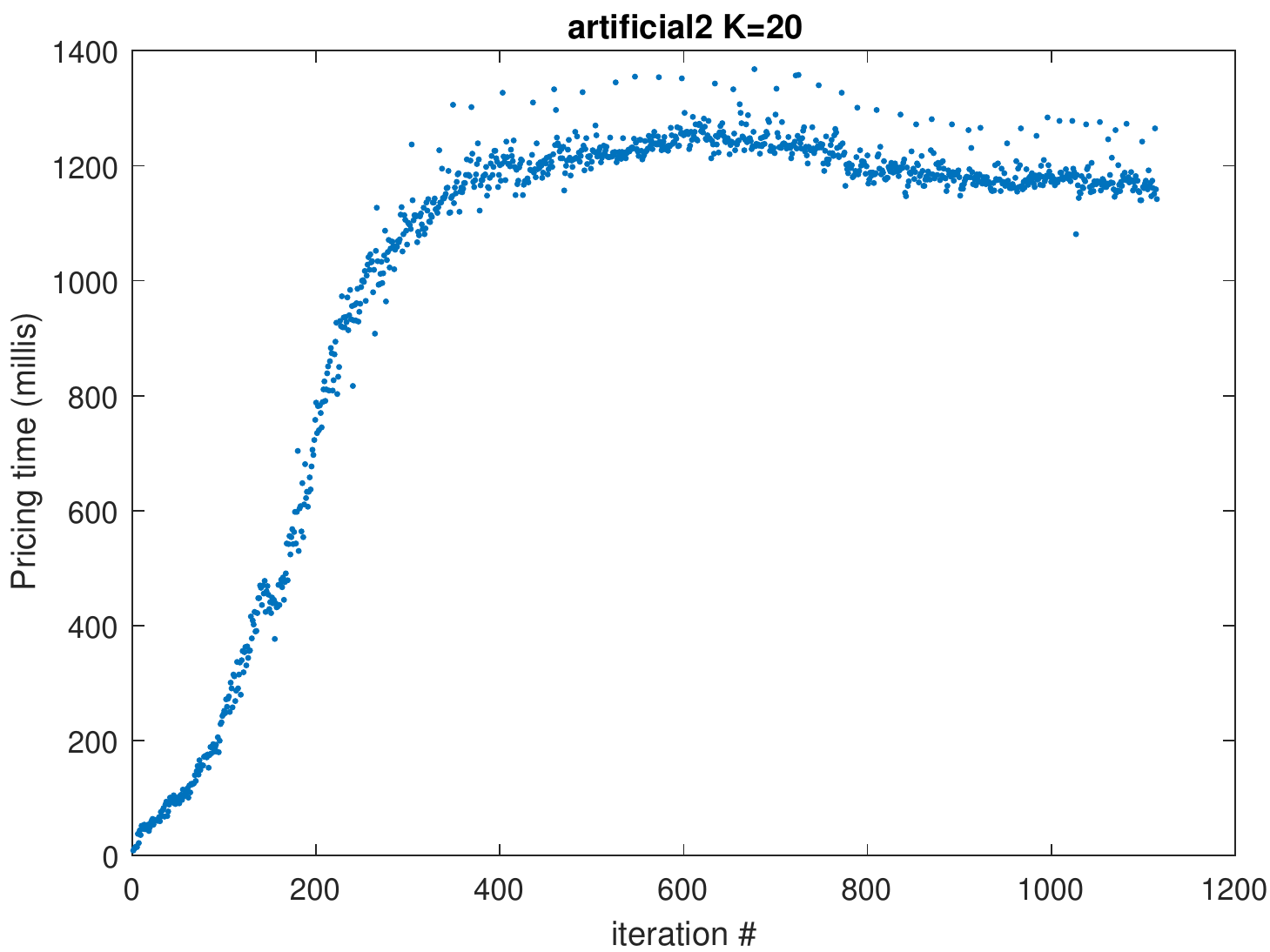}
       \caption{The solution time of the pricing subproblem}\label{fig:pricing_time}
    \end{subfigure} 
    \begin{subfigure}[b]{0.4\textwidth}
       \includegraphics[width=\textwidth] {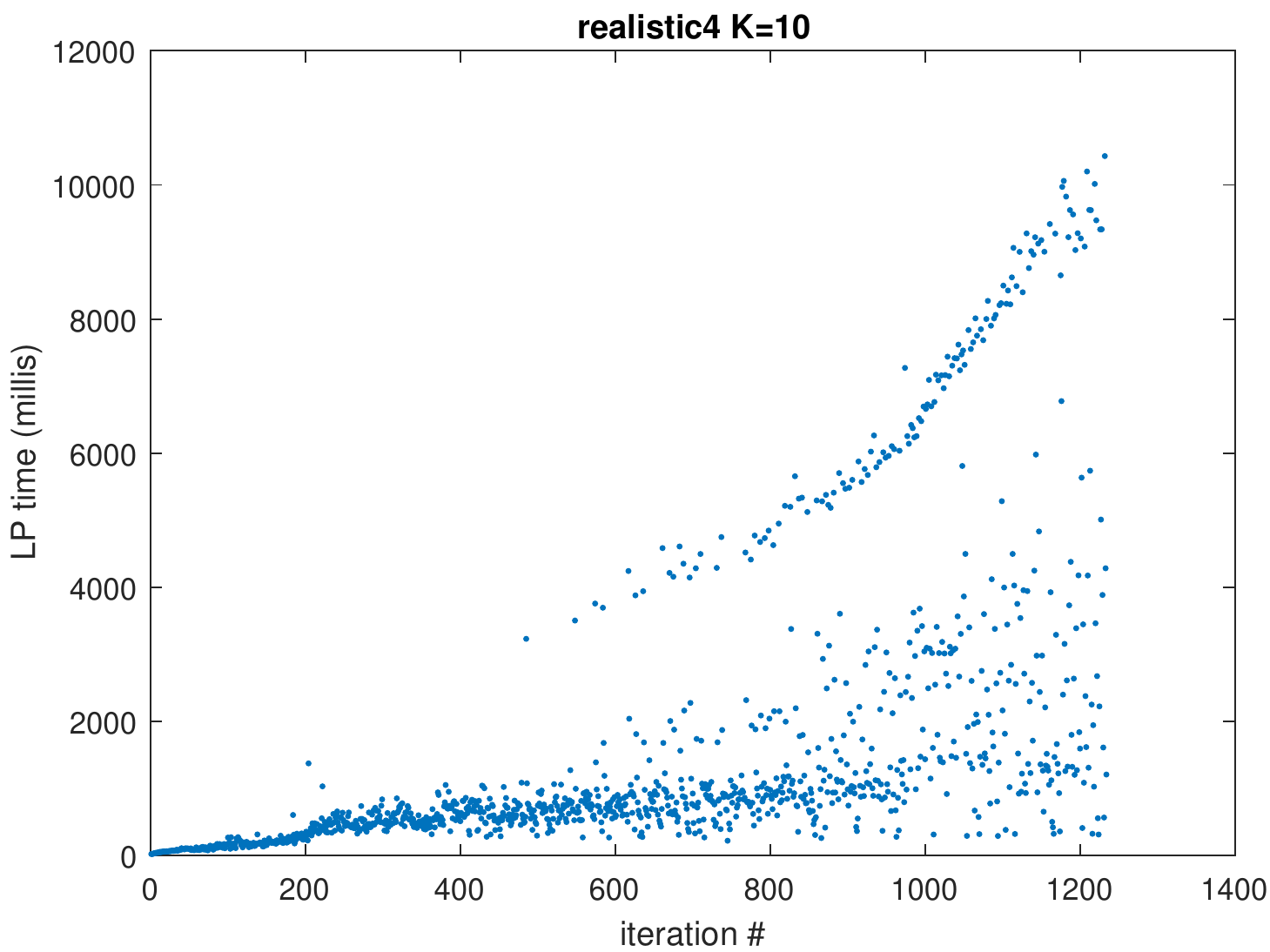}
      \caption{The LP time}\label{fig:LP_time}
    \end{subfigure}
 \caption{Typical behavior patterns}\label{fig:typical_change}
\end{figure}

We provide three more graphs related to the performance of the BP algorithm in Figure \ref{fig:column_count} - Figure \ref{fig:LP_time}. They illustrate typical behavior patterns which we observed \textcolor{red}{for almost all test instances.} The first one uses data collected on artificial3 data set for $K=15$ and plots the number of columns in the master problem during the iterations. There is a sharp increase at the very beginning, which slows down later on. \textcolor{red}{This is because column discarding is not active at the beginning and starts to operate after a large enough number of columns is generated.} We should point out that our column discarding scheme keeps the number of columns relatively lower.

The second one is obtained on artificial2 data set for $K=20$. It illustrates the typical behavior of the time devoted to the solution of the pricing subproblem throughout the iterations. It follows a regular behavior. The curve behaves parabolically at the beginning with a sharp increase. Then, it reaches a peak and settles down asymptotically. \textcolor{red}{ In short, pricing subproblems are easy in early iterations, and become harder as iterations progress.} The hardness is capped at some point.

The third one is the plot of the CPU time spent to solve the RLPMs throughout the iterations on realistic4 data set for $K=10$. First of all observe that, with each iteration the solution of the LP becomes harder. \textcolor{red}{After a while, a second trend emerges, and occasional  solutions take significantly longer times.}

Finally, we have prepared Table \ref{tab:results-stats} for the first test problem group for $K = 3,5,10,15,20$, in order to give an idea on the number of generated columns, time spent for the solution of the RLPMs ($t_\text{RLPM}$) and PSPs ($t_\text{PSP}$) in milliseconds.  The first column includes the instances, as usual. The first column of each group includes the median values for the number of columns added \textcolor{red}{at each iteration} ($\Delta_\text{col}$). The second and third columns are the median values of $t_\text{RLPM}$ and $t_\text{PSP}$. They all increase with $K$, as expected. If time spent is zero, it means the median time spent is less than a millisecond.

\begin{sidewaystable}
\caption{The efficiency of BP} 
\footnotesize
\begin{center}
{\begin{tabular}{r r rrr r rrr r rrr r rrr r rrr}
\toprule
			& \phantom{a} & \multicolumn{3}{c}{$K=3$} & \phantom{a} & \multicolumn{3}{c}{$K=5$}  & \phantom{a}  & \multicolumn{3}{c}{$K=10$}  & \phantom{a} & \multicolumn{3}{c}{$K=15$}  & \phantom{a} & \multicolumn{3}{c}{$K=20$} \\
			\cmidrule(r){3-5} 	\cmidrule(r){7-9} 	\cmidrule(r){11-13} 	\cmidrule(r){15-17} 	\cmidrule(r){19-21}
Image & & $\Delta_\text{col}$ & $t_\text{LP}$  & $t_\text{PSP}$ & & $\Delta_\text{col}$ & $t_\text{LP}$ & $t_\text{PSP}$ & & $\Delta_\text{col}$ & $t_\text{LP}$ & $t_\text{PSP}$ & & $\Delta_\text{col}$ & $t_\text{LP}$ & $t_\text{PSP}$ & & $\Delta_\text{col}$ & $t_\text{LP}$ & $t_\text{PSP}$ \\
\midrule
artificial1 & & 5 & 83 & 42 & & 4 & 228 & 90 & & 5 & 1263 & 95 & & 5 & 2371 & 92 & & 6 & 4590 & 91\\
artificial2 & & 3 & 151 & 25 & & 2 & 210 & 153 & & 3 & 426 & 410 & & 3 & 378 & 717 & & 3 & 842 & 1180\\
artificial3 & & 2 & 402 & 78 & & 2 & 332 & 362 & & 2 & 320 & 743 & & 2 & 442 & 934 & & 3 & 1500 & 1380\\
artificial4 & & 2 & 159 & 83 & & 2 & 146 & 190 & & 3 & 454 & 647 & & 7 & 1431 & 677 & & 8 & 1750 & 671\\
artificial5 & & 5 & 119 & 60 & & 6 & 1200 & 97 & & 7 & 4114 & 102 & & 9 & 5117 & 103 & & 9 & 6341 & 103\\
avatar1 & & 1 & 0 & 0 & & 2 & 1 & 0 & & 2 & 1 & 0 & & 3 & 1 & 0 & & 1 & 1 & 0\\
avatar2 & & 4 & 3 & 1 & & 7 & 5 & 1 & & 6 & 8 & 1 & & 8 & 11 & 1 & & 9 & 8 & 1\\
avatar3 & & 9 & 2 & 1 & & 3 & 3 & 1 & & 4 & 6 & 1 & & 7 & 10 & 1 & & 9 & 6 & 1\\
avatar4 & & 3 & 2 & 0 & & 1 & 3 & 0 & & 6 & 6 & 0 & & 7 & 6 & 0 & & 8 & 4 & 0\\
realistic1 & & 6 & 88 & 62 & & 5 & 214 & 163 & & 7 & 1376 & 185 & & 7 & 2062 & 193 & & 8 & 2871 & 186\\
realistic2 & & 3 & 123 & 45 & & 2 & 115 & 142 & & 4 & 1009 & 396 & & 5 & 1465 & 426 & & 8 & 2300 & 440\\
realistic3 & & 3 & 215 & 85 & & 8 & 2620 & 499 & & 9 & 2447 & 507 & & 9 & 2349 & 523 & & 9 & 2797 & 521\\
realistic4 & & 10 & 128 & 6 & & 3 & 130 & 185 & & 3 & 743 & 504 & & 3 & 1433 & 538 & & 4 & 1905 & 547\\
realistic5 & & 2 & 121 & 45 & & 2 & 165 & 235 & & 4 & 1070 & 488 & & 6 & 1345 & 508 & & 7 & 1828 & 506\\
toy1 & & 10 & 14 & 2 & & 10 & 15 & 2 & & 9 & 16 & 1 & & 1 & 18 & 0 & & 0 & 15 & 0\\
toy2 & & 3 & 5 & 5 & & 10 & 160 & 8 & & 9 & 230 & 5 & & 10 & 112 & 9 & & 10 & 118 & 9\\
toy3 & & 6 & 12 & 11 & & 8 & 34 & 12 & & 7 & 140 & 13 & & 6 & 301 & 13 & & 7 & 362 & 13\\
toy4 & & 0 & 13 & 7 & & 6 & 31 & 9 & & 6 & 63 & 10 & & 7 & 100 & 10 & & 5 & 103 & 10\\
toy5 & & 6 & 2 & 3 & & 10 & 2 & 3 & & 5 & 3 & 4 & & 7 & 5 & 4 & & 7 & 6 & 4\\
toy6 & & 3 & 304 & 145 & & 5 & 322 & 292 & & 7 & 2197 & 323 & & 8 & 1856 & 329 & & 8 & 2358 & 335\\
toy7 & & 6 & 9 & 10 & & 7 & 22 & 12 & & 7 & 46 & 12 & & 7 & 100 & 12 & & 7 & 112 & 12\\
toy8 & & 2 & 2 & 4 & & 5 & 3 & 6 & & 7 & 7 & 6 & & 7 & 12 & 6 & & 7 & 14 & 6\\
toy9 & & 0 & 7 & 2 & & 2 & 5 & 6 & & 8 & 37 & 8 & & 7 & 64 & 8 & & 6 & 85 & 8\\
toy10 & & 6 & 18 & 16 & & 6 & 37 & 19 & & 6 & 74 & 20 & & 4 & 81 & 20 & & 6 & 125 & 20\\
toy11 & & 3 & 2 & 5 & & 6 & 3 & 10 & & 4 & 8 & 9 & & 5 & 14 & 9 & & 7 & 13 & 9\\
toy12 & & 0 & 1 & 1 & & 4 & 3 & 4 & & 6 & 7 & 5 & & 7 & 8 & 5 & & 7 & 11 & 5\\
toy13 & & 0 & 4 & 2 & & 2 & 3 & 6 & & 7 & 6 & 9 & & 6 & 10 & 8 & & 7 & 12 & 8\\
toy14 & & 9 & 17 & 18 & & 0 & 21 & 7 & & 7 & 40 & 21 & & 6 & 93 & 21 & & 6 & 77 & 22\\
typical1 & & 3 & 66 & 50 & & 4 & 152 & 111 & & 5 & 1795 & 125 & & 5 & 3521 & 125 & & 7 & 3086 & 126\\
typical2 & & 2 & 148 & 171 & & 2 & 191 & 269 & & 3 & 1569 & 647 & & 4 & 2110 & 734 & & 7 & 3255 & 752\\
typical3 & & 4 & 3128 & 307 & & 8 & 4871 & 313 & & 9 & 6079 & 315 & & 9 & 6940 & 317 & & 10 & 6825 & 318\\
typical4 & & 2 & 165 & 10 & & 2 & 150 & 277 & & 3 & 680 & 587 & & 3 & 1157 & 656 & & 3 & 2061 & 711\\
typical5 & & 2 & 84 & 49 & & 3 & 165 & 131 & & 4 & 474 & 259 & & 6 & 1615 & 295 & & 8 & 996 & 296\\
typical6 & & 5 & 99 & 53 & & 6 & 685 & 74 & & 6 & 4639 & 81 & & 8 & 5887 & 82 & & 8 & 4400 & 82\\
typical7 & & 3 & 91 & 73 & & 5 & 226 & 98 & & 5 & 842 & 118 & & 4 & 1283 & 117 & & 4 & 3079 & 114\\
typical8 & & 3 & 72 & 29 & & 3 & 131 & 116 & & 4 & 364 & 276 & & 5 & 1562 & 298 & & 7 & 704 & 310\\
typical10 & & 4 & 153 & 201 & & 4 & 761 & 329 & & 7 & 1114 & 365 & & 7 & 1465 & 376 & & 7 & 2184 & 377\\
\bottomrule
\end{tabular}}
\label{tab:results-stats}
\end{center}
\end{sidewaystable}

\subsubsection{Solution quality}

As for the accuracy of the methods, the results are given in Table~\ref{tab:results-simple-1} and Table~\ref{tab:results-leather-2}. The first column of the tables lists the instances. The accuracies of each algorithm are listed for different $K$ values. The first column of each group consists of the objective values BP calculates (i.e. the value of the best feasible solution) in \textcolor{red}{1-hour CPU time limit}. The next three are the percent relative deviations of the values computed using SF, FAST and CSA, respectively. They are calculated according to the formula 
\begin{equation*}
\text{PD} = 100 \times \frac {z_\text{H} - z_\text{BP}} {z_\text{BP}},
\end{equation*}
where $z_\text{BP}$ and $z_\text{H}$ for $\text{H}\in \{\text{SF}, \text{FAST}, \text{CSA}\}$ are the objective values obtained by BP and the heuristics SF, FAST and CSA, respectively. $z_\text{H}$ values are given within parentheses in case the deviation is undefined, i.e. $z_\text{BP}=0$. Column arithmetic averages and standard deviations of the relative percent deviations are also provided on the last row for a rough comparison of the methods. BP guarantees the optimal solution. However, it can take remarkably longer.

Table~\ref{tab:results-mpeg7} is structured similarly but reports results on MPEG 7 images. Because each category contains 20 instances, only the arithmetic averages and standard deviations of the relative percent deviations are reported. However, we should point out that for the configuration $K=20$ and device5, BP performed extremely well for a particular test instance, producing huge errors for other methods. Therefore, we have treated that result as an outlier and removed it from the sample. In other words, reported arithmetic averages and standard deviations are calculated using 19 values out of 20.

\begin{sidewaystable}
\renewcommand{\arraystretch}{1.2}
\caption{Accuracy of the algorithms: $K=3,5,10$\\ (*) Proven optimal solution in $1$-hour running time}
\footnotesize 
\begin{center}
{\begin{tabular}{r r rrrr r rrrr r rrrr}
\toprule
			& \phantom{a} & \multicolumn{4}{c}{$K=3$} & \phantom{a} & \multicolumn{4}{c}{$K=5$}  & \phantom{a}  & \multicolumn{4}{c}{$K=10$} \\
			\cmidrule(r){3-6} 	\cmidrule(r){8-11} 	\cmidrule(r){13-16}
	Image	&&  BP 	& SF (\%) & CSA (\%)	& FAST (\%) &	&  BP 	& SF (\%) & CSA (\%)	& FAST (\%) &	&  BP 	& SF (\%) & CSA (\%)	& FAST (\%)  \\
	\midrule
artificial1 	&& 512$^*$	& 44.5	& 9.6	& 49.2	&&285$^*$	& 77.5	& 33.3	& 154.4	&&150$^*$	& 142.7	& 60.7	& 383.3	 \\
artificial2 	&& 6973$^*$	& 59.5	& 0.0	& 19.7	&&5287$^*$	& 46.4	& 2.2	& 28.7	&&2769$^*$	& 45.1	& 32.9	& 70.5	 \\
artificial3 	&& 6359\phantom{$^*$}	& 23.3	& 11.9	& 67.5	&&4741$^*$	& 38.8	& 7.5	& 59.8	&&2787$^*$	& 76.5	& 44.1	& 31.3	 \\
artificial4 	&& 2301$^*$	& 48.8	& 16.0	& 32.8	&&1479$^*$	& 66.7	& 59.2	& 47.5	&&578$^*$	& 227.7	& 120.2	& 200.3	 \\
artificial5 	&& 180$^*$	& 87.8	& 1.1	& 92.8	&&122$^*$	& 26.2	& 74.6	& 184.4	&&56\phantom{$^*$}	& 78.6	& 305.4	& 519.6	 \\
avatar1 	&& 27$^*$	& 25.9	& 22.2	& 25.9	&&18$^*$	& 44.4	& 16.7	& 88.9	&&7$^*$	& 85.7	& 128.6	& 385.7	 \\
avatar2 	&& 47$^*$	& 4.3	& 19.1	& 55.3	&&31$^*$	& 45.2	& 25.8	& 135.5	&&13$^*$	& 130.8	& 123.1	& 461.5	 \\
avatar3 	&& 47$^*$	& 42.6	& 12.8	& 42.6	&&33$^*$	& 33.3	& 51.5	& 103.0	&&13$^*$	& 69.2	& 184.6	& 415.4	 \\
avatar4 	&& 44$^*$	& 38.6	& 6.8	& 56.8	&&28$^*$	& 25.0	& 10.7	& 146.4	&&10$^*$	& 100.0	& 180.0	& 590.0	 \\
realistic1 	&& 499$^*$	& 29.9	& 38.5	& 78.0	&&325$^*$	& 17.2	& 38.2	& 135.4	&&184\phantom{$^*$}	& 71.2	& 160.9	& 315.8	 \\
realistic2 	&& 1430$^*$	& 27.5	& 0.8	& 106.7	&&725$^*$	& 82.8	& 23.7	& 229.2	&&373$^*$	& 144.5	& 133.8	& 480.4	 \\
realistic3 	&& 538$^*$	& 21.2	& 77.0	& 191.1	&&294\phantom{$^*$}	& 61.9	& 87.8	& 432.7	&&185\phantom{$^*$}	& 38.9	& 95.7	& 746.5	 \\
realistic4 	&& 2023$^*$	& 5.4	& 0.0	& 12.6	&&1255$^*$	& 52.1	& 31.4	& 23.9	&&593$^*$	& 176.7	& 133.7	& 147.0	 \\
realistic5 	&& 1431$^*$	& 23.5	& 12.2	& 62.6	&&679$^*$	& 86.0	& 44.8	& 242.7	&&360$^*$	& 108.1	& 141.1	& 546.4	 \\
toy1 	&& 0$^*$	& (70)	& (0)	& (0)	&&0$^*$	& (44)	& (0)	& (0)	&&0$^*$	& (0)	& (0)	& (0)	 \\
toy2 	&& 0$^*$	& (86)	& (52)	& (182)	&&0$^*$	& (51)	& (32)	& (182)	&&0$^*$	& (13)	& (32)	& (182)	 \\
toy3 	&& 107$^*$	& 60.7	& 29.9	& 93.5	&&72$^*$	& 26.4	& 62.5	& 187.5	&&35$^*$	& 45.7	& 191.4	& 491.4	 \\
toy4 	&& 143$^*$	& 18.9	& 14.7	& 74.8	&&95$^*$	& 68.4	& 63.2	& 163.2	&&59$^*$	& 78.0	& 116.9	& 323.7	 \\
toy5 	&& 158$^*$	& 41.1	& 3.8	& 10.1	&&122$^*$	& 68.0	& 22.1	& 20.5	&&71$^*$	& 109.9	& 63.4	& 94.4	 \\
toy6 	&& 921\phantom{$^*$}	& 21.4	& 22.1	& 66.9	&&601$^*$	& 72.0	& 68.6	& 118.0	&&369\phantom{$^*$}	& 82.7	& 83.2	& 244.7	 \\
toy7 	&& 136$^*$	& 119.9	& 38.2	& 20.6	&&113$^*$	& 61.1	& 46.0	& 15.0	&&72$^*$	& 138.9	& 63.9	& 80.6	 \\
toy8 	&& 212$^*$	& 16.5	& 9.4	& 19.3	&&148$^*$	& 42.6	& 19.6	& 48.6	&&89$^*$	& 92.1	& 60.7	& 109.0	 \\
toy9 	&& 186$^*$	& 83.3	& 37.6	& 15.1	&&75$^*$	& 130.7	& 108.0	& 105.3	&&42$^*$	& 169.0	& 214.3	& 266.7	 \\
toy10 	&& 232$^*$	& 16.8	& 14.2	& 23.7	&&174$^*$	& 49.4	& 24.7	& 64.9	&&96$^*$	& 63.5	& 82.3	& 199.0	 \\
toy11 	&& 297$^*$	& 17.5	& 38.4	& 18.2	&&177$^*$	& 92.7	& 110.7	& 13.0	&&115$^*$	& 63.5	& 39.1	& 59.1	 \\
toy12 	&& 259$^*$	& 42.9	& 0.0	& 28.2	&&151$^*$	& 29.8	& 7.9	& 77.5	&&72$^*$	& 98.6	& 87.5	& 163.9	 \\
toy13 	&& 328$^*$	& 44.2	& 0.0	& 17.7	&&193$^*$	& 125.9	& 30.6	& 40.9	&&101$^*$	& 240.6	& 66.3	& 111.9	 \\
toy14 	&& 366$^*$	& 26.5	& 11.2	& 23.5	&&304$^*$	& 41.4	& 24.7	& 40.8	&&197$^*$	& 94.9	& 57.4	& 93.9	 \\
typical1 	&& 535$^*$	& 11.4	& 12.7	& 43.9	&&365$^*$	& 16.7	& 40.5	& 99.5	&&206\phantom{$^*$}	& 41.7	& 75.2	& 253.4	 \\
typical2 	&& 1811$^*$	& 31.8	& 7.6	& 54.1	&&1180$^*$	& 51.5	& 36.2	& 96.0	&&716\phantom{$^*$}	& 64.0	& 79.2	& 204.3	 \\
typical3 	&& 202\phantom{$^*$}	& 9.4	& 19.8	& 88.1	&&162\phantom{$^*$}	& 24.7	& 28.4	& 134.6	&&109\phantom{$^*$}	& 64.2	& 41.3	& 248.6	 \\
typical4 	&& 2203$^*$	& 5.9	& 0.0	& 7.4	&&1590$^*$	& 33.1	& 24.7	& 17.4	&&938$^*$	& 77.9	& 92.6	& 54.8	 \\
typical5 	&& 1318$^*$	& 31.4	& 8.2	& 19.9	&&861$^*$	& 47.9	& 22.9	& 69.7	&&482$^*$	& 28.0	& 74.1	& 203.1	 \\
typical6 	&& 283$^*$	& 51.6	& 17.3	& 64.7	&&190$^*$	& 63.7	& 41.6	& 135.8	&&95\phantom{$^*$}	& 38.9	& 107.4	& 371.6	 \\
typical7 	&& 579$^*$	& 4.8	& 1.2	& 25.7	&&380$^*$	& 16.8	& 21.8	& 67.4	&&193$^*$	& 57.0	& 96.9	& 215.0	 \\
typical8 	&& 1577$^*$	& 33.5	& 0.0	& 10.3	&&986$^*$	& 45.4	& 16.9	& 51.6	&&573$^*$	& 73.8	& 67.5	& 105.1	 \\
typical10 	&& 934$^*$	& 30.6	& 11.6	& 36.3	&&681$^*$	& 43.0	& 37.6	& 48.0	&&377\phantom{$^*$}	& 100.5	& 127.1	& 153.8	 \\
\midrule
\multicolumn{3}{c}{Mean $\pm$ Std. Dev.} & 34.4 $\pm$ 25.1	 & 15.0 $\pm$ 16.0	 & 47.3 $\pm$ 37.3		&&&	53.0 $\pm$ 27.6	 & 39.0 $\pm$ 26.5	 & 103.7 $\pm$ 83.7		&&&	94.8 $\pm$ 50.4	 & 106.6 $\pm$ 57.7	 & 266.9 $\pm$ 178.8 \\
\bottomrule
\end{tabular}}\label{tab:results-simple-1}
\end{center}
\end{sidewaystable}

\begin{table}[!htbp]
\caption{Accuracy of the algorithms: $K=15, 20$\\ (*) Proven optimal solution in $1$-hour running time}
\footnotesize     
\begin{center}
\begin{tabular}{r r rrrr r rrrr}
\toprule
			& \phantom{a} & \multicolumn{4}{c}{$K=15$} & \phantom{a} & \multicolumn{4}{c}{$K=20$}  \\
			\cmidrule(r){3-6} 	\cmidrule(r){8-11} 	
	Image	&&  BP 	& SF (\%) & CSA (\%)	& FAST (\%) &	&  BP 	& SF (\%) & CSA (\%)	& FAST (\%)   \\
	\midrule
artificial1 	&& 95$^*$	& 146.3	& 148.4	& 663.2	&&65\phantom{$^*$}	& 227.7	& 240.0	& 1015.4	 \\
artificial2 	&& 1510$^*$	& 96.6	& 135.4	& 152.0	&&1147\phantom{$^*$}	& 127.0	& 129.2	& 194.8	 \\
artificial3 	&& 1631$^*$	& 138.6	& 91.8	& 86.8	&&1102\phantom{$^*$}	& 193.7	& 144.9	& 135.1	 \\
artificial4 	&& 459\phantom{$^*$}	& 155.3	& 199.6	& 278.2	&&387\phantom{$^*$}	& 146.5	& 198.7	& 348.6	 \\
artificial5 	&& 33\phantom{$^*$}	& 130.3	& 581.8	& 951.5	&&22\phantom{$^*$}	& 218.2	& 709.1	& 1477.3	 \\
avatar1 	&& 0$^*$	& (12)	& (8)	& (34)	&&0$^*$	& (12)	& (8)	& (34)	 \\
avatar2 	&& 4$^*$	& 475.0	& 625.0	& 1725.0	&&0$^*$	& (22)	& (20)	& (73)	 \\
avatar3 	&& 3$^*$	& 566.7	& 800.0	& 2133.3	&&0$^*$	& (20)	& (25)	& (67)	 \\
avatar4 	&& 2$^*$	& 600.0	& 1050.0	& 3350.0	&&0$^*$	& (12)	& (17)	& (69)	 \\
realistic1 	&& 127\phantom{$^*$}	& 112.6	& 184.3	& 502.4	&&92\phantom{$^*$}	& 158.7	& 201.1	& 731.5	 \\
realistic2 	&& 262\phantom{$^*$}	& 97.7	& 168.3	& 726.3	&&210\phantom{$^*$}	& 90.0	& 379.5	& 931.0	 \\
realistic3 	&& 127\phantom{$^*$}	& 33.1	& 203.1	& 1133.1	&&137\phantom{$^*$}	& -1.5	& 243.1	& 1043.1	 \\
realistic4 	&& 431\phantom{$^*$}	& 103.2	& 183.5	& 239.9	&&333\phantom{$^*$}	& 113.2	& 270.6	& 339.9	 \\
realistic5 	&& 259\phantom{$^*$}	& 114.7	& 132.0	& 798.5	&&190\phantom{$^*$}	& 123.2	& 254.7	& 1124.7	 \\
toy1 	&& 0$^*$	& (0)	& (0)	& (0)	&&0$^*$	& (0)	& (0)	& (0)	 \\
toy2 	&& 0$^*$	& (0)	& (9)	& (182)	&&0$^*$	& (0)	& (11)	& (182)	 \\
toy3 	&& 20$^*$	& 90.0	& 330.0	& 935.0	&&8$^*$	& 262.5	& 900.0	& 2487.5	 \\
toy4 	&& 40$^*$	& 77.5	& 120.0	& 525.0	&&26$^*$	& 165.4	& 280.8	& 861.5	 \\
toy5 	&& 47$^*$	& 148.9	& 136.2	& 193.6	&&30$^*$	& 216.7	& 196.7	& 360.0	 \\
toy6 	&& 267\phantom{$^*$}	& 58.1	& 108.2	& 376.4	&&210\phantom{$^*$}	& 93.3	& 158.6	& 505.7	 \\
toy7 	&& 50$^*$	& 152.0	& 132.0	& 160.0	&&32$^*$	& 250.0	& 203.1	& 306.3	 \\
toy8 	&& 63$^*$	& 71.4	& 81.0	& 195.2	&&46$^*$	& 95.7	& 137.0	& 304.3	 \\
toy9 	&& 27$^*$	& 170.4	& 385.2	& 470.4	&&17$^*$	& 252.9	& 282.4	& 805.9	 \\
toy10 	&& 62$^*$	& 106.5	& 141.9	& 362.9	&&37$^*$	& 175.7	& 256.8	& 675.7	 \\
toy11 	&& 86$^*$	& 77.9	& 66.3	& 112.8	&&63$^*$	& 141.3	& 119.0	& 190.5	 \\
toy12 	&& 48$^*$	& 106.3	& 118.8	& 295.8	&&28$^*$	& 196.4	& 307.1	& 578.6	 \\
toy13 	&& 73$^*$	& 305.5	& 97.3	& 193.2	&&51$^*$	& 374.5	& 103.9	& 319.6	 \\
toy14 	&& 146$^*$	& 147.9	& 90.4	& 161.6	&&111$^*$	& 184.7	& 118.9	& 244.1	 \\
typical1 	&& 138\phantom{$^*$}	& 51.4	& 153.6	& 427.5	&&100\phantom{$^*$}	& 57.0	& 156.0	& 628.0	 \\
typical2 	&& 462\phantom{$^*$}	& 122.1	& 116.0	& 371.6	&&372\phantom{$^*$}	& 113.2	& 159.7	& 485.8	 \\
typical3 	&& 65\phantom{$^*$}	& 129.2	& 256.9	& 484.6	&&55\phantom{$^*$}	& 167.3	& 280.0	& 590.9	 \\
typical4 	&& 639\phantom{$^*$}	& 182.8	& 134.3	& 115.3	&&482\phantom{$^*$}	& 127.2	& 138.0	& 185.5	 \\
typical5 	&& 329\phantom{$^*$}	& 48.0	& 91.5	& 344.1	&&253\phantom{$^*$}	& 63.2	& 131.6	& 477.5	 \\
typical6 	&& 58\phantom{$^*$}	& 46.6	& 243.1	& 672.4	&&38\phantom{$^*$}	& 105.3	& 326.3	& 1078.9	 \\
typical7 	&& 122\phantom{$^*$}	& 71.3	& 153.3	& 398.4	&&86\phantom{$^*$}	& 129.1	& 255.8	& 607.0	 \\
typical8 	&& 397\phantom{$^*$}	& 76.8	& 91.9	& 183.9	&&300\phantom{$^*$}	& 94.3	& 157.7	& 275.7	 \\
typical10 	&& 267\phantom{$^*$}	& 165.2	& 151.7	& 258.4	&&200\phantom{$^*$}	& 177.0	& 188.0	& 378.5	 \\
\midrule
\multicolumn{3}{c}{Mean $\pm$ Std. Dev.} & 152.2 $\pm$ 135.8	 & 226.6 $\pm$ 219.9	 & 587.6 $\pm$ 663.6		&&&	156.1 $\pm$ 73.9	 & 246.1 $\pm$ 166.4	 & 635.1 $\pm$ 479.3 \\	
\bottomrule
\end{tabular}\label{tab:results-simple-2}
\end{center}
\end{table}

\begin{table}[!htbp]
\caption{Accuracy of the heuristics: MPEG 7 benchmark category}
\footnotesize     
\begin{center}
\begin{tabular}{r c ccc c ccc}
\toprule
		& \phantom{a} & \multicolumn{3}{c}{$K=3$} & \phantom{a} & \multicolumn{3}{c}{$K=5$} \\
		\cmidrule(r){2-5} 	\cmidrule(r){7-9}	
	Category	& &  SF (\%) & CSA (\%)  & FAST (\%)	&	&  SF (\%) 	& CSA (\%) & FAST (\%) \\
	\midrule
bat	 & & 	51.34	$\pm$	20.72	 & 	8.99	$\pm$	7.47	 & 	27.61	$\pm$	14.22	 & & 	55.94	$\pm$	17.94	 & 	30.10	$\pm$	10.48	 & 	50.30	$\pm$	16.46	\\
device5	 & & 	182.53	$\pm$	147.85	 & 	40.29	$\pm$	57.90	 & 	36.39	$\pm$	39.29	 & & 	114.10	$\pm$	68.33	 & 	37.42	$\pm$	38.25	 & 	49.83	$\pm$	40.61	\\
dog	 & & 	28.66	$\pm$	17.02	 & 	10.90	$\pm$	7.81	 & 	26.12	$\pm$	15.09	 & & 	47.48	$\pm$	23.08	 & 	23.04	$\pm$	11.40	 & 	43.36	$\pm$	21.43	\\
key	 & & 	52.82	$\pm$	37.96	 & 	25.99	$\pm$	22.07	 & 	52.81	$\pm$	50.12	 & & 	58.38	$\pm$	25.59	 & 	43.04	$\pm$	20.21	 & 	112.02	$\pm$	74.67	\\
Misk	 & & 	43.15	$\pm$	9.05	 & 	22.67	$\pm$	9.13	 & 	39.68	$\pm$	15.75	 & & 	60.70	$\pm$	14.77	 & 	36.54	$\pm$	11.32	 & 	70.23	$\pm$	33.92	\\
	
	\midrule \midrule 
	
		& \phantom{a} & \multicolumn{3}{c}{$K=10$} & \phantom{a} & \multicolumn{3}{c}{$K=15$}   \\
		\cmidrule(r){2-5} 	\cmidrule(r){7-9}	
	Category	& &  SF (\%) &  CSA (\%)  & FAST (\%)	&	&  SF (\%) 	&  CSA (\%) & FAST (\%)  \\
	\midrule
bat	 & & 	76.70	$\pm$	34.79	 & 	62.67	$\pm$	13.53	 & 	110.94	$\pm$	39.89	 & & 	94.75	$\pm$	40.39	 & 	103.59	$\pm$	18.35	 & 	185.30	$\pm$	63.71	\\
device5	 & & 	166.01	$\pm$	77.71	 & 	84.06	$\pm$	50.95	 & 	125.26	$\pm$	91.08	 & & 	170.57	$\pm$	84.24	 & 	114.53	$\pm$	80.65	 & 	190.48	$\pm$	118.67	\\
dog	 & & 	61.99	$\pm$	23.58	 & 	47.22	$\pm$	25.55	 & 	84.88	$\pm$	45.67	 & & 	72.69	$\pm$	22.92	 & 	71.08	$\pm$	38.58	 & 	136.96	$\pm$	80.49	\\
key	 & & 	74.37	$\pm$	30.53	 & 	110.26	$\pm$	40.57	 & 	282.73	$\pm$	131.24	 & & 	97.35	$\pm$	47.52	 & 	195.41	$\pm$	79.82	 & 	470.81	$\pm$	195.04	\\
Misk	 & & 	69.96	$\pm$	34.39	 & 	83.76	$\pm$	16.23	 & 	160.79	$\pm$	69.50	 & & 	65.84	$\pm$	51.69	 & 	102.41	$\pm$	16.98	 & 	225.72	$\pm$	85.18	\\

\midrule \midrule 

		& \phantom{a} & \multicolumn{3}{c}{$K=20$} \\
		\cmidrule(r){2-5} 	
	Category	& &  SF (\%) &  CSA (\%)  & FAST (\%)	\\
	\cmidrule{2-5}
bat	 & & 	102.83	$\pm$	43.50	 & 	140.20	$\pm$	23.90	 & 	268.72	$\pm$	89.28	\\
device5	 & & 	167.56	$\pm$	105.03	 & 	155.67	$\pm$	122.09	 & 	247.48	$\pm$	144.18	\\
dog	 & & 	91.21	$\pm$	30.76	 & 	100.67	$\pm$	61.98	 & 	195.58	$\pm$	118.57	\\
key	 & & 	101.71	$\pm$	55.17	 & 	277.82	$\pm$	137.49	 & 	668.53	$\pm$	290.36	\\
Misk	 & & 	57.20	$\pm$	50.65	 & 	110.09	$\pm$	24.90	 & 	269.25	$\pm$	101.10	\\

\bottomrule
\end{tabular}
\label{tab:results-mpeg7}
\end{center}
\end{table}

We have observed that the restricted LP master tends to yield integer solutions. Approximately \textcolor{red}{$0.5$\%} of the runs yielded fractional solutions. Besides, most of the time an integer solution is acquired in very few branchings. We have also observed that, some problems with fractional optimal solutions had also alternative optimal integer solutions. This explains why only a few branchings are sufficient, mostly.

In the early iterations of the column generation, the objective value decreases rapidly and approaches the optimum value. For the aforementioned reasons, the 1-hour CPU time limit does not effect its performance dramatically for the instances of the first four sets. However, this is not true for the last set.

For the most of the instances, BP is able to find an optimal solution within the 1-hour CPU time limit except the fifth group of test problems. These cases are marked with an asterisk in the tables. Optimal rectangle blankets obtained for the artificial1 - artificial5 data sets are illustrated in Table \ref{tab:optimum_blanket}. They can be compared with the original ones given in Figure \ref{fig:benchmark-2}. Observe the increase in the quality of the approximation with the increasing $K$ values.† \textcolor{red}{ It is observed in  Table~\ref{tab:results-leather-1} and Table~\ref{tab:results-leather-2} that BP cannot find an optimal solution in 1-hour CPU time limit.} They turned out to be the most challenging test problems of the test instances. Still, BP gives the best results in one hour. → \textcolor{red}{Similarly, we observe considerable decreases in the performance of the heuristics, in parallel with the decrease in performance of the BP. } Also, it is possible to observe the considerable increase in the performance of SF. BP may have produced an optimum solution for certain instances. Nevertheless, necessary columns are not generated within the time limit to prove optimality. For example for toy2 instance (see Figure~\ref{fig:benchmark-2}), it is easy to come up with the optimum for $K>2$ manually where the objective value is zero.

\begin{sidewaystable}
\renewcommand{\arraystretch}{1.5}
\caption{Optimum rectangle blankets for artificial1 - artificial5 with $K = 3, 5, 10, 15, 20$}
\begin{center}
\begin{tabular}{c c c c c c}
\toprule
$K=3$ & $K=5$ & $K=10$ & $K=15$ & $K=20$ & \\
\midrule
\begin{tikzpicture}[yscale=-1]
\definecolor{myyellow}{RGB}{255,217,101}
\filldraw[fill=myyellow, draw=black] (0.85,1.65) rectangle (3.45,2.60);\filldraw[fill=myyellow, draw=black] (0.00,0.05) rectangle (2.20,1.65);\filldraw[fill=myyellow, draw=black] (1.75,2.60) rectangle (3.50,3.50);\end{tikzpicture} & \begin{tikzpicture}[yscale=-1]
\definecolor{myyellow}{RGB}{255,217,101}
\filldraw[fill=myyellow, draw=black] (2.60,1.75) rectangle (3.50,3.50);\filldraw[fill=myyellow, draw=black] (0.00,0.00) rectangle (1.65,1.70);\filldraw[fill=myyellow, draw=black] (1.95,0.95) rectangle (2.60,3.50);\filldraw[fill=myyellow, draw=black] (0.85,1.70) rectangle (1.65,2.60);\filldraw[fill=myyellow, draw=black] (1.65,0.60) rectangle (1.95,3.20);\end{tikzpicture} & \begin{tikzpicture}[yscale=-1]
\definecolor{myyellow}{RGB}{255,217,101}
\filldraw[fill=myyellow, draw=black] (0.25,1.50) rectangle (2.85,1.75);\filldraw[fill=myyellow, draw=black] (1.95,3.20) rectangle (3.30,3.50);\filldraw[fill=myyellow, draw=black] (1.00,2.20) rectangle (1.65,2.60);\filldraw[fill=myyellow, draw=black] (0.65,1.75) rectangle (3.25,1.95);\filldraw[fill=myyellow, draw=black] (0.20,0.00) rectangle (1.50,0.25);\filldraw[fill=myyellow, draw=black] (1.65,2.20) rectangle (3.50,3.20);\filldraw[fill=myyellow, draw=black] (0.00,1.00) rectangle (2.60,1.50);\filldraw[fill=myyellow, draw=black] (0.80,1.95) rectangle (3.40,2.20);\filldraw[fill=myyellow, draw=black] (1.75,0.75) rectangle (2.20,1.00);\filldraw[fill=myyellow, draw=black] (0.00,0.25) rectangle (1.75,1.00);\end{tikzpicture} & \begin{tikzpicture}[yscale=-1]
\definecolor{myyellow}{RGB}{255,217,101}
\filldraw[fill=myyellow, draw=black] (3.25,1.95) rectangle (3.40,3.35);\filldraw[fill=myyellow, draw=black] (0.00,0.20) rectangle (0.20,1.50);\filldraw[fill=myyellow, draw=black] (2.90,1.75) rectangle (3.25,3.50);\filldraw[fill=myyellow, draw=black] (1.75,0.65) rectangle (1.95,3.20);\filldraw[fill=myyellow, draw=black] (3.40,2.10) rectangle (3.50,3.20);\filldraw[fill=myyellow, draw=black] (0.20,0.05) rectangle (0.40,1.65);\filldraw[fill=myyellow, draw=black] (1.60,0.40) rectangle (1.75,2.90);\filldraw[fill=myyellow, draw=black] (1.95,0.80) rectangle (2.20,3.40);\filldraw[fill=myyellow, draw=black] (2.50,1.25) rectangle (2.65,3.50);\filldraw[fill=myyellow, draw=black] (2.65,1.60) rectangle (2.90,3.50);\filldraw[fill=myyellow, draw=black] (1.35,0.10) rectangle (1.60,2.65);\filldraw[fill=myyellow, draw=black] (0.95,0.00) rectangle (1.35,2.50);\filldraw[fill=myyellow, draw=black] (0.70,1.75) rectangle (0.95,2.10);\filldraw[fill=myyellow, draw=black] (0.40,0.00) rectangle (0.95,1.75);\filldraw[fill=myyellow, draw=black] (2.20,0.95) rectangle (2.50,3.50);\end{tikzpicture} & \begin{tikzpicture}[yscale=-1]
\definecolor{myyellow}{RGB}{255,217,101}
\filldraw[fill=myyellow, draw=black] (3.40,2.10) rectangle (3.50,3.20);\filldraw[fill=myyellow, draw=black] (0.00,0.25) rectangle (0.10,1.40);\filldraw[fill=myyellow, draw=black] (2.20,0.95) rectangle (2.50,1.25);\filldraw[fill=myyellow, draw=black] (1.20,0.00) rectangle (1.40,2.60);\filldraw[fill=myyellow, draw=black] (1.65,0.40) rectangle (1.75,3.00);\filldraw[fill=myyellow, draw=black] (1.50,0.20) rectangle (1.65,2.75);\filldraw[fill=myyellow, draw=black] (0.10,0.15) rectangle (0.20,1.50);\filldraw[fill=myyellow, draw=black] (0.65,0.00) rectangle (0.80,1.95);\filldraw[fill=myyellow, draw=black] (1.80,0.70) rectangle (1.95,3.25);\filldraw[fill=myyellow, draw=black] (1.40,0.10) rectangle (1.50,2.65);\filldraw[fill=myyellow, draw=black] (0.80,0.00) rectangle (0.95,2.20);\filldraw[fill=myyellow, draw=black] (3.25,1.95) rectangle (3.40,3.35);\filldraw[fill=myyellow, draw=black] (1.75,0.60) rectangle (1.80,3.10);\filldraw[fill=myyellow, draw=black] (0.20,0.05) rectangle (0.40,1.65);\filldraw[fill=myyellow, draw=black] (2.20,1.60) rectangle (2.90,3.50);\filldraw[fill=myyellow, draw=black] (1.95,0.80) rectangle (2.20,3.40);\filldraw[fill=myyellow, draw=black] (0.40,0.00) rectangle (0.65,1.75);\filldraw[fill=myyellow, draw=black] (2.90,1.75) rectangle (3.25,3.50);\filldraw[fill=myyellow, draw=black] (2.20,1.25) rectangle (2.65,1.60);\filldraw[fill=myyellow, draw=black] (0.95,0.00) rectangle (1.20,2.45);\end{tikzpicture} & \\
\begin{tikzpicture}[yscale=-1]
\definecolor{myyellow}{RGB}{255,217,101}
\filldraw[fill=myyellow, draw=black] (0.67,0.67) rectangle (1.64,1.16);\filldraw[fill=myyellow, draw=black] (1.82,0.00) rectangle (2.83,0.49);\filldraw[fill=myyellow, draw=black] (2.84,0.50) rectangle (3.34,1.00);\end{tikzpicture} & \begin{tikzpicture}[yscale=-1]
\definecolor{myyellow}{RGB}{255,217,101}
\filldraw[fill=myyellow, draw=black] (1.91,0.00) rectangle (2.83,0.48);\filldraw[fill=myyellow, draw=black] (2.84,0.50) rectangle (3.34,1.00);\filldraw[fill=myyellow, draw=black] (0.16,0.16) rectangle (0.66,0.66);\filldraw[fill=myyellow, draw=black] (1.59,0.32) rectangle (1.91,0.82);\filldraw[fill=myyellow, draw=black] (0.67,0.67) rectangle (1.59,1.16);\end{tikzpicture} & \begin{tikzpicture}[yscale=-1]
\definecolor{myyellow}{RGB}{255,217,101}
\filldraw[fill=myyellow, draw=black] (0.00,0.01) rectangle (0.42,0.32);\filldraw[fill=myyellow, draw=black] (1.89,0.00) rectangle (2.75,0.33);\filldraw[fill=myyellow, draw=black] (0.46,0.56) rectangle (0.96,0.83);\filldraw[fill=myyellow, draw=black] (0.75,0.83) rectangle (1.59,1.16);\filldraw[fill=myyellow, draw=black] (3.08,0.84) rectangle (3.50,1.14);\filldraw[fill=myyellow, draw=black] (2.54,0.33) rectangle (3.04,0.60);\filldraw[fill=myyellow, draw=black] (1.35,0.59) rectangle (1.83,0.83);\filldraw[fill=myyellow, draw=black] (0.18,0.32) rectangle (0.67,0.56);\filldraw[fill=myyellow, draw=black] (2.83,0.60) rectangle (3.32,0.84);\filldraw[fill=myyellow, draw=black] (1.62,0.33) rectangle (2.12,0.59);\end{tikzpicture} & \begin{tikzpicture}[yscale=-1]
\definecolor{myyellow}{RGB}{255,217,101}
\filldraw[fill=myyellow, draw=black] (3.08,0.84) rectangle (3.50,1.14);\filldraw[fill=myyellow, draw=black] (1.31,0.66) rectangle (1.81,0.83);\filldraw[fill=myyellow, draw=black] (1.98,0.00) rectangle (2.66,0.16);\filldraw[fill=myyellow, draw=black] (1.65,0.33) rectangle (2.14,0.50);\filldraw[fill=myyellow, draw=black] (2.84,0.66) rectangle (3.34,0.84);\filldraw[fill=myyellow, draw=black] (2.50,0.33) rectangle (2.99,0.49);\filldraw[fill=myyellow, draw=black] (1.82,0.16) rectangle (2.83,0.33);\filldraw[fill=myyellow, draw=black] (2.66,0.49) rectangle (3.16,0.66);\filldraw[fill=myyellow, draw=black] (0.16,0.32) rectangle (0.66,0.49);\filldraw[fill=myyellow, draw=black] (1.48,0.50) rectangle (1.98,0.66);\filldraw[fill=myyellow, draw=black] (0.00,0.01) rectangle (0.42,0.32);\filldraw[fill=myyellow, draw=black] (0.33,0.49) rectangle (0.83,0.66);\filldraw[fill=myyellow, draw=black] (0.50,0.66) rectangle (1.00,0.83);\filldraw[fill=myyellow, draw=black] (0.67,0.83) rectangle (1.64,1.00);\filldraw[fill=myyellow, draw=black] (0.84,1.00) rectangle (1.51,1.16);\end{tikzpicture} & \begin{tikzpicture}[yscale=-1]
\definecolor{myyellow}{RGB}{255,217,101}
\filldraw[fill=myyellow, draw=black] (2.50,0.33) rectangle (2.99,0.49);\filldraw[fill=myyellow, draw=black] (2.85,0.69) rectangle (3.34,0.83);\filldraw[fill=myyellow, draw=black] (1.99,0.00) rectangle (2.66,0.15);\filldraw[fill=myyellow, draw=black] (1.48,0.50) rectangle (1.98,0.66);\filldraw[fill=myyellow, draw=black] (0.71,0.93) rectangle (1.61,1.01);\filldraw[fill=myyellow, draw=black] (0.66,0.83) rectangle (1.67,0.93);\filldraw[fill=myyellow, draw=black] (1.89,0.15) rectangle (2.78,0.20);\filldraw[fill=myyellow, draw=black] (1.65,0.33) rectangle (2.14,0.50);\filldraw[fill=myyellow, draw=black] (0.32,0.16) rectangle (0.50,0.32);\filldraw[fill=myyellow, draw=black] (2.66,0.49) rectangle (3.16,0.63);\filldraw[fill=myyellow, draw=black] (1.81,0.20) rectangle (2.84,0.33);\filldraw[fill=myyellow, draw=black] (3.17,0.83) rectangle (3.50,1.16);\filldraw[fill=myyellow, draw=black] (2.78,0.63) rectangle (3.25,0.69);\filldraw[fill=myyellow, draw=black] (0.00,0.00) rectangle (0.32,0.32);\filldraw[fill=myyellow, draw=black] (0.50,0.66) rectangle (1.00,0.83);\filldraw[fill=myyellow, draw=black] (0.33,0.49) rectangle (0.83,0.66);\filldraw[fill=myyellow, draw=black] (0.85,1.01) rectangle (1.51,1.16);\filldraw[fill=myyellow, draw=black] (1.31,0.66) rectangle (1.81,0.83);\filldraw[fill=myyellow, draw=black] (3.00,0.83) rectangle (3.17,1.00);\filldraw[fill=myyellow, draw=black] (0.16,0.32) rectangle (0.66,0.49);\end{tikzpicture} & \\
\begin{tikzpicture}[yscale=-1]
\definecolor{myyellow}{RGB}{255,217,101}
\filldraw[fill=myyellow, draw=black] (0.76,2.43) rectangle (2.72,2.94);\filldraw[fill=myyellow, draw=black] (0.00,0.60) rectangle (0.72,2.35);\filldraw[fill=myyellow, draw=black] (1.01,0.00) rectangle (3.21,2.08);\end{tikzpicture} & \begin{tikzpicture}[yscale=-1]
\definecolor{myyellow}{RGB}{255,217,101}
\filldraw[fill=myyellow, draw=black] (0.76,2.43) rectangle (2.72,2.94);\filldraw[fill=myyellow, draw=black] (0.00,0.60) rectangle (0.72,2.35);\filldraw[fill=myyellow, draw=black] (1.03,0.74) rectangle (2.59,1.94);\filldraw[fill=myyellow, draw=black] (2.59,1.24) rectangle (3.31,2.43);\filldraw[fill=myyellow, draw=black] (0.76,0.00) rectangle (3.13,0.49);\end{tikzpicture} & \begin{tikzpicture}[yscale=-1]
\definecolor{myyellow}{RGB}{255,217,101}
\filldraw[fill=myyellow, draw=black] (2.84,0.21) rectangle (3.29,0.76);\filldraw[fill=myyellow, draw=black] (0.31,2.12) rectangle (1.15,2.45);\filldraw[fill=myyellow, draw=black] (1.36,1.54) rectangle (2.14,2.14);\filldraw[fill=myyellow, draw=black] (2.53,1.32) rectangle (3.34,2.45);\filldraw[fill=myyellow, draw=black] (1.36,0.74) rectangle (2.74,1.32);\filldraw[fill=myyellow, draw=black] (0.76,2.45) rectangle (2.72,2.94);\filldraw[fill=myyellow, draw=black] (0.31,0.49) rectangle (1.13,0.84);\filldraw[fill=myyellow, draw=black] (0.00,0.84) rectangle (0.58,2.12);\filldraw[fill=myyellow, draw=black] (1.01,1.03) rectangle (1.36,1.91);\filldraw[fill=myyellow, draw=black] (0.76,0.00) rectangle (2.84,0.49);\end{tikzpicture} & \begin{tikzpicture}[yscale=-1]
\definecolor{myyellow}{RGB}{255,217,101}
\filldraw[fill=myyellow, draw=black] (2.88,0.23) rectangle (3.29,0.76);\filldraw[fill=myyellow, draw=black] (1.32,1.91) rectangle (1.91,2.16);\filldraw[fill=myyellow, draw=black] (0.91,2.68) rectangle (2.59,2.94);\filldraw[fill=myyellow, draw=black] (1.30,0.74) rectangle (2.57,1.03);\filldraw[fill=myyellow, draw=black] (2.35,1.30) rectangle (3.19,1.56);\filldraw[fill=myyellow, draw=black] (1.03,1.03) rectangle (2.90,1.30);\filldraw[fill=myyellow, draw=black] (0.31,0.49) rectangle (1.13,0.84);\filldraw[fill=myyellow, draw=black] (0.31,2.12) rectangle (1.15,2.45);\filldraw[fill=myyellow, draw=black] (1.01,1.30) rectangle (1.65,1.56);\filldraw[fill=myyellow, draw=black] (1.03,1.56) rectangle (2.18,1.91);\filldraw[fill=myyellow, draw=black] (0.62,2.45) rectangle (2.88,2.68);\filldraw[fill=myyellow, draw=black] (0.00,0.84) rectangle (0.58,2.12);\filldraw[fill=myyellow, draw=black] (2.35,2.16) rectangle (3.19,2.45);\filldraw[fill=myyellow, draw=black] (2.66,1.56) rectangle (3.50,2.16);\filldraw[fill=myyellow, draw=black] (0.76,0.00) rectangle (2.88,0.49);\end{tikzpicture} & \begin{tikzpicture}[yscale=-1]
\definecolor{myyellow}{RGB}{255,217,101}
\filldraw[fill=myyellow, draw=black] (2.66,1.56) rectangle (3.50,2.14);\filldraw[fill=myyellow, draw=black] (0.54,1.89) rectangle (0.84,2.16);\filldraw[fill=myyellow, draw=black] (0.54,2.16) rectangle (1.15,2.45);\filldraw[fill=myyellow, draw=black] (0.91,0.00) rectangle (3.01,0.27);\filldraw[fill=myyellow, draw=black] (2.29,2.35) rectangle (3.05,2.45);\filldraw[fill=myyellow, draw=black] (2.35,1.30) rectangle (3.19,1.56);\filldraw[fill=myyellow, draw=black] (0.29,0.56) rectangle (0.54,2.39);\filldraw[fill=myyellow, draw=black] (2.39,2.14) rectangle (3.19,2.35);\filldraw[fill=myyellow, draw=black] (0.62,2.45) rectangle (2.88,2.68);\filldraw[fill=myyellow, draw=black] (1.32,1.91) rectangle (1.89,2.16);\filldraw[fill=myyellow, draw=black] (0.54,0.49) rectangle (1.15,0.78);\filldraw[fill=myyellow, draw=black] (0.00,0.82) rectangle (0.29,2.12);\filldraw[fill=myyellow, draw=black] (2.74,0.27) rectangle (3.29,0.76);\filldraw[fill=myyellow, draw=black] (0.91,2.68) rectangle (2.59,2.94);\filldraw[fill=myyellow, draw=black] (1.03,1.03) rectangle (2.90,1.30);\filldraw[fill=myyellow, draw=black] (1.03,1.56) rectangle (2.18,1.91);\filldraw[fill=myyellow, draw=black] (0.54,0.78) rectangle (0.84,1.05);\filldraw[fill=myyellow, draw=black] (1.32,0.74) rectangle (2.57,1.03);\filldraw[fill=myyellow, draw=black] (0.62,0.27) rectangle (2.74,0.49);\filldraw[fill=myyellow, draw=black] (1.01,1.30) rectangle (1.65,1.56);\end{tikzpicture} & \\
\begin{tikzpicture}[yscale=-1]
\definecolor{myyellow}{RGB}{255,217,101}
\filldraw[fill=myyellow, draw=black] (0.34,0.24) rectangle (1.57,2.22);\filldraw[fill=myyellow, draw=black] (2.05,1.50) rectangle (3.33,2.82);\filldraw[fill=myyellow, draw=black] (1.57,0.92) rectangle (2.51,1.50);\end{tikzpicture} & \begin{tikzpicture}[yscale=-1]
\definecolor{myyellow}{RGB}{255,217,101}
\filldraw[fill=myyellow, draw=black] (1.06,0.24) rectangle (1.57,2.51);\filldraw[fill=myyellow, draw=black] (0.34,0.02) rectangle (0.72,1.81);\filldraw[fill=myyellow, draw=black] (1.57,0.92) rectangle (2.51,1.50);\filldraw[fill=myyellow, draw=black] (2.05,1.50) rectangle (3.33,2.82);\filldraw[fill=myyellow, draw=black] (0.72,0.80) rectangle (1.06,2.20);\end{tikzpicture} & \begin{tikzpicture}[yscale=-1]
\definecolor{myyellow}{RGB}{255,217,101}
\filldraw[fill=myyellow, draw=black] (1.06,0.24) rectangle (1.57,2.51);\filldraw[fill=myyellow, draw=black] (0.34,0.02) rectangle (0.72,1.81);\filldraw[fill=myyellow, draw=black] (0.72,0.80) rectangle (1.06,2.20);\filldraw[fill=myyellow, draw=black] (1.57,0.92) rectangle (2.44,1.26);\filldraw[fill=myyellow, draw=black] (0.02,0.92) rectangle (0.34,1.47);\filldraw[fill=myyellow, draw=black] (3.16,2.00) rectangle (3.50,2.68);\filldraw[fill=myyellow, draw=black] (1.57,1.26) rectangle (2.73,1.47);\filldraw[fill=myyellow, draw=black] (2.85,1.67) rectangle (3.16,2.75);\filldraw[fill=myyellow, draw=black] (1.91,2.03) rectangle (2.15,2.68);\filldraw[fill=myyellow, draw=black] (2.15,1.47) rectangle (2.85,2.92);\end{tikzpicture} & \begin{tikzpicture}[yscale=-1]
\definecolor{myyellow}{RGB}{255,217,101}
\filldraw[fill=myyellow, draw=black] (2.44,1.28) rectangle (2.78,2.92);\filldraw[fill=myyellow, draw=black] (1.91,2.08) rectangle (2.05,2.63);\filldraw[fill=myyellow, draw=black] (1.06,0.24) rectangle (1.57,2.51);\filldraw[fill=myyellow, draw=black] (0.72,0.80) rectangle (1.06,2.17);\filldraw[fill=myyellow, draw=black] (1.57,0.92) rectangle (2.15,1.47);\filldraw[fill=myyellow, draw=black] (0.34,0.05) rectangle (0.46,1.74);\filldraw[fill=myyellow, draw=black] (2.05,1.93) rectangle (2.15,2.75);\filldraw[fill=myyellow, draw=black] (0.00,0.97) rectangle (0.17,1.40);\filldraw[fill=myyellow, draw=black] (0.17,0.87) rectangle (0.34,1.50);\filldraw[fill=myyellow, draw=black] (2.90,1.67) rectangle (3.14,2.75);\filldraw[fill=myyellow, draw=black] (2.15,0.94) rectangle (2.44,2.90);\filldraw[fill=myyellow, draw=black] (0.46,0.00) rectangle (0.72,1.83);\filldraw[fill=myyellow, draw=black] (3.31,2.05) rectangle (3.50,2.66);\filldraw[fill=myyellow, draw=black] (3.14,1.96) rectangle (3.31,2.70);\filldraw[fill=myyellow, draw=black] (2.78,1.54) rectangle (2.90,2.82);\end{tikzpicture} & \begin{tikzpicture}[yscale=-1]
\definecolor{myyellow}{RGB}{255,217,101}
\filldraw[fill=myyellow, draw=black] (0.34,0.00) rectangle (0.65,1.79);\filldraw[fill=myyellow, draw=black] (1.64,0.92) rectangle (2.15,1.47);\filldraw[fill=myyellow, draw=black] (0.80,0.82) rectangle (1.06,2.22);\filldraw[fill=myyellow, draw=black] (2.97,1.69) rectangle (3.14,2.75);\filldraw[fill=myyellow, draw=black] (2.15,0.94) rectangle (2.41,2.90);\filldraw[fill=myyellow, draw=black] (0.00,0.97) rectangle (0.22,1.42);\filldraw[fill=myyellow, draw=black] (1.11,2.41) rectangle (1.50,2.51);\filldraw[fill=myyellow, draw=black] (1.57,0.80) rectangle (1.64,1.57);\filldraw[fill=myyellow, draw=black] (0.72,0.77) rectangle (0.80,2.08);\filldraw[fill=myyellow, draw=black] (1.16,0.22) rectangle (1.47,0.29);\filldraw[fill=myyellow, draw=black] (0.22,0.84) rectangle (0.34,1.54);\filldraw[fill=myyellow, draw=black] (3.14,1.96) rectangle (3.31,2.70);\filldraw[fill=myyellow, draw=black] (1.06,0.29) rectangle (1.57,2.41);\filldraw[fill=myyellow, draw=black] (0.65,0.10) rectangle (0.72,1.96);\filldraw[fill=myyellow, draw=black] (2.41,1.23) rectangle (2.66,2.92);\filldraw[fill=myyellow, draw=black] (1.91,2.08) rectangle (2.05,2.63);\filldraw[fill=myyellow, draw=black] (2.05,1.93) rectangle (2.15,2.73);\filldraw[fill=myyellow, draw=black] (3.31,2.05) rectangle (3.50,2.66);\filldraw[fill=myyellow, draw=black] (2.66,1.38) rectangle (2.80,2.90);\filldraw[fill=myyellow, draw=black] (2.80,1.59) rectangle (2.97,2.80);\end{tikzpicture} & \\
\begin{tikzpicture}[yscale=-1]
\definecolor{myyellow}{RGB}{255,217,101}
\filldraw[fill=myyellow, draw=black] (0.38,2.58) rectangle (2.69,3.28);\filldraw[fill=myyellow, draw=black] (0.48,0.00) rectangle (3.34,1.08);\filldraw[fill=myyellow, draw=black] (0.00,1.08) rectangle (3.50,2.58);\end{tikzpicture} & \begin{tikzpicture}[yscale=-1]
\definecolor{myyellow}{RGB}{255,217,101}
\filldraw[fill=myyellow, draw=black] (0.00,1.08) rectangle (3.50,2.53);\filldraw[fill=myyellow, draw=black] (0.70,0.00) rectangle (3.12,0.22);\filldraw[fill=myyellow, draw=black] (0.48,0.22) rectangle (3.34,1.08);\filldraw[fill=myyellow, draw=black] (0.27,2.53) rectangle (2.85,2.91);\filldraw[fill=myyellow, draw=black] (0.48,2.91) rectangle (2.64,3.28);\end{tikzpicture} & \begin{tikzpicture}[yscale=-1]
\definecolor{myyellow}{RGB}{255,217,101}
\filldraw[fill=myyellow, draw=black] (0.27,2.64) rectangle (2.85,2.85);\filldraw[fill=myyellow, draw=black] (0.48,0.43) rectangle (3.34,1.02);\filldraw[fill=myyellow, draw=black] (3.45,1.24) rectangle (3.50,2.05);\filldraw[fill=myyellow, draw=black] (0.43,2.85) rectangle (2.69,3.12);\filldraw[fill=myyellow, draw=black] (0.27,1.02) rectangle (3.45,1.29);\filldraw[fill=myyellow, draw=black] (0.65,3.12) rectangle (2.48,3.28);\filldraw[fill=myyellow, draw=black] (0.16,2.48) rectangle (3.12,2.64);\filldraw[fill=myyellow, draw=black] (0.00,1.29) rectangle (3.45,2.48);\filldraw[fill=myyellow, draw=black] (0.54,0.22) rectangle (3.28,0.43);\filldraw[fill=myyellow, draw=black] (0.70,0.00) rectangle (3.12,0.22);\end{tikzpicture} & \begin{tikzpicture}[yscale=-1]
\definecolor{myyellow}{RGB}{255,217,101}
\filldraw[fill=myyellow, draw=black] (0.70,3.12) rectangle (2.48,3.28);\filldraw[fill=myyellow, draw=black] (0.86,0.00) rectangle (2.96,0.05);\filldraw[fill=myyellow, draw=black] (0.00,1.40) rectangle (0.11,2.48);\filldraw[fill=myyellow, draw=black] (3.45,1.24) rectangle (3.50,2.05);\filldraw[fill=myyellow, draw=black] (3.28,0.43) rectangle (3.34,2.42);\filldraw[fill=myyellow, draw=black] (3.34,0.92) rectangle (3.45,2.37);\filldraw[fill=myyellow, draw=black] (0.43,0.81) rectangle (0.48,3.07);\filldraw[fill=myyellow, draw=black] (0.11,1.24) rectangle (0.27,2.58);\filldraw[fill=myyellow, draw=black] (3.12,0.22) rectangle (3.28,2.53);\filldraw[fill=myyellow, draw=black] (0.27,1.02) rectangle (0.43,2.91);\filldraw[fill=myyellow, draw=black] (2.85,0.05) rectangle (3.12,2.64);\filldraw[fill=myyellow, draw=black] (0.48,0.43) rectangle (0.54,3.12);\filldraw[fill=myyellow, draw=black] (0.70,2.80) rectangle (2.69,3.12);\filldraw[fill=myyellow, draw=black] (0.54,0.22) rectangle (0.70,3.18);\filldraw[fill=myyellow, draw=black] (0.70,0.05) rectangle (2.85,2.80);\end{tikzpicture} & \begin{tikzpicture}[yscale=-1]
\definecolor{myyellow}{RGB}{255,217,101}
\filldraw[fill=myyellow, draw=black] (3.23,0.27) rectangle (3.28,2.48);\filldraw[fill=myyellow, draw=black] (0.70,3.23) rectangle (2.37,3.28);\filldraw[fill=myyellow, draw=black] (0.86,0.00) rectangle (2.96,0.05);\filldraw[fill=myyellow, draw=black] (0.00,1.40) rectangle (0.11,2.37);\filldraw[fill=myyellow, draw=black] (3.45,1.24) rectangle (3.50,2.05);\filldraw[fill=myyellow, draw=black] (0.54,0.22) rectangle (2.75,0.43);\filldraw[fill=myyellow, draw=black] (0.27,1.02) rectangle (0.43,2.91);\filldraw[fill=myyellow, draw=black] (3.39,1.02) rectangle (3.45,2.32);\filldraw[fill=myyellow, draw=black] (0.11,1.24) rectangle (0.27,2.58);\filldraw[fill=myyellow, draw=black] (0.70,0.05) rectangle (3.12,0.22);\filldraw[fill=myyellow, draw=black] (0.43,0.81) rectangle (0.48,3.07);\filldraw[fill=myyellow, draw=black] (3.12,0.16) rectangle (3.23,2.53);\filldraw[fill=myyellow, draw=black] (2.75,0.22) rectangle (2.85,2.80);\filldraw[fill=myyellow, draw=black] (0.65,3.18) rectangle (2.48,3.23);\filldraw[fill=myyellow, draw=black] (0.54,3.12) rectangle (2.53,3.18);\filldraw[fill=myyellow, draw=black] (3.34,0.92) rectangle (3.39,2.37);\filldraw[fill=myyellow, draw=black] (3.28,0.43) rectangle (3.34,2.42);\filldraw[fill=myyellow, draw=black] (0.48,0.43) rectangle (2.64,3.12);\filldraw[fill=myyellow, draw=black] (2.64,0.43) rectangle (2.75,2.91);\filldraw[fill=myyellow, draw=black] (2.85,0.22) rectangle (3.12,2.64);\end{tikzpicture} & \\
\bottomrule
\end{tabular}\label{tab:optimum_blanket}
\end{center}
\end{sidewaystable}

FAST performs slightly better than SF for small $K$ values. As mentioned earlier, SF performs poorly on nonconvex shapes as illustrated in Figure~\ref{fig:example-1} for a particular example. On some simple instances FAST may perform worse than other heuristics as a consequence of its greedy nature, as illustrated in Figure~\ref{fig:example-2}. FAST has another drawback: the maximum number of rectangles that can be placed may be limited depending on the image. After placing fewer than $K$ rectangles, all the pixels of the target region might be covered with rectangles preventing FAST to add a new rectangle, because FAST does not modify rectangles placed in previous steps. For example, for a simple small shape like avatar1, it cannot place rectangles to improve the objective value and has the same value for increasing $K$. Although SF performs poorly for small $K$, the quality of the approximation increases as $K$ increases.  It even outperforms CSA for simple shapes when $K$ is large.

\begin{sidewaystable}[!htbp]
\renewcommand{\arraystretch}{0.8}
\caption{Comparison of the objective values}
\footnotesize
\begin{center}
{\begin{tabular}{r r rrrr r rrrr r rrrr}
\toprule
			& \phantom{a} & \multicolumn{4}{c}{$K=3$} & \phantom{a} & \multicolumn{4}{c}{$K=5$}  & \phantom{a}  & \multicolumn{4}{c}{$K=10$} \\
			\cmidrule(r){3-6} 	\cmidrule(r){8-11} 	\cmidrule(r){13-16}
	Image	&&  BP 	& SF (\%) & CSA (\%)	& FAST (\%) &	&  BP 	& SF (\%) & CSA (\%)	& FAST (\%) &	&  BP 	& SF (\%) & CSA (\%)	& FAST (\%)  \\
	\midrule
79510 	&& 4678	& 49.9	& 29.2	& 41.2	&&3940	& 22.8	& 38.0	& 47.1	&&2912	& 37.8	& 49.7	& 65.0	 \\
79611 	&& 4952	& 21.4	& 5.2	& 27.4	&&3834	& 23.8	& 17.6	& 57.0	&&2664	& 49.2	& 63.1	& 105.3	 \\
79712 	&& 3981	& 26.9	& 34.3	& 30.4	&&3420	& 23.4	& 37.0	& 33.5	&&2559	& 39.3	& 56.2	& 47.6	 \\
79813 	&& 4806	& 30.5	& 11.6	& 20.7	&&4045	& 21.5	& 22.6	& 31.9	&&3339	& 31.9	& 33.1	& 42.9	 \\
79914 	&& 4879	& 74.1	& 29.0	& 51.0	&&4259	& 33.6	& 28.2	& 57.0	&&3513	& 8.5	& 48.2	& 64.6	 \\
79915 	&& 4375	& 19.8	& 13.9	& 51.5	&&3406	& 28.3	& 36.4	& 79.7	&&2383	& 37.1	& 59.9	& 138.9	 \\
79916 	&& 5639	& 31.0	& 14.1	& 49.6	&&4487	& 18.8	& 43.0	& 71.5	&&3334	& 14.1	& 43.7	& 102.0	 \\
79917 	&& 4676	& 54.0	& 31.7	& 34.4	&&3689	& 33.6	& 24.2	& 57.1	&&2889	& 35.1	& 30.3	& 79.3	 \\
79918 	&& 5160	& 38.6	& 17.8	& 55.0	&&4336	& 19.4	& 26.3	& 66.4	&&3509	& 25.0	& 39.3	& 89.2	 \\
79919 	&& 4288	& 35.4	& 14.2	& 18.7	&&3369	& 24.8	& 39.1	& 34.6	&&2486	& 39.0	& 70.9	& 47.5	 \\
\midrule
\multicolumn{3}{c}{Avg. $\pm$ Std. Dev.} & 38.2 $\pm$ 16.8	 & 20.1 $\pm$ 10.0	 & 38.0 $\pm$ 13.5		&&&	25.0 $\pm$ 5.3	 & 31.2 $\pm$ 8.5	 & 53.6 $\pm$ 16.6		&&&	31.7 $\pm$ 12.4	 & 49.4 $\pm$ 13.2	 & 78.2 $\pm$ 30.9 \\
\bottomrule
\end{tabular}}
\label{tab:results-leather-1}
\end{center}
\end{sidewaystable}

\begin{table}[!htbp]
\caption {Comparison of the objective values}
\footnotesize     
\begin{center}
\begin{tabular}{r r rrrr r rrrr}
\toprule
			& \phantom{a} & \multicolumn{4}{c}{$K=15$} & \phantom{a} & \multicolumn{4}{c}{$K=20$}  \\
			\cmidrule(r){3-6} 	\cmidrule(r){8-11} 	
	Image	&&  BP 	& SF (\%) & CSA (\%)	& FAST (\%) &	&  BP 	& SF (\%) & CSA (\%)	& FAST (\%)   \\
	\midrule
79510 	&& 2090	& 76.0	& 94.4	& 115.4	&&2492	& 23.3	& 80.0	& 76.8	 \\
79611 	&& 2505	& 22.5	& 56.2	& 117.4	&&2166	& 28.6	& 93.9	& 151.5	 \\
79712 	&& 2772	& 17.9	& 38.0	& 34.9	&&2186	& 17.1	& 91.2	& 71.1	 \\
79813 	&& 2668	& 40.3	& 71.0	& 78.9	&&2587	& 25.0	& 57.0	& 84.5	 \\
79914 	&& 2666	& 23.7	& 76.6	& 105.4	&&2540	& 21.2	& 82.0	& 111.5	 \\
79915 	&& 2736	& 8.6	& 53.2	& 108.0	&&1759	& 49.9	& 143.0	& 223.6	 \\
79916 	&& 2255	& 49.4	& 112.6	& 187.5	&&2671	& 14.9	& 73.0	& 142.7	 \\
79917 	&& 2460	& 22.0	& 49.1	& 100.4	&&2329	& 20.0	& 72.4	& 111.7	 \\
79918 	&& 3274	& 22.1	& 36.2	& 102.7	&&3099	& 15.0	& 50.2	& 114.2	 \\
79919 	&& 1905	& 22.2	& 108.2	& 82.6	&&2044	& 3.4	& 93.0	& 70.2	 \\
\midrule
\multicolumn{3}{c}{Mean $\pm$ Std. Dev.} & 30.5 $\pm$ 19.7	 & 69.6 $\pm$ 27.9	 & 103.3 $\pm$ 38.2		&&&	21.8 $\pm$ 12.1	 & 83.6 $\pm$ 25.5	 & 115.8 $\pm$ 47.4\\
\bottomrule
\end{tabular}
\end{center}
\label{tab:results-leather-2}
\end{table}

BP always outperforms other methods in exchange for increased running time, even on instances where other methods are trapped in a local minimum, as illustrated in Figure~\ref{fig:example-3} for a particular example. BP's inefficiency is not a problem for real world computer vision applications. Although they are mostly based on online scenarios, a rectangle blanket is first computed offline, which is then repeatedly used online. For example ~\cite{MohrZachmann2010a, MohrZachmann2010b}  first precompute rectangle blankets for different hand shapes and store them for real time use to speed up hand shape matching.  ~\cite{DemirozSalahAkarun2014} determine rectangle blankets for crude human silhouettes at different locations offline before using them in a generative model to compute occupancy probability at a location. In a slightly different work, \cite{Chan2014} find the minimum number of rectangles that best fit an integrated circuit layout off-line prior to the minimization of the mass production time. 

\subsubsection{The number of rectangles in a blanket}

\begin{figure}[!ht]
  \centering
    \begin{subfigure}[b]{0.49\textwidth}
       \includegraphics[width=\textwidth] {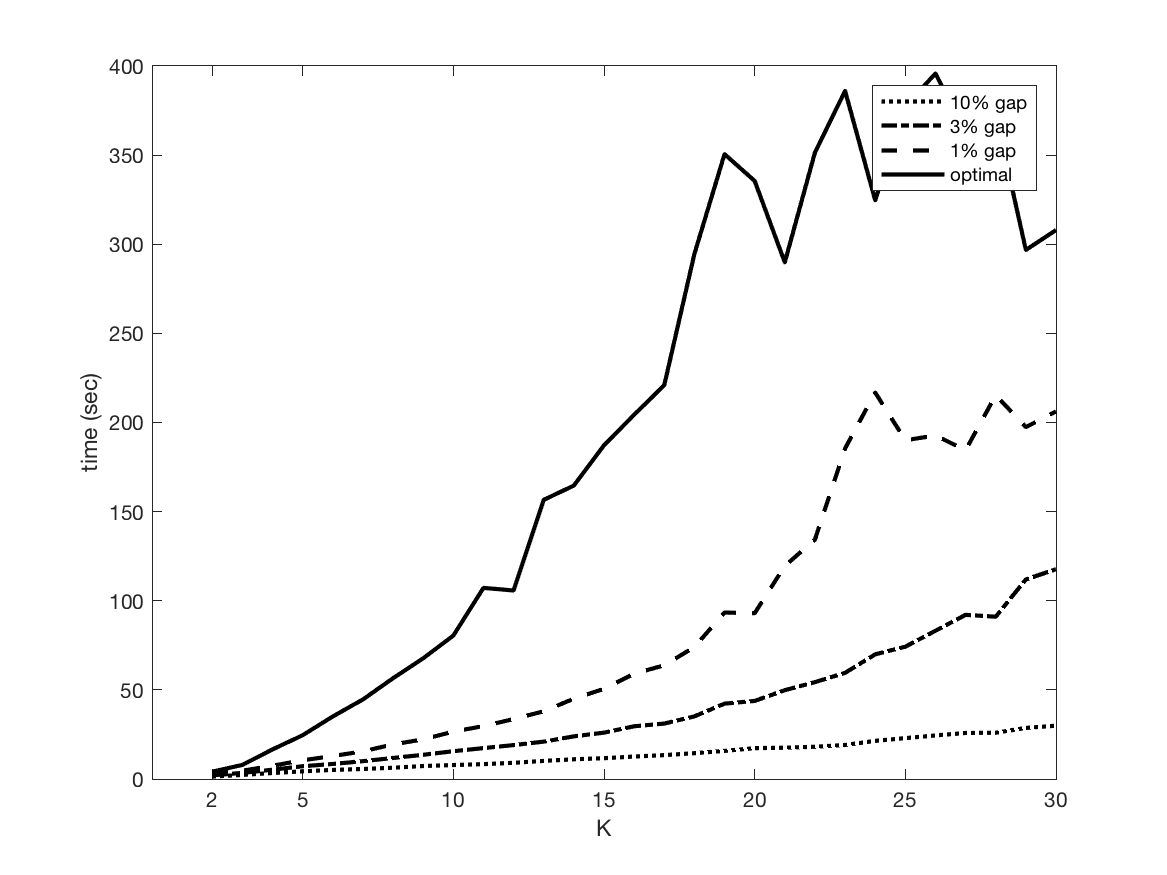}
      \caption{ CPU times to reach optimum and $1$†\%,  $3$ \%, $10$ \% gaps vs. $K$}\label{fig_V_1a}
    \end{subfigure}
    \begin{subfigure}[b]{0.49\textwidth}
       \includegraphics[width=\textwidth] {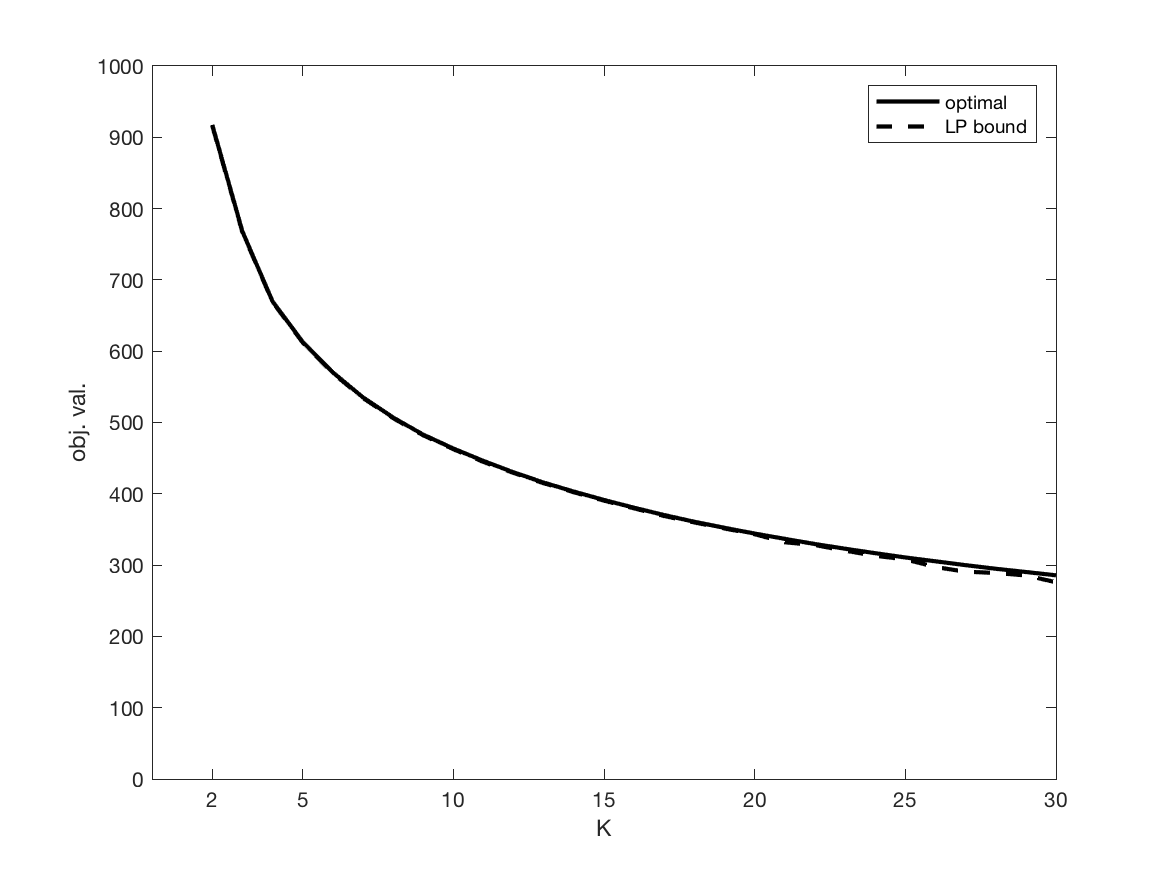}
       \caption{Optimum values and LP lower bounds vs. $K$ \\   }\label{fig_V_1b}
    \end{subfigure} 
 \caption{\textcolor{red}{Average values of 37 different test instances}}\label{fig_V_1}
\end{figure}

\textcolor{red}{To give theoretical results suggesting a particular value for $K$ does not seem trivial. However, it is not very difficult to see that RBP's optimum value is a non-increasing function of $K$ and reaches its lower bound $0$ for $K$ equals to the size of the target image, namely the number of $1$s in matrix $I$. This follows from the fact that every pixel can be treated as a $1\times 1$ rectangle and that many non-overlapping rectangles form a blanket fitting perfectly the target image. Setting $K$ equals to the smallest of $W$ and $H$ under the assumption that the target image is orthogonally convex and packed tightly within a $W\times H$ rectangle, also gives the smallest optimum value $0$, since a blanket consisting of the rectangles obtained by slicing the target image horizontally or vertically into $W$ or $H$ one-pixel wide strips, fits the target image perfectly. It is also possible to say that the optimum value of the LP relaxation is a convex function of $K$, since $K$ belongs actually to the right-hand side of the formulation and it has been known that the optimal value of a minimizing LP is a convex function of the resource vector \citep{CharnesCooper1962}. Hence, we expect similar behavior for the integer optimum values of RBP as well. Also, it is not surprising to see that the running time is a non-decreasing function of $K$, since higher $K$ means better fit, or equivalently higher blanket quality, which we have to pay for.} 

\textcolor{red}{We have conducted computational tests to be able to make concrete suggestions on the choice of $K$ values. The motivation behind these experiments is based on two main questions: 1. Is there a value of $K$ beyond which the decrease in the value of the optimum objective function becomes significantly smaller? 2. Is there a value of $K$ beyond which the running time for the exact solution becomes significantly larger? We think a yes answer to both of them makes a $K$ value promising, which points that it is not worth considering larger blanket sizes.} 

\textcolor{red}{We selected $37$ test instances (i.e. $7$ bat, $2$ device5, $18$ dog and $10$ toy instances) for which our branch-and-price algorithm computes an optimal solution within 1-hour CPU time limit, and set $2\leq K \leq 30$. This makes possible not only the determination of the times where the objective function reaches proven optimal values, but also certain optimality gaps effectively. For each instance we group the results in two: running times to reach optimum and $10$ \%, $3$ \%, $1$ \% gap values, and integer and LP relaxation optimum values. This gives $2\times37 = 74$ result groups at sum. Then we combined them by taking the averages of the $37$ values collected for each $K$ per group. The resulting plots are given in Figure \ref{fig_V_1}.}

\textcolor{red}{First of all one can easily observe the convex behavior of the LP lower bounds and optimum values as a function of $K$. Besides they are decreasing and approaching asymptotically to $0$. Observe also the quality of the LP lower bounds: they are very close to the optimum values. It seems that setting $5 \leq K \leq 10$ is not a bad choice, since the increase in the running times, and decreases in the optimal value and LP bound become sharper beyond and prior to these values.}

\begin{figure}[!ht]
  \centering
    \begin{subfigure}[b]{0.49\textwidth}
       \includegraphics[width=\textwidth] {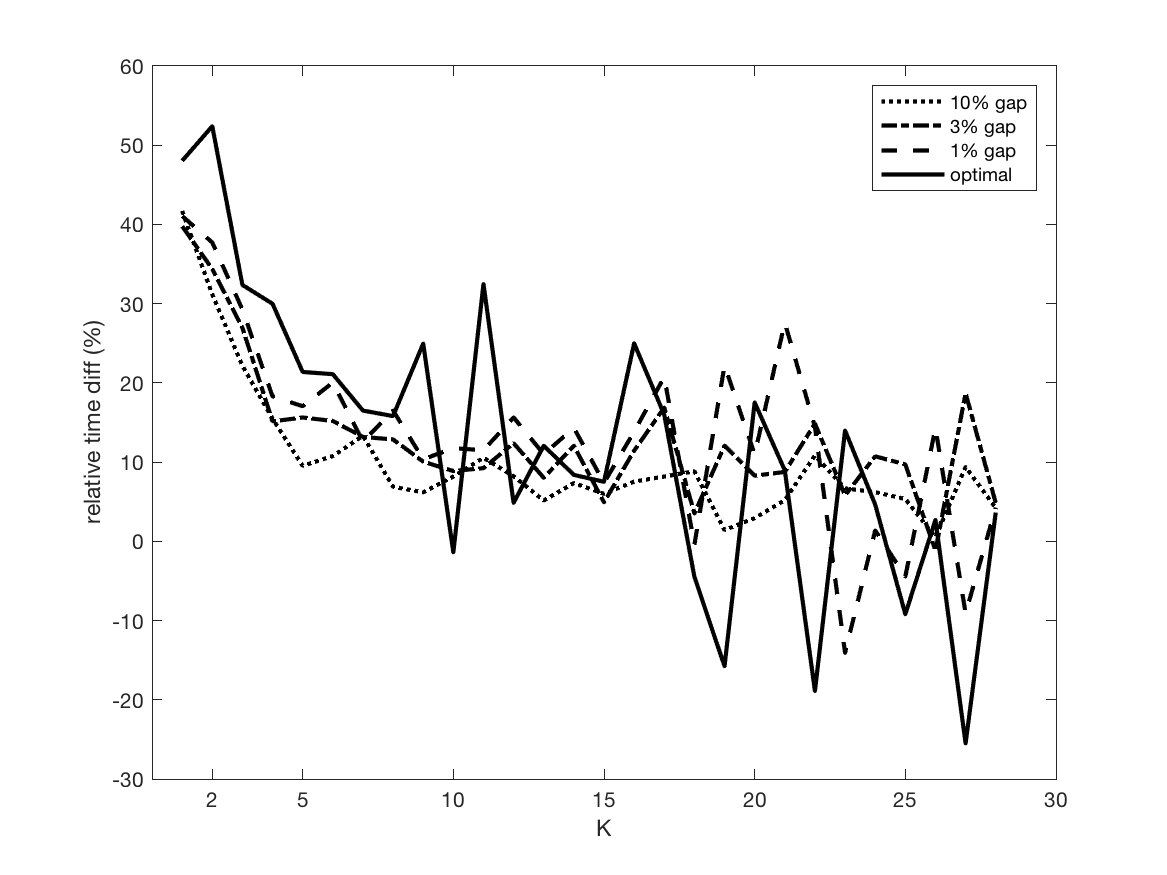}
      \caption{CPU times to reach optimum and $1$ \%, $3$ \%, $10$ \% gaps vs. $K$}\label{fig_V_2a}
    \end{subfigure}
    \begin{subfigure}[b]{0.49\textwidth}
       \includegraphics[width=\textwidth] {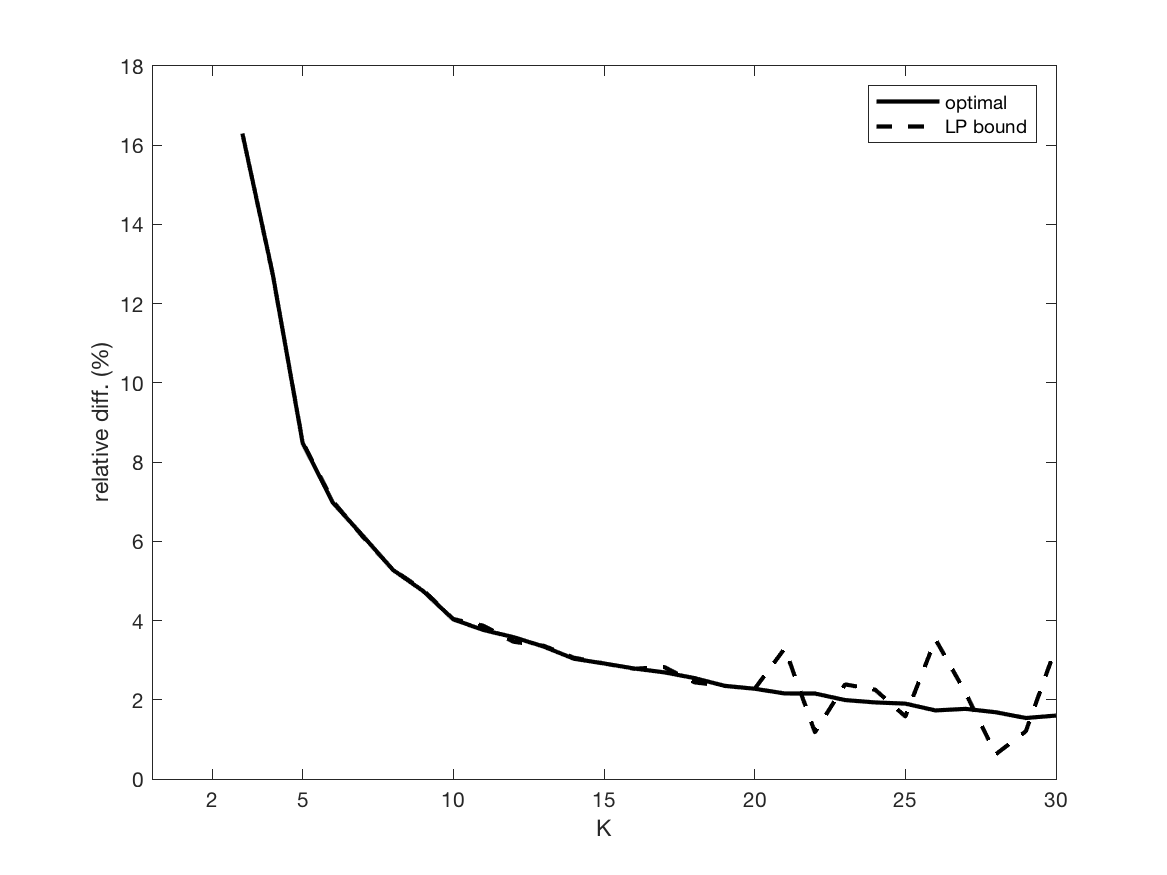}
       \caption{Optimum values and LP lower bounds vs. $K$}\label{fig_V_2b}
    \end{subfigure} 
 \caption{\textcolor{red}{ Average relative changes of 37 different test instances}}\label{fig_V_2}
\end{figure}

\textcolor{red}{We also consider percent relative increases for times, and percent relative decreases for integer optimum values and LP bounds. They are calculated according to formulae
$$100\times\frac{t_K - t_{K-1}}{t_{K-1}} \text{   and   }100\times\frac{z^*_{K-1} - z^*_K}{z^*_{K-1}},$$
respectively, for times and optimum values, for each $K$ and for each data set. The averages taken over the test instances for each $K$  are given in Figure \ref{fig_V_2}: It can be observed that the average relative deviations of the running times seem to reach and oscillate around a steady state of $10 - 15$ \% after $K = 8$; they are decreasing for $ K \leq 7$. As for the objective values, the decrease in the averages is sharper for $K \leq 10$. In short, roughly, it is possible to say for $K > 10$ the increase in the blanket quality becomes lower with an increasing cost.
}

\section{Conclusions}\label{sec:conclusions}

\textcolor{red}{ The problem of representing a target in a binary image as a collection of non-overlapping rectangles is often encountered in computer vision.} In this study, we have formulated the rectangle blanket problem as finding a set of non-overlapping rectangles that minimizes the non-overlapping area between the target image and the rectangles in the blanket. We have developed a branch-and-price algorithm to determine an optimal rectangle blanket. In the column generation phase, to solve the pricing subproblem, we have proposed a geometric branch-and-bound scheme where we start with all possible rectangles and split the rectangle set into two disjoint subsets to branch.

We have also introduced three heuristics to solve the rectangle blanket problem approximately. The first 
two of them, SF and FAST are developed adopting the ideas available in \cite{DemirozSalahAkarun2014} and \cite{MohrZachmann2010b}, respectively. Both are simple yet efficient algorithms. Unfortunately, they are not very accurate. The third one is a novel constrained simulated annealing heuristic, which tries to minimize a function obtained by adding penalty terms forcing the upper bound on the number of rectangles and punishing overlapping rectangles. 

We have prepared benchmarks and compared the performances of the four different methods. The experiments showed that SF and FAST are comparable to each other for small $K$.  CSA performs better than SF and FAST with a large margin for small $K$ in the expense of more computational power.  As $K$ increases SF closes the gap between CSA and starts performing well. BP always produces better or equally good results. Besides, for all of the instances the results BP produces are proven to be optimal. \textcolor{red}{The overall performance of the heuristics are not bright with respect to the accuracy of their results. This is a side effect of their naive design. However, this feature makes them extremely efficient and attractive for those looking for an order of magnitude faster ``quick and dirty"  solution.} 

\textcolor{red}{There are a couple of issues which we can mention as potential future research directions. First of all we plan to adapt different computer vision problems so that they can benefit from the proposed algorithms, and try to find new valid inequalities for increasing the efficiency of the BP algorithm. Also, it can be interesting to use BP for generating initial configuration of the small items on the master surface for the solution of irregular cutting / packing problems and study the impact of this approach on the solution efficiency. Finally, one can study whether the application of $\Phi$ and $\Gamma$ functions, which are used in geometry modeling for the covering problems, is also possible for modeling and evaluating rectangle blankets.}

\section*{Acknowledgements}
We gratefully acknowledge the support of Bogazici University project BAP-6754 and State Planning Organization project DPT-TAM  2007K120610.

\bibliographystyle{apalike}
\bibliography{RBP}

\begin{thebibliography}{}

\bibitem[Alvarez-Valdes et~al., 2013]{Alvarez-ValdesMartinezTamarit2013}
Alvarez-Valdes, R., Martinez, A., and Tamarit, J.~M. (2013).
\newblock A branch \& bound algorithm for cutting and packing irregularly
  shaped pieces.
\newblock {\em International Journal of Production Economics}, 145:463--477.

\bibitem[Babu and Babu, 2001]{BabuBabu2001}
Babu, A.~R. and Babu, N.~R. (2001).
\newblock A generic approach for nesting of 2-{D} parts in 2-{D} sheets using
  genetic and heuristic algorithms.
\newblock {\em Computer Aided Design}, 33:879--891.

\bibitem[Baldacci et~al., 2014]{BaldacciBoschettiGanovelliManiezzo2014}
Baldacci, R., Boschetti, M.~A., Ganovelli, M., and Maniezzo, V. (2014).
\newblock Algorithms for nesting with defects.
\newblock {\em Discrete Applied Mathematics}, 163:17--33.

\bibitem[Beasley, 1985a]{Beasley1985a}
Beasley, J.~E. (1985a).
\newblock Bounds for two-dimensional cutting.
\newblock {\em Journal of the Operational Research Society}, 36:71--74.

\bibitem[Beasley, 1985b]{Beasley1985b}
Beasley, J.~E. (1985b).
\newblock An exact two-dimensional non-guillotine cutting tree search
  procedure.
\newblock {\em Operations Research}, 33:49--64.

\bibitem[Beasley, 2004]{Beasley2004}
Beasley, J.~E. (2004).
\newblock A population heuristic for constrained two-dimensional non-guillotine
  cutting.
\newblock {\em European Journal of Operational Research}, 156:601--627.

\bibitem[Bengtsson and Chen, 2006]{Bengtsson2006}
Bengtsson, F. and Chen, J. (2006).
\newblock {Computing Maximum-Scoring Segments in Almost Linear Time}.
\newblock In {\em 12th Annual International Computing and Combinatorics
  Conference}, pages 255--264, Taipei. Springer Berlin Heidelberg.

\bibitem[Bengtsson and Chen, 2007]{Bengtsson2007}
Bengtsson, F. and Chen, J. (2007).
\newblock {Computing Maximum-Scoring Segments Optimally}.
\newblock Technical report.

\bibitem[Bennell et~al., 2010]{BennellScheithauerStoyanRomanova2010}
Bennell, J., Scheithauer, G., Stoyan, Y., and Romanova, T. (2010).
\newblock Tools of mathematical modeling of arbitrary object packing problems.
\newblock {\em Annals of Operations Resarch}, 179:343--368.

\bibitem[Bennell and Oliveira, 2008]{BennellOliveira2008}
Bennell, J.~A. and Oliveira, J.~F. (2008).
\newblock The geometry of nesting problems: a tutorial.
\newblock {\em European Journal of Operational Research}, 184:397--415.

\bibitem[Bentley, 1984]{Bentley1984}
Bentley, J. (1984).
\newblock {Programming pearls: algorithm design techniques}.
\newblock {\em Communications of the ACM}, 27(9):865--873.

\bibitem[Bern and Eppstein, 1997]{BernEppstein1997}
Bern, M. and Eppstein, D. (1997).
\newblock Approximation algorithms for geometric problems.
\newblock In Hochbaum, D., editor, {\em Approximation Algorithms}, pages
  296--345.

\bibitem[Bouganis and Shanahan, 2006]{BouganisShanahan2006}
Bouganis, A. and Shanahan, M. (2006).
\newblock On packing {2D} irregular shapes.
\newblock In Brewka, G., Coradeshi, S., Perini, A., and Traverso, P., editors,
  {\em Proceedings of ECAI 2006, The 17th European Conference on Artificial
  Intelligence}, pages 853--854, Riva del Garda, Italy. IOS Press.

\bibitem[Chaiken et~al., 1981]{ChaikenKleitmanSaksShearer1981}
Chaiken, S., Kleitman, D., Saks, M., and Shearer, J. (1981).
\newblock Covering regions by rectangles.
\newblock {\em SIAM J. Algebraic Discrete Methods}, 2:394--410.

\bibitem[Chan et~al., 2014]{Chan2014}
Chan, T.~B., Gupta, P., Han, K., Kagalwalla, A.~A., Khang, A.~B., and Sahouria,
  E. (2014).
\newblock {Benchmarking of Mask Fracturing Heuristics}.
\newblock In {\em International Conference on Computer-Aided Design}, San Jose.

\bibitem[Charnes and Cooper, 1962]{CharnesCooper1962}
Charnes, A. and Cooper, W. (1962).
\newblock System evaluation and repricing theorems.
\newblock {\em Management Science}, 9:209--228.

\bibitem[Cherri et~al., 2016]{CherriMundimAndrettaToledoOliveiraCarravilla2016}
Cherri, L.~H., Mundim, L.~R., Andretta, M., Toledo, F. M.~B., Oliveira, J.~F.,
  and Carravilla, M.~A. (2016).
\newblock Robust mixed-integer linear programming models for the irregular
  strip packing problem.
\newblock {\em European Journal of Operational Research}, 253:570--583.

\bibitem[Christofides and Hadjiconstantinou,
  1995]{ChristofidesHadjiconstantinou1995}
Christofides, N. and Hadjiconstantinou, E. (1995).
\newblock An exact algorithm for general, orthogonal, two-dimensional knapsack
  problems.
\newblock {\em European Journal of Operational Research}, 80:68--76.

\bibitem[Christofides and Whitlock, 1977]{ChristofidesWhitlock1977}
Christofides, N. and Whitlock, C. (1977).
\newblock An algorithm for two-dimensional cutting problems.
\newblock {\em Operations Research}, 25:30--44.

\bibitem[Church and ReVelle, 1974]{ChurchReVelle1974}
Church, R. and ReVelle, C. (1974).
\newblock The maximal covering location problem.
\newblock {\em Papers of the Regional Science Association}, 32:101--118.

\bibitem[Conforti et~al., 2014]{ConfortiCornuejolsZambelli2014}
Conforti, M., Cornu\'~ejols, G., and Zambelli, G. (2014).
\newblock {\em Integer Programming}.
\newblock Springer-Verlag, Heidelberg.

\bibitem[Crow, 1984]{Crow1984}
Crow, F.~C. (1984).
\newblock {Summed-area tables for texture mapping}.
\newblock {\em ACM SIGGRAPH Computer Graphics}, 18(3):207--212.

\bibitem[Csur\"{o}s, 2004]{Csuros2004}
Csur\"{o}s, M. (2004).
\newblock {Maximum-scoring segment sets}.
\newblock {\em IEEE/ACM Transactions on Computational Biology and
  Bioinformatics}, 1(1):139--150.

\bibitem[Culberson and Reckhow, 1994]{CulbersonReckhow1994}
Culberson, J.~C. and Reckhow, R.~A. (1994).
\newblock Covering polygons is hard.
\newblock {\em Journal of Algorithms}, 17:2--44.

\bibitem[Daniels and Inkulu, 2001]{DanielsInkulu2001}
Daniels, K. and Inkulu, R. (2001).
\newblock An incremental algorithm for translational polygon covering.
\newblock Technical Report No. 2001-001.

\bibitem[Daniels et~al., 1994]{DanielsMilenkovicLi1994}
Daniels, K., Milenkovic, V.~J., and Li, Z. (1994).
\newblock Multiple containment methods.
\newblock Technical Report TR-12-94.

\bibitem[Demir\"{o}z et~al., 2014]{DemirozSalahAkarun2014}
Demir\"{o}z, B.~E., Salah, A.~A., and Akarun, L. (2014).
\newblock {Coupling Fall Detection and Tracking in Omnidirectional Cameras}.
\newblock In {\em International Workshop on Human Behavior Understanding,
  European Conference on Computer Vision}, pages 73--85.

\bibitem[Downsland and Downsland, 1995]{DownslandDownsland1995}
Downsland, K.~A. and Downsland, W.~B. (1995).
\newblock Solution approaches to irregular nesting problems.
\newblock {\em European Journal of Operational Research}, 84:506--521.

\bibitem[Dychoff, 1990]{Dyckhoff1990}
Dychoff, H. (1990).
\newblock A typology of cutting and packing problems.
\newblock {\em European Journal of Operational Research}, 44:145--159.

\bibitem[Dyckhoff et~al., 1997]{DyckhoffScheithauerTerno1997}
Dyckhoff, H., Scheithauer, G., and Terno, J. (1997).
\newblock {\em Annotated Bibliographies in Combinatorial Optimization}, chapter
  Cutting and Packing, pages 393--413.
\newblock Wiley, Chichester.

\bibitem[Fischetti and Luzzi, 2009]{FischettiLuzzi2009}
Fischetti, M. and Luzzi, I. (2009).
\newblock Mixed-integer programming models for nesting.
\newblock {\em Journal on Heuristics}, 15:201--226.

\bibitem[Fleuret et~al., 2008]{FleuretBerclazLengagneFua2008}
Fleuret, F., Berclaz, J., Lengagne, R., and Fua, P. (2008).
\newblock Multicamera people tracking with a probabilistic occupancy map.
\newblock {\em IEEE Transactions on Pattern Analysis and Machine Intelligence},
  30:267--282.

\bibitem[Fowler et~al., 1981]{FowlerPatersonTanimoto1981}
Fowler, R., Paterson, M., and Tanimoto, L. (1981).
\newblock Optimal packing and covering in the plane are np-complete.
\newblock {\em Inf. Process. Lett.}, 12:133--137.

\bibitem[Ganovelli and Maniezzo, 2018]{GanovelliManiezzo2018}
Ganovelli, M. and Maniezzo, V. (2018).
\newblock private communications.

\bibitem[Gomes and Oliveira, 2006]{GomesOliveira2006}
Gomes, A.~M. and Oliveira, J.~F. (2006).
\newblock Solving irregular strip packing problems by hybridising simulated
  annealing and linear programming.
\newblock {\em European Journal of Operational Research}, 171:811--829.

\bibitem[Gurobi~Optimization, 2015]{gurobi}
Gurobi~Optimization, I. (2015).
\newblock Gurobi optimizer reference manual.

\bibitem[Gyori, 1984]{Gyori1984}
Gyori, E. (1984).
\newblock A minimax theorem on intervals.
\newblock {\em J. Combinatorial Theory, Series B}, 37:1--9.

\bibitem[Haessler and Sweeney, 1991]{HaesslerSweeney1991}
Haessler, R.~W. and Sweeney, P.~E. (1991).
\newblock Cutting stock problems and solution procedures.
\newblock {\em European Journal of Operational Research}, 54:141--150.

\bibitem[Heinrich-Litan and Lubbecke, 2006]{Heinrich-LitanLubbecke2006}
Heinrich-Litan, L. and Lubbecke, M.~E. (2006).
\newblock Rectangle covers revisited computationally.
\newblock {\em ACM J. Experimental Algorithmics}, 11:1--21.

\bibitem[Hochbaum and Maass, 1991]{HochbaumMaass1985}
Hochbaum, D.~S. and Maass, W. (1991).
\newblock Approximation schemes for covering and packing problems in image
  processing and vlsi.
\newblock {\em Journal of the Association for Computing Machinery},
  32:130--136.

\bibitem[Hoffman et~al., 1985]{HoffmanKolenSakarovitch1985}
Hoffman, A.~J., Kolen, A., and Sakarovitch, M. (1985).
\newblock Totally balanced and greedy matrices.
\newblock {\em SIAM J. Algebraic and Discrete Methods}, 6:721--730.

\bibitem[Kabeja, 2018]{Kabeja2018}
Kabeja (Last accessed in May 2018).

\bibitem[Kaehler and Bradski, 2015]{Kaehler2013}
Kaehler, A. and Bradski, G. (2015).
\newblock {\em Learning OpenCV: Computer Vision in C++ with the OpenCV
  Library}.
\newblock O'Reilly Media.

\bibitem[Karp, 1972]{Karp1972}
Karp, R.~M. (1972).
\newblock {\em Complexity of Computer Computations}, chapter Reducibility Among
  Combinatorial Problems, pages 85--103.
\newblock Plenum Press, New Yor.

\bibitem[Lampert et~al., 2009]{LampertBlaschkoHofmann2009}
Lampert, C.~H., Blaschko, M.~B., and Hofmann, T. (2009).
\newblock {Efficient subwindow search: a branch and bound framework for object
  localization.}
\newblock {\em PAMI}, 31(12):2129--42.

\bibitem[Latecki et~al., 2000]{latecki2000}
Latecki, L.~J., Lak{\"a}mper, R., and Eckhardt, U. (2000).
\newblock Shape descriptors for non-rigid shapes with a single closed contour.
\newblock In {\em Computer Vision and Pattern Recognition, 2000. Proceedings.
  IEEE Conference on}, volume~1, pages 424--429. IEEE.

\bibitem[Li, 1994]{Li1994}
Li, Z. (1994).
\newblock {\em Compaction algorithms for non-convex polygons and their
  applications}.
\newblock PhD thesis, Harvard University.

\bibitem[Li and Milenkovic, 1993]{LiMilenkovic1993}
Li, Z. and Milenkovic, V.~J. (1993).
\newblock The complexity of the compaction problem.
\newblock In Lubiw, L. and Urrutia, J., editors, {\em Proceedings of the Fifth
  Canadian Conference on Computational Geometry}, pages 7--11. Univ. of
  Waterloo, Ont. Canada.

\bibitem[Maire, 1994]{Maire1994}
Maire, F. (1994).
\newblock Polyominos and perfect graphs.
\newblock {\em Information Processing Letters}, 50:57--61.

\bibitem[Milenkovic et~al., 1992]{MilenkovicDanielsLi1992}
Milenkovic, V., Daniels, K., and Li, Z. (1992).
\newblock Placement and compaction of non-convex polygons for clothing
  manufacture.
\newblock In {\em Proceedings of the 4th Canadian Conference on Computational
  Geometry}, pages 236--243. St. John's, Newfoundland.

\bibitem[Mohr and Zachmann, 2010a]{MohrZachmann2010b}
Mohr, D. and Zachmann, G. (2010a).
\newblock {FAST: Fast Adaptive Silhouette Area based Template Matching}.
\newblock {\em Procedings of the British Machine Vision Conference}, pages
  39.1--39.12.

\bibitem[Mohr and Zachmann, 2010b]{MohrZachmann2010a}
Mohr, D. and Zachmann, G. (2010b).
\newblock {Silhouette area based similarity measure for template matching in
  constant time}.
\newblock In {\em Articulated Motion and Deformable Objects}, pages 43--54.

\bibitem[Motwani et~al., 1989]{MotwaniRaghunathanSaran1989}
Motwani, R., Raghunathan, A., and Saran, H. (1989).
\newblock Perfect graphs and orthogonally convex covers.
\newblock {\em SIAM J. Discrete Math.}, 2:371--392.

\bibitem[Motwani et~al., 1990]{MotwaniRaghunathanSaran1990}
Motwani, R., Raghunathan, A., and Saran, H. (1990).
\newblock Covering orthogonal polygons with star polygons: the perfect graph
  approach.
\newblock {\em J. Comput. System Sci.}, 40:19--48.

\bibitem[Murray, 2016]{Murray2016}
Murray, A.~T. (2016).
\newblock Maximal coverage location problem: impacts, significance, and
  evolution.
\newblock {\em International Regional Science Review}, 39:5--27.

\bibitem[Ohtsuki, 1982]{Ohtsuki1982}
Ohtsuki, T. (1982).
\newblock Minimum dissection of rectilinear regions.
\newblock In {\em Proc. 1982 IEEE Symp. on Circuits and Systems}, pages
  1210--1213, Rome.

\bibitem[Oliveira and Ferreira, 1993]{OliveiraFerreira1993}
Oliveira, J.~F. and Ferreira, J.~S. (1993).
\newblock Algorithms for nesting problems, applied simulated annealing.
\newblock {\em Lecture Notes in Economics and Mathematical Systems 396}, pages
  255--274.

\bibitem[O\text{'}Rourke and Tewari, 2001]{ORourkeTewari2001}
O\text{'}Rourke, J. and Tewari, G. (2001).
\newblock Partitioning orthogonal polygons into fat rectangles in polynomial
  time.
\newblock In {\em Proc. 13th Canadian Conference on Computational Geometry},
  pages 133--136.

\bibitem[Pessoa et~al., 2015]{PessoaSadykovUchoaVanderbeck2015}
Pessoa, A., Sadykov, R., Uchoa, E., and Vanderbeck, F. (2015).
\newblock {Automation and Combination of Linear Programming Based Stabilization
  Techniques in Column Generation}.
\newblock Technical report.

\bibitem[Ryan and Foster, 1981]{RyanFoster1981}
Ryan, D.~M. and Foster, B.~A. (1981).
\newblock An integer programming approach to scheduling.
\newblock In Wren, A., editor, {\em Computer Scheduling of Public Transport
  Urban Passenger Vehicle and Crew Scheduling}, pages 269--280. North-Holland,
  Amsterdam.

\bibitem[Scheithauer et~al., 2009]{ScheithauerStoyanRomanova2009}
Scheithauer, G., Stoyan, Y.~G., and Romanova, T. (2009).
\newblock Integer linear programming models for the problem of covering a
  polygonal region by rectangles.
\newblock Technical Report MATH-NM-07-2009.

\bibitem[Schrijver, 2003]{Schrijver2003}
Schrijver, A. (2003).
\newblock {\em Combinatorial Optimization: Polyhedra and Efficiency}, volume~B.
\newblock Springer-Verlag.

\bibitem[Segenreich and Braga, 1986]{SegenreichBraga1986}
Segenreich, S.~A. and Braga, L.~M. (1986).
\newblock Optimal nesting of general plane figures: a monte carlo heuristical
  approach.
\newblock {\em Computeres and Graphics}, 10:229--237.

\bibitem[Shih, 2017]{Shih2017}
Shih, F.~Y. (2017).
\newblock {\em Image Processing and Mathematical Morphology: Fundamentals and
  Applications}.
\newblock CRC Press.

\bibitem[Stoyan et~al., 2011]{StoyanRomanovaScheithauerKrivulya2011}
Stoyan, Y., Romanova, T., Scheithauer, G., and Krivulya, A. (2011).
\newblock Covering a polygonal region by rectangles.
\newblock {\em Comput. Optim. Appl.}, 48:675--695.

\bibitem[Stoyan, 2007]{Stoyan2007}
Stoyan, Y.~G. (2007).
\newblock Covering a polygonal region by a collection of various size
  rectangles.
\newblock {\em Mech. Eng. Probl.}, 10:67--82.

\bibitem[Stoyan and Ponomarenko, 1977]{StoyanPonomarenko1977}
Stoyan, Y.~G. and Ponomarenko, L.~D. (1977).
\newblock Minkowski's sum and the hodograph of the dense allocation vector
  function.
\newblock Technical Report Series A 10.

\bibitem[Stoyan et~al., 2004]{StoyanScheithauerGilRomanova2004}
Stoyan, Y.~G., Scheithauer, G., Gil, N., and Romanova, T. (2004).
\newblock $\phi$-functions for complex {2D}-objects.
\newblock {\em 4OR: Quarterly Journal of the Belgian, French and Italian
  Operations Research Societies}, 2:69--84.

\bibitem[Stoyan et~al., 2001]{StoyanTernoScheithauerGilRomanova2001}
Stoyan, Y.~G., Terno, J.~Scheithauer, G., Gil, N., and Romanova, T. (2001).
\newblock Phi-functions for primary {2D}-objects.
\newblock {\em Studia Informatica Universalis}, 2:1--32.

\bibitem[Vanderbeck, 2005]{Vanderbeck2005}
Vanderbeck, F. (2005).
\newblock Implementing mixed integer column generation.
\newblock In Desaulniers, G., Desrosier, J., and Solomon, M.~M., editors, {\em
  Column Generation}, pages 331--358.

\bibitem[Vanderbeck and Wolsey, 2010]{VanderbeckWolsey2010}
Vanderbeck, F. and Wolsey, L.~A. (2010).
\newblock {\em 50 Years of Integer Programming 1958-2008: the early years and
  state-of-the-art surveys}, chapter Reformulation and decomposition of Integer
  Programs, pages 431--502.
\newblock Springer, Heidelberg.

\bibitem[Wah et~al., 2006]{Wah2006}
Wah, B.~W., Chen, Y., and Wang, T. (2006).
\newblock {Simulated annealing with asymptotic convergence for nonlinear
  constrained optimization}.
\newblock {\em Journal of Global Optimization}, 39(1):1--37.

\bibitem[Wang, 1983]{Wang1983}
Wang, P.~Y. (1983).
\newblock Two algorithms for constrained two-dimensional cutting stock
  problems.
\newblock {\em Operations Research}, 31:573--586.

\bibitem[Wascher et~al., 2007]{WascherHaussnerSchumann2007}
Wascher, G., Haussner, H., and Schumann, H. (2007).
\newblock An improved typology of cutting and packing problems.
\newblock {\em European Journal of Operational Research}, 183:1109--1130.

\bibitem[Wentges, 1997]{Wentges1997}
Wentges, P. (1997).
\newblock Weighted dantzig-wolfe decomposition of linear mixed-integer
  programming.
\newblock {\em Int. Trans. Opl. Res.}, 17:151--162.

\bibitem[Whelan and Batchelor, 1991]{WhelanBachelor1991}
Whelan, P.~F. and Batchelor, B.~G. (1991).
\newblock Automated packing of arbitrary shapes.
\newblock In {\em Proceedings of SPIE Machine Vision Architectures,
  Integration, and Applications}, volume 1615, pages 77--86.

\bibitem[Whelan and Batchelor, 1996]{WhelanBachelor1996}
Whelan, P.~F. and Batchelor, B.~G. (1996).
\newblock Automated packing systems -- a systems engineering approach.
\newblock {\em IEEE Transactions on Systems, Man, and Cybernetics--Part A},
  26:533--544.

\bibitem[Wolsey, 1998]{Wolsey1998}
Wolsey, L.~A. (1998).
\newblock {\em Integer Programming}.
\newblock Wiley-Interscience.

\end{thebibliography}
\end{document}